\def\@biblabel#1{\hspace*{-\labelsep}}\makeatother
\newcommand{\be}{\begin{equation}} \newcommand{\ee}{\end{equation}}
\newcommand{\bea}{\begin{eqnarray}} \newcommand{\eea}{\end{eqnarray}}
\newcommand{\beas}{\begin{eqnarray*}} \newcommand{\eeas}{\end{eqnarray*}}
\newcommand{\bit}{\begin{itemize}}
\newcommand{\eit}{\end{itemize}}
\newcommand{\ben}{\begin{enumerate}}
\newcommand{\een}{\end{enumerate}}
\newcommand{\ba}{\begin{array}} \newcommand{\ea}{\end{array}}
\newcommand{\lbar}[1]{\mbox{\b{$#1$}}}
\newcommand{\mb}{\mathbb}
\newcommand{\ep}{\emptyset}
\newcommand{\mc}{\mathcal}
\newcommand{\noi}{\noindent}
\newcommand{\ve}{\varepsilon}
\newcommand{\Bthe}{\begin{theorem}}
\newcommand{\Ethe}{\end{theorem}}
\newcommand{\vphi}{\varphi}
\newcommand{\ind}{\mathbbm{1}}
\theoremstyle{plain}} \theoremheaderfont{\scshape}
\newtheorem{theorem}{Theorem}
\newtheorem{proposition}{Proposition}
\newtheorem{lemma}{Lemma}
\newtheorem{assumption}{Assumption}
\newtheorem{definition}{Definition}
\newtheorem{observation}{Observation}
\newenvironment{proof}{\noi {\it Proof.} }
\begin{document}

\setlength{\parindent}{12pt}
\setlength{\parskip}{.3cm}

\title{\vspace{-2cm}\LARGE{Can Society Function Without Ethical Agents?\\ An Informational Perspective}\author{Bruno Strulovici\\ Northwestern University \\ }\thanks{This project has benefited from numerous conversations and comments, particularly from Alex Frankel, Simone Galperti, Meg Meyer, Francisco Poggi, Doron Ravid, Larry Samuelson, Ron Siegel, Ludvig Sinander, Boli Xu, and Richard Zeckhauser. Early versions of this project were presented at Microsoft Research New England (Summer 2015), Boston College, Washington University, Universitat Pompeu Fabra, Universitat Aut\`{o}noma de Barcelona, Johns Hopkins University, the University of Chicago, Cambridge University INET Economic Theory Conference (2016), Penn State University, Cornell University, Arizona State University, the North American Meeting of the Econometric Society (2017), Stanford University, Duke University, Northwestern University, Bocconi's workshop on Advances in Information Economics, Kyoto University, Tokyo University (summer school), Stanford SITE 2017, the Transatlantic Theory Workshop 2017, Universidad Carlos III Madrid,  Columbia's Economic Theory Conference (2017), Yale University, the Toulouse School of Economics, LACEA/LAMES 2018, the Summer school of the Econometric Society (Sapporo, 2019), and the RI/PE conference at UC San Diego. Part of this research was developed while I was visiting Microsoft Research New England (2015) and Harvard University (2016) whose hospitalities are gratefully acknowledged.}}
\date{\small{\today}}
\maketitle
%\vspace{-1cm}
\begin{abstract} Many facts are learned through the intermediation of individuals with special access to information, such as law enforcement officers, officials with a security clearance, or experts with specific knowledge. This paper considers whether societies can learn about such facts when information is cheap to manipulate, produced sequentially, and these individuals are devoid of ethical motive. The answer depends on an ``information attrition'' condition pertaining to the amount of evidence available which distinguishes, for example, between reproducible scientific evidence and the evidence generated in a crime.  Applications to institution enforcement, social cohesion, scientific progress, and historical revisionism are discussed.\end{abstract}
%$\quad$ \textbf{Keywords:} Social learning, Informational cascades, Institution Design, Fake News\\
%$\quad$ \textbf{JEL Codes:} [\textit{add JEL codes}]

\newpage

\section{Introduction}

\subsection{Mediated learning}

Many facts must be learned through agents with a specific expertise or access to information. For example, the net benefits of a vaccine, the validity of a mathematical theorem, the correct resolution of a crime, the historiography of an event, and the anthropogenic nature of climate change cannot be directly verified or disproved by the average citizen, and there is no ``public epiphany'' at which the truth is exogenously revealed to all. In these and many others instances, citizens have no choice but to rely on intermediaries to learn anything about the fact of interest, a situation that we will call {\em mediated learning}.

To succeed, mediated learning relies on the investigative efforts and truthfulness of agents who can make false, misleading, or uninformed statements and are subject to biases, pressure, ambition, and other considerations that may distort their behavior. In philosophy and various social sciences, it is not uncommon to consider agents whose rule of behavior is to act truthfully.\footnote{One branch of epistemic philosophy concerns the vulnerability of testimony, i.e., the fact that a speaker can lie. This vulnerability is resolved through behavioral assumptions driven by norms of truthfulness and principles of ethical behavior, such as Grice's ``cooperative principle'' (H.P Grice (1975)). Computer scientists consider the related problem of communication through potentially adversarial intermediaries, known in the literature as the {\em Byzantine generals} problem (Lamport, Shostak, and Pease (1982)). This literature assumes the existence of ``loyal'' generals, who obediently follow the communication protocol set out by the planner, like Grice's cooperative speaker. In sociology, it is common to assume the existence of pro-social norms that coexist with and sometimes subsume individual incentives, and facilitate truthful behavior. See Granovetter (2017) for a recent comparison of the paradigms in economics and sociology.} By contrast, economists usually conceptualize truthful behavior as the result of properly calibrated incentives, given to agents who are devoid of ethical motives.

This paper examines the feasibility of mediated learning through agents who are {\em non-ethical} in the sense that their preferences do not directly depend on the fact to be learned. For example, a prosecutor who cares only about convicting a defendant, regardless of his guilt, is non-ethical according to this definition, while a prosecutor who cares about a defendant's actual guilt may be ethical.\footnote{Caring about the fact to be learned is only a necessary condition for being ``ethical.'' For instance, a prosecutor who wants to convict innocent defendants and acquit guilty ones can obviously not be called ethical. This observation has no bearing on the formal analysis of this paper, which focuses on truth-independent preferences. The case of mediated learning with ethical agents is discussed in Sections~\ref{ssec-ethical-necessity} and~\ref{sec-discussion}.}

%\footnote{The latter behavior is consistent with Landes' (1971) analysis of the courts, which models prosecutors as maximizing the expected number of convictions.}\\

\subsection{Information Attrition}

The cornerstone of the analysis is the concept of {\em information attrition}, which captures the idea that information about a given fact may be in limited supply. Consider the problem of determining whether a given athlete used a banned substance during a given competition. Controls typically consist of two blood (or urine) samples, ``A'' and ``B''. If the first blood sample, ``A'', tests positive (suggestive of doping), the agency storing the second sample, ``B'', is then asked to test the second sample. At this point, the second sample is the only direct source of evidence left about whether the athlete ingested a banned substance at the event: the amount of evidence concerning this fact has shrunk, and each blood test reduces the amount of information available in the case.

Information attrition may be caused by various factors. The first one, outlined above, is that the process of learning about a fact sometimes requires transforming the evidence in a way that prevents its use by subsequent investigators: the signals are {\em disposable}.\footnote{Other examples of disposable signals include human subjects in experimental psychology: once a person has been exposed to a particular experiment, this person is irremediably affected and no longer a good subject for the same experiment. In particle physics and quantum mechanics, signal acquisition is subject to the {\em observer effect}, whereby the observation of a phenomenon necessarily affects this phenomenon.} Second, information attrition may be due to the exogenous degradation of evidence: as time goes by, evidence may deteriorate and become uninformative, for example due to poor storage conditions. Third, information attrition may be caused purposefully by individuals who tamper with or destroy some evidence. Finally, information attrition may stem from social considerations. For example, if some researchers claim to have found that a particular drug or treatment is harmful to human subjects or animals, it may become politically infeasible to run more experiments.

\subsection{Core Argument}

To appreciate how information attrition affects mediated learning, let us reconsider the blood-testing example and, specifically, the agency in charge of testing the second blood sample. Since this agency possesses the only sample left, it can lie at no risk of being contradicted. If, for instance, the agency stands to gain publicity from incriminating the athlete, it can do so with impunity. If instead the agency benefits from exculpating the athlete (perhaps due to the pressure exerted by some sport governing instance), it can also do this at no cost. And if the agency is indifferent between incriminating the athlete and absolving him, it has no incentive to incur the cost of running the test to begin with. In all these cases, the second agency's report is untethered to the truth.

Consider now the agency in charge of the first blood sample. Whatever report this agency produces, the second agency's report will not be based on the truth, as shown in the previous paragraph. The first agency's expected payoff is therefore independent of the content of its blood sample. Like the second agency, it has no incentive to report the true content of its blood sample, and incentives unravel. In this example, mediated learning is feasible only if the agencies in charge of the tests care directly about the truth, i.e., only if they have ethical preferences.\footnote{In practice, the same laboratory is often in charge of storing and testing {\em both} samples, which may lead to even more direct incentives problems, as in the case of nationally organized doping schemes.} When agencies are unethical, mediated learning is infeasible regardless of the agencies' material incentives.\footnote{If there are three or more agencies, the results are the same when the agencies proceed sequentially, regardless of the rule (e.g., majority rule) used to convict the athlete or incentives given to the agencies. Section~\ref{ssec-alternative} discusses the case in which agencies move simultaneously as well as other monitoring structures.}

\subsection{Reproducible Evidence and Incentive Design}

Some investigations are immune to information attrition. Consider the question of determining whether a mathematical proof is correct. The proof remains available for anyone to read no matter how many times it has been checked in the past. The unraveling argument of the previous section no longer applies and, as this paper shows (Theorem~\ref{the-pos-statement}), mediated learning by unethical agents is feasible provided that agents' material incentives are designed appropriately. The incentives that deliver successful mediated learning have an intuitive structure: they reward an agent whose report is vindicated by subsequent findings and punish him otherwise, and the magnitude of the rewards and punishments of each agent depends on how surprising the agent's report is relative to earlier beliefs about the state.

Information attrition does not arise, either, in scientific inquiries that rely on reproducible experiments. For instance, no matter how many times physicists measure the weight of an electron, the experiment can be replicated more times, ad infinitum. Knowledge based on reproducible evidence can also, given the proper incentives, be protected from learning failures caused by unethical agents.\footnote{The theory can be used to predict a correlation between the establishment of truthfulness oaths and contexts where information suffers from attrition. For instance, it  may explain why formal oaths are present in court, but not in physical sciences. See Section~\ref{ssec-oaths}.}

\subsection{Analytical Challenge: Uncertain and Endogenous Attrition}

To assess the scope of the argument illustrated by the blood-testing example, this paper considers an analytical framework based on sequential learning through public reports that generalizes the argument in three important ways. First, the supply of evidence available in a case may be unknown a priori, even to investigators. Second, investigators may not know how past investigators have affected the evidence available in the case. Third, the number of potential investigators may be large and a priori unknown. For example, suppose that a criminal has left behind ten pieces of forensic evidence, which may be discovered through some costly investigative work. Investigators may not know a priori the exact number of pieces left behind. Moreover, if an investigator inherits the case from a previous investigator, he may not know how diligent the previous investigator has been and, hence, what fraction of the evidence has already been discovered, fabricated, or destroyed. An investigator thus faces exogenous and endogenous uncertainty concerning the amount of discoverable evidence.

This paper provides conditions on the probability distribution of the supply evidence, taking into account the impact of each agent on the evidence, under which mediated learning is feasible, and under which it is not (Theorems~\ref{the-main-result} and~\ref{the-pos-statement}). When the necessary conditions are violated, mediated learning by unethical agents is impossible in a strong sense: even when i) there is an unbounded sequence of potential investigators, and ii) investigators' incentives may be arbitrarily designed and administered without any commitment or agency problem, there does not exist any equilibrium in which at least one investigator provides an informative report with positive probability.

It is easy to design incentives for which mediated learning always fails or, given some incentive structure, to construct an equilibrium for which mediated learning fails. The main analytical challenge is to show that, in the presence of information attrition, {\em mediated learning fails for all incentive structures and all equilibria}, in the strong sense that {\em all agents fail to reveal any truth about the fact} with probability one.

Formally, this paper studies a sequential game of incomplete information in which the state variable is a probability distribution over the set of evidence that remains to discover. This probability distribution, defined on an infinite-dimensional space, represents agents' belief about the supply of evidence and does not have simple monotonicity properties. For instance, while the discovery of a piece of evidence may suggest that the remaining supply of evidence is now smaller by one piece, this discovery may be interpreted as the tip of an iceberg of evidence, pointing to many more pieces to discover. A discovery could also, if it contradicts past reports, indicate that previous investigators have lied and that whatever evidence they purported to have found (and thus ``removed'' from the supply of evidence) was in fact fabricated, resulting in a more optimistic belief about the amount of evidence that actually remains to discover.

An agent's belief about the evidence supply affects his incentives directly, through the probability that he discovers new evidence, as well as indirectly, through the probability that subsequent agents also look for evidence. An agent cares about later agents' beliefs about the supply of evidence, their beliefs about subsequent agents' beliefs, and so on. Moreover, an agent can manipulate other agents' beliefs by lying in his report.

Information attrition does not rule out per se the possibility that many agents work, or that many pieces evidence remain to be discovered, but it suggests some negative correlation between these events. To rule out the existence of an informative equilibrium, one has to control the sequence of agents' beliefs. This is achieved by showing that the probability that an agent discovers evidence is probabilistically linked to the impact that this agent's report has on subsequent agents' beliefs. This key link is established in  Proposition~\ref{pro-general-mixed}.

\subsection{Virtual Attrition Paradox. Application to Human Subjects}\label{ssec-virtual}

The necessary conditions are violated---and mediated learning is impossible---whenever information consists of disposable signals whose number has a bounded support, no matter how large the bound. For example, suppose that there are initially 1000 signals. Successful learning requires at least one agent to use one of the signals, which can be incentivized only if a subsequent agent uses (with positive probability) a second signal, which can be incentivized only if a third agent uses a third signal with positive probability, and so on. To incentivize mediated learning, agents must, with positive probability, reach a point at which i) the probability that there remain 2 or more signals is arbitrarily low, ii) the probability that some agent makes an informative report is positive. This is essentially the situation faced by the two laboratories in the blood-testing example, and we know that these two conditions are incompatible. Mediated learning must therefore fail down this path, which causes incentives to unravel all the way back to the first agent and leads to a complete and global failure of mediated learning.\footnote{In herding models (Bikhchandani, Hirshleifer, and Welch (1992) and Banerjee (1992)), social learning can fail after so much public information has already been acquired that agents forgo their own information. Here, by contrast, learning fails entirely from the first agent onward.}

This yields the following paradox: when the supply of disposable signals is bounded, mediated learning fails because agents anticipate that signals must eventually become scarce with positive probability if mediated learning is successful. This anticipation destroys agents' incentives to seek signals in the first place, implying that the scarcity never materializes: it remains virtual and the supply of evidence remains intact as mediated learning fails entirely.

To appreciate the logical consequences of this paradox, consider a research question that must be investigated through human subjects, such as a question in experimental psychology. The experiment can be run only once on any given subject: exposure to the experiment affects a subject’s awareness, knowledge, and response to future experiments, as can be most vividly intuited from experiments such as Zimbardo's 1971 Stanford prison experiment (Haney, Banks, and Zimbardo (1972)).

In these cases, the more experiments are performed, the fewer the number of subjects available for further experiments. Even when the pool of potential subjects is a large, virtual attrition limits the scope for experimental replication and, hence, for incentivizing researchers' truthful behavior. Remarkably, Zimbardo later admitted to largely influencing the participants of his prison experiment, which significantly reduces the experiment's scientific value.\footnote{Zimbardo (2007, p. 55) admits to directing the guards to handle the prisoners in the following terms: ``We can create boredom. We can create a sense of frustration. We can create fear in them, to some degree. We can create a notion of the arbitrariness that governs their lives, which are totally controlled by us, by the system, by you, me \ldots They'll have no privacy at all, there will be constant surveillance---nothing they do will go unobserved. They will have no freedom of action. They will be able to do nothing and say nothing that we don't permit. We're going to take away their individuality in various ways \ldots  In general, what all this should create in them is a sense of powerlessness. We have total power in the situation. They have none.''}

\newpage
\section{Implications of the Theory}

\subsection{Political Consequences of Information Attrition}

Consider the perspective of a citizen who is {\em cynical} in the sense that he does not trust information intermediaries to behave ethically. In this citizen's mind, all agents involved in the learning process are purely motivated by unethical motives and are collectively aware of this.

To see how information attrition affects the view of cynical citizens, let us first revisit the blood-testing example. Given that the blood samples are subject to information attrition, it is reasonable---indeed, rational---for a cynical individual to treat as uninformative any finding reported by the blood-testing agencies, for the reasons explained above.\footnote{Blood tests could in principle be certified by a third party, who would check wether the agency did its job properly and thus increase the trust in its report. One would then have to understand the third party's incentives. This ``monitoring the monitor'' structure is discussed in Section~\ref{ssec-alternative}.} As a result, two citizens with cynical beliefs about agencies’ behavior and otherwise different views of the world may rationally entertain completely different beliefs about whether a particular athlete used banned substance, even after agencies have made their reports public: mediated learning fails to convince citizens and bring their views closer.

Mediated-learning failures can have severe consequences for social cohesion and political stability. When a politician is accused of corruption, for instance, the average citizen must rely on declarations made by intermediaries, such as officials and journalists, to learn anything about whether the corruption charge is true or, instead, an attempt to smear and neutralize some politician. There is no {\em deus ex machina} available to lift all confusion and finally reveal the truth to the public. Moreover, information attrition may be a concern in these environments. In a corruption case, for instance, incriminating documents can be destroyed and witnesses intimidated or eliminated.\footnote{The same observations hold with regard to government agencies suspected of abusing their power or violating some rules. Examples of evidence destruction by governmental agencies abound, even in the world's strongest democracies. In the Unites States, for instance, CIA director Richard Helms ordered in 1973 that all documents pertaining to the CIA's infamous MK Ultra program on mind-control experiments be destroyed (``An interview with Richard Helms'', \url{https://www.cia.gov/library/center-for-the-study-of-intelligence/kent-csi/vol44no4/html/v44i4a07p_0021.htm}).
 In 2005, the CIA's Director of Operations at the time ordered the destruction of all interrogation tapes of Abu Zudaydah and Abd al-Rahim al-Nashiri that featured ``enhanced'' interrogations (``Tapes by C.I.A. Lived and Died to Save Image,'' \url{https://www.nytimes.com/2007/12/30/washington/30intel.html}.)}

Now let us consider citizens holding very different priors, perhaps based on their party affiliation or ideology, about whether some politician was guilty of corruption or some governmental agency abused its power. If these citizens are cynical, in the specific sense given above, they can reject official statements and journalistic reports about the case and maintain strong disagreements.

The theory thus provides an explanation for why even reasonable citizens may remain divided on questions that are subject to information attrition.%\footnote{This result is related to the question studied by Cripps et al. (2008) and Acemoglu et al. (2016): do rational agents have convergent beliefs as they receive more signals over time? In Acemoglu et al. convergence may fail because individuals disagree over how to interpret signals. Here, by contrast, individuals agree that signals cannot be trusted.} 
For such questions, the absence of trust in investigative institutions is rational and can perpetuate polarization. As a corollary, this paper also proposes a specific mechanism to explain why eroding citizens' trust in institutions harms social cohesion.

\subsection{Ethical Necessity}\label{ssec-ethical-necessity}

The arguments presented so far bring us to one of the paper's key motivating questions: is ethical behavior necessary for society to function? {\em Ethical necessity} is not salient in economic analysis, which typically focuses on ``selfish'' agents evolving within the boundary of well-defined institutions, such as markets or democratic institutions, whose enforcement is taken for granted.\footnote{Economists have studied various forms of non-selfish behaviors, such as pro-social behavior arising in dictator and ultimatum games and various forms of altruism. Unlike earlier work, this paper does not study ethical behavior per se, but the necessity of ethical behavior for society to function. Self-interest remains the default assumption in most economic models and is deeply rooted in the discipline. For example, Edgeworth (1881) observed that ``self-interest is the first principle of pure economics.''} In reality, institutions cannot be taken for granted: if agents are unethical, they may attempt to violate these institutions in various ways, and it is unclear why agents should be presumed to act selfishly within the behavioral boundaries imposed by these institutions but ethically respect these boundaries. Such an assumption amounts to a ``heroic dichotomy'' that, at the very least, deserves closer inspection.

By focusing on mediated learning, this paper provides a tractable framework to study ethical necessity.\footnote{The present framework relies on a sequential learning structure. One could consider alternative structures. For example, a central agency may ask several intermediaries to seek and report the truth simultaneously and independently of one another. Under this parallel structure, the incentives of such a centralized agency must however also be scrutinized, which reintroduces a sequential element. This and other designs are discussed in Section~\ref{sec-discussion}.} Since mediated learning is involved in the resolution of criminal cases, political corruption, and the determination of other possible violation of institutions, successful mediated learning is a necessary condition for society to function: whenever mediated learning requires ethical behavior, so does society.

While ethical necessity has not been a salient issue in economics, it did receive some attention from economists.\footnote{For instance, Myerson (2006) shows that in a federalist democracy, the existence of some virtuous politicians can guarantee that democracy succeeds either at a national or a provincial level, whereas democracy can fail at both levels when such politicians are absent. By contrast, Glazer and Rubinstein (1998) describe  an information aggregation environment in which if all agents are purely concerned with achieving the social optimum, there always exist equilibria in which the optimum is not achieved whereas if all agents are {\em also} (but not only) concerned with their individual recommendation being followed, this uniquely selects the socially optimal equilibrium.} Hurwicz (2007) explicitly discusses the existence of ``intervenors,'' which he defines as ethical monitors in a monitoring-the-monitor problem, and expresses his personal belief in the existence of intervenors. Unlike the present paper, however, Hurwicz argues that intervenors are not needed for successful monitoring hierarchies.\footnote{Rahman (2012) proposes an approach that is well suited for repeated monitoring tasks, such as controls used for airport security or in sting operations: a principal can ask agents to violate the rules on purpose to check whether these violations are caught by the monitor. Such violations are detected ``for free'' by the principal since he instigates them. The approach is well suited when violations can be faked at little social cost, the principal has commitment power, and collusion between the principal and agents is impossible.} Hurwicz describes an environment with three agents $A,B,C$, in which $B$ monitors $A$'s actions, $C$ monitors $B$'s monitoring of $A$, $A$ monitors $C$'s monitoring of $B$'s monitoring $A$, and so on. This ``Hurwicz triangle,'' in which the infinite monitoring hierarchy is folded in loops going through the three agents, which Hurwicz does not formally analyze, omits the possibility of corruption across monitors, studied in Strulovici (2019).\footnote{Levine and Modica (2016) consider a similar structure, in which agents in a group take some initial action, then verify the action taken by their neighbor, then verify the earlier verification task of their neighbors, and so on.}

Holmstr\"{o}m (1982) shares a similar concern for his analysis of moral hazard in teams, noting in his conclusion that ``another important issue relates to monitoring hierarchies. (...) The question is what determines the choice of monitors; and how should output be shared so as to provide all members of the organization (including monitors) with the best incentives to perform?''

Ethical necessity is implicit in Becker and Stigler's (1974) study of wrongdoing and malfeasance by enforcement officers. The authors assume the existence of exogenous signals about the behavior of enforcement officers. The existence of a reliable second level of monitoring is thus taken for granted, which raises the question of how these signals are generated.\footnote{Milgrom, North, and Weingast (1990) explicitly study the enforcement of market institutions. They consider the role of law merchants to discipline trade in medieval Europe, and they do consider some of the law merchants' incentives to lie and take bribes.}

\subsection{Historical Revisionism}

Historical revisionism is another instance of mediated learning, in which information attrition plays an important role. Understanding historical events is of obvious importance, not only to learn lessons from the past but also to assess claims based on such events, such as territoriality or reparation claims. %\footnote{Social construct, ideology, international relations. Cite.}
Mediated learning is necessary because citizens cannot directly verify historical events.\footnote{This point is obvious with regard to events that took place before citizens' lifetime. Even with regard to contemporaneous events, aggregating information and forming a global picture of an event is a highly complex task that requires expertise, time, and a special access to information. The pitfalls of such information aggregation have been famously illustrated by Stendhal's  {\em La Chartreuse de Parme} whose protagonist, Fabrice del Dongo, takes part to the Battle of Waterloo with the Napoleonic army but construes a completely erroneous version of the battle.} They must rely on experts and officials to access and correctly interpret archives, artifacts, and other sources of information. Information attrition is both exogenous (e.g., witnesses die) and endogenous (e.g., documents may be destroyed on purpose).

To give a concrete example,\footnote{Huq (2018) discusses similar, very recent examples, in which the possibly false threats of terrorism or coups were used to weaken democratic institutions and shift power to more authoritarian regimes.} consider the fire of the German parliamentary building (Reichstag) on February 27, 1933. The importance of this event can hardly be overstated. The Nazis, who had lost seats in the previous parliamentary election, claimed that the fire had been caused by communists and used this event to pressure president Hindenburg into imposing martial law on Germany (the ``Reichstag fire decree'') and arrest and weaken communists. This allowed the Nazi's to form a majority coalition after the March 5, 1933 parliamentary elections, consolidated Hitler’s power, and led to the Enabling Act.

While communist involvement in the fire has long been rejected, there was until recently a consensus among mainstream historians that the Reichstag fire had {\em not} been caused by the Nazis. Fritz Tobias, one of the most respected historians on this subject in the postwar period, published a series of articles purporting to show that van der Lubbe, the person who was convicted for the arson, had acted alone.\footnote{The articles were published in {\em Der Spiegel} under the title ,\hspace{-.03cm},Stehen Sie auf, van der Lubbe!,'' in 1959 and 1960, in issues 43 to 52.} Historians accepted Tobias' version of the event until 2001, when two historians studying Gestapo archives came to the conclusion that it was a group of SA officers who had set the fire (Bahar and Kugel (2001)). Hett (2014) used recent scientific advances to convincingly argue that it would have been impossible for a single individual to set the fire. In 2019, an affidavit written in 1955 by former SA Hans-Martin Lennings discovered in Tobias' files was published by RedaktionsNetzwerk Deutschland, which stated that Lennings and other SAs had driven van der Lubbe from an infirmary to the Reichstag when the fire had already started, effectively setting up van der Lubbe.

Information attrition took several forms: all but one of the SA officers who were allegedly involved in the Reichstag fire were killed (and, hence, silenced) during the Night of the Long Knives; van der Lubbe was beheaded in 1934 for his alleged role in the arson; and any forensic evidence about the Reichstag fire has been long gone. Furthermore, Fritz Tobias hid Lennings' affidavit, which contradicted Tobias' single perpetrator theory, until his death.\footnote{Tobias is suspected of having tried to protect former Nazi officers after the war and had some incentive to dissimulate the Nazis' role in the Reichstag fire.} To this day, the strongest case for Nazi involvement thus seems to come from Lennings' affidavit, and hence boils down to one man's statement. It is unclear what Lennings' motivation for involving the SA may have been, except perhaps for setting the record straight. For observers who doubt Lennings' statement, the question of who set the Reichstag on fire may reasonably represent a failure of mediated learning. Lutjens (2016) believes that ``the continuous
reshaping of the Reichstag fire by those with a stake in the matter has fragmented the truth beyond recovery.''

\subsection{Existence and Observability of Ethical Agents}

The arguments developed so far assume that agents are known to be unethical. Without this knowledge, mediated learning can be achieved if agents put sufficiently high weight on the probability that other agents are ethical. To see this, let us consider once more the blood-testing example. If the first agency and citizens all believe that the second agency behaves ethically, the first agency can be incentivized to test and truthfully report the content of the first blood sample because its finding can be compared to the second agency's finding, which all believe to be truthful, and the first agency can be held accountable for any discrepancy.

More generally, any agent can be incentivized to behave truthfully as long as the agent believes that subsequent agents are likely to provide informative reports.

Conversely, if some agent is ethical, but other agents believe that he is not,  mediated learning can fail just as when all agents are unethical, for two reasons: First, the findings of the ethical agent are (perhaps, wrongly) believed to be uninformative. Second, precisely because the agent's findings are believed to be uninformative, the findings cannot be used to incentivize other agents to behave truthfully. Therefore, mediated learning requires that agents believe that other agents behave ethically with high enough probability.

The theory thus provides a specific mechanism for why eroding trust in institutions is damaging: society may need everyone to believe in the existence of ethical agents in order to sustain ethical behavior. Events that erode the strength of this belief can have severe consequences for the feasibility of mediated learning and the functioning of society.\footnote{Belief in ethical behavior gets rid of virtual attrition, but not of real attrition. For example, if the second blood-testing agency is, in fact, unethical, it cannot be incentivized to tell the truth: real attrition interrupts mediated learning. But if everyone erroneously believes that the second agency is ethical, then the first agency may be incentivized to behave truthfully.} It may be empirically difficult to distinguish between agents who have ethical preferences and agents who merely behave like ethical agents. The belief in ethical behavior can be self-fulfilling.

\subsection{Organization of Remaining Sections}

Section~\ref{sec-formal} describes the formal model and results. In the baseline model, all agents must incur a cost (which may be arbitrarily small, but strictly positive) for acquiring information. The model is then is then extended to allow for the existence of witnesses, who receive information for free about the state of the world and are subject to idiosyncratic, private shocks that affect their reporting preferences. Section~\ref{ssec-hard} describes the relation between the concept of hard evidence, information attrition, and intermediation. Section~\ref{ssec-alternative} discusses alternative investigation designs and settings. Section~\ref{ssec-oaths} explores how to foster truth-dependent preferences among investigators. All results are proved in the Appendix.

\newpage
\section{Formal Analysis}\label{sec-formal}

\subsection{Baseline Model}

The fact of interest, $\omega\in \Omega$, must be inferred from a sequence $S = (s^1,\ldots, s^{\tilde K})$ of signals, each of which takes values in some finite signal space~$\Sigma$. The sequence $S$ and its length $\tilde K\leq \infty$ are stochastic.%The object of mediated learning is the sequence $S$ itself. There are various ways of describing how information about $S$ maps into information of $\omega$ and will not model them.

In each round $i\geq 1$, a new agent arrives and makes two decisions: First, the agent privately chooses between seeking a signal (``working'') at cost $c > 0$ and doing nothing (``shirking''). Second, the agent publicly sends a message $m_i$ from some finite message space $M$. The agent can randomize his decisions.

Let $S_i$ denote the sequence of signals that remain to discover at the beginning of round~$i$ (in particular, $S_1 = S$). If the agent in round $i$ (hereafter, ``agent~$i$'' or simply ``$i$'') works and $S_i\neq \ep$, then $i$ discovers some element of $S_i$ with probability $\lambda\in (0,1]$. The discovered signal is denoted $s_i$.\footnote{The subscript $i$ denotes the signal discovered in round i (if any), which should not be confused with the ordering of signals in the sequence $S$, which is indexed by a superscript.}{\small{'}}\footnote{No constraint is imposed on how likely each element of $S_i$ is of being discovered by agent $i$. This likelihood could depend arbitrarily on $i$'s identity and on $S_i$, and the discovery of some specific signal may contain information about other signals in $S_i$ that depends on $i$'s identity. Moreover, Theorem~\ref{the-main-result} can be generalized to the case in which the probability $\lambda$ of discovering a signal is nondecreasing in the length of $S_i$.} With probability $1-\lambda$, $i$~discovers no signal. If $S_i$ is empty, $i$ surely discovers no signal.

For simplicity, we will not model the possibility that agents destroy signals without discovering them, or that signals disintegrate exogenously. These additional forms of information attrition would only strengthen the paper's impossibility results (Theorem~\ref{the-main-result} and Theorem~\ref{the-witness}).\footnote{The framework of the possibility result (Theorem~\ref{the-pos-statement}), in which the number of signals is geometrically distributed, can be interpreted as being the result of an exogenous decay of evidence.} The extension is more complex when some agents, such as witnesses, can discover signals for free. This extension is analyzed explicitly in Section~\ref{ssec-witnesses}.

After the information-seeking stage, $i$ sends a report $m_i$ whose distribution in $\Delta(M)$ can arbitrarily depend on what (if anything) $i$ has observed in the information-seeking stage and on the reports $m_1^{i-1} = (m_1,\ldots, m_{i-1})$ made by previous agents.

Entering round $i+1$, we have $S_{i+1} = S_i$ if $i$ did not discover any signal. If $i$ discovered a signal, then $S_{i+1}$ is a subsequence of $S_i$ with length $|S_{i+1}| = |S_i|-1$.

Let $m = (m_1, m_2, \ldots)$ denote the sequence of reports made by all agents. The realized utility of agent $i$ is given by
\be\label{eq-utility}U_i = V_i(m,\omega) - c \ind_{\mbox{$i$ works}}\ee
where $V_i$ takes values in some compact interval $[-P,R]$.

We will say that agent $i$ is {\em non-ethical} if the function $V_i$ is independent of $\omega$, which captures the idea that $i$ does not care directly about the truth. In this case, $V_i$ may be defined on the restricted domain $M^{\mb N}$. $V_i$ can depend arbitrarily on the entire sequence $m = (m_1, m_2, \ldots)$. In particular, the impossibility results presented in this paper hold regardless of whether $V_i$ is an exogenously given utility function or one that is specifically designed (or, at least, influenced) by a regulator or social planner.

To illustrate the various forms that $V_i$ may take, note that $i$ could be punished if his report is contradicted by subsequent investigators, and/or rewarded if his report differs from past investigators' (as in a journalistic scoop). $V_i$ may aggregate a discounted stream of rewards and punishments.\footnote{For example, if $i$ receives utility $v_{i,j}(m_1,\ldots, m_j)$ in round $j\geq i$ and discounts future utility with some factor $\delta <1$, then
\[V_i(m) = \sum_{j\geq i} \delta^{j-i} v_{i,j}(m_1,\ldots m_j).\]}
The formulation captures situations in which $i$'s utility is affected by reports indirectly through the actions that these reports trigger. For example, $i$ could be a prosecutor at a trial, whose outcome $a(m)\in \{\mbox{`guilty', `not guilty'}\}$ depends on the statements of all agents involved in the case. If $i$ is non-ethical, his utility may be modeled by $V_i(m) = R \times \ind \{\mbox{$a(m) =$ `guilty'}\}$, as in Landes (1971). In general, $i$'s realized utility could depend stochastically on other agents' reports. For example, suppose that the number of investigators is stochastic. We can model this by interrupting mediated learning at some stopping time $\tau$, in which case $i$'s utility depends only on $(m_1,\ldots, m_\tau)$. This and similar variations, such as randomizing the order of investigators or how investigator $j$'s report affects $i$'s utility, is  easily encompassed by the model.

Agents have a common prior about the distribution of $S$. The equilibrium concept is (weak) Perfect Bayesian Equilibrium.\footnote{The impossibility results in this paper hold for stronger concepts of equilibrium, such as sequential equilibrium. Reciprocally, the equilibrium constructed to prove the positive result, Theorem~\ref{the-pos-statement}, is a sequential equilibrium.}

\subsection{Main Results}

For each $k\geq 1$, let $F^k = Pr(|S| \geq k)$ denote the prior probability that there are at least $k$ signals to discover at the beginning of the investigation process.

\begin{definition} An equilibrium is {\bf informative} if at least one agent works with positive probability, and {\bf uninformative} otherwise. \end{definition}

Uncovering any modicum of information about $S$ (and, hence, $\omega$) with positive probability, no matter how small, suffices to qualify an equilibrium as ``informative.'' The following theorem shows, however, that informative equilibria fail to exist when information is subject to attrition in a specific sense.

\begin{theorem}\label{the-main-result} Given fixed parameters $(R, P, c, \lambda)$, there exist strictly positive thresholds $\{\lbar F^k \}_{k\geq 1}$ with the following property:
\[\mbox{All equilibria are uninformative if there exists $k\geq 1$ such that  $F^k < \lbar F^k$.}\]
\end{theorem}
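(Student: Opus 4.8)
\medskip
\noindent\emph{Proof sketch (plan).} The plan is to combine a one-shot incentive bound, a Bayes--Markov estimate that caps how optimistic posterior beliefs about the signal supply can ever become, and the forward-looking ``unraveling'' mechanism quantified by Proposition~\ref{pro-general-mixed}. If $\lambda(R+P)\le c$ no agent ever works and the claim is immediate, so assume $\lambda(R+P)>c$ and set $q:=c/\big(\lambda(R+P)\big)\in(0,1)$. \textbf{Step 1 (one-shot incentive bound).} I would first show that in any equilibrium, if agent $i$ works with positive probability at some information set $\mathcal H$, then
\[
Pr(S_i\neq\ep\mid\mathcal H)\;\ge\;q.
\]
The point is that relative to shirking, working alters agent $i$'s situation only when $S_i\neq\ep$ \emph{and} a discovery actually occurs (otherwise $S_{i+1}=S_i$ and $i$'s information set is unchanged), an event of conditional probability at most $\lambda\,Pr(S_i\neq\ep\mid\mathcal H)$; since $V_i$ takes values in $[-P,R]$, the gross gain from working is then at most $\lambda\,Pr(S_i\neq\ep\mid\mathcal H)(R+P)$, which must weakly exceed $c$. \textbf{Step 2 (few optimistic posteriors).} Next, for each $k$ the posterior probability attached to $\{|S|\ge k\}$ is a bounded martingale with mean $F^k$, so by optional stopping and Markov's inequality, the probability that the game ever reaches a (stopping-time) round at which this posterior is at least $q$ is at most $F^k/q$.

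\textbf{Step 3 (forced descent --- the crux).} The heart of the argument is to show that in \emph{every} informative equilibrium there is a constant $\rho_k>0$, depending only on $(R,P,c,\lambda)$ and $k$, such that with probability at least $\rho_k$ the game reaches a round $i$ and information set $\mathcal H$ at which at least $k-1$ signals have already been discovered and agent $i$ nonetheless works with positive probability. I would establish this by iterating the mechanism of Proposition~\ref{pro-general-mixed}. Let $i_1$ be the first agent who works with positive probability; since no signal has been depleted and the pre-$i_1$ public history is uninformative, $i_1$'s belief is the prior, so $Pr(S\neq\ep)=F^1\ge q$ by Step~1. Proposition~\ref{pro-general-mixed} links the probability that $i_1$ in fact discovers a signal to the impact of $i_1$'s report on subsequent agents' beliefs; because an unethical $i_1$ whose report were inconsequential for everyone after him would optimally report independently of his signal (and thus never pay $c$ to work), a strictly positive benefit from working forces that downstream impact to be positive, which in turn forces some later agent $i_2$ to work with positive probability at a continuation history where one signal has been discovered --- and, by Step~1 again, where the conditional probability of further signals is at least $q$. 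Repeating this $k-1$ times and multiplying the conditional lower bounds supplied by the Proposition produces $\rho_k$.

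\textbf{Step 4 (conclusion).} Finally I would glue Steps~2 and~3 together. On the event $E$ of Step~3 at least $k-1$ signals have been discovered, so there $\{S_i\neq\ep\}$ coincides with $\{|S|\ge k\}$; hence by Step~1 the posterior assigned to $\{|S|\ge k\}$ is at least $q$ on $E$, and Step~2 gives $Pr(E)\le F^k/q$. Combining with $Pr(E)\ge\rho_k$ from Step~3 yields $F^k\ge q\,\rho_k$. Setting $\lbar F^k:=q\,\rho_k>0$, the existence of an informative equilibrium forces $F^k\ge\lbar F^k$ for every $k$, which is exactly the contrapositive of the statement.

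\textbf{Main obstacle.} I expect Step~3 to be the genuine difficulty. A priori each conditional event ``$i_n$ works, discovers, and thereby forces $i_{n+1}$ to work'' could have arbitrarily small probability, the chain of implicated agents is infinite, and the relevant state --- beliefs about the residual supply, beliefs about others' beliefs, and so on --- lives in an infinite-dimensional space with no useful monotonicity (a discovery can make an agent either more or less optimistic about what remains). Turning the descent into a quantitative statement, with the product of conditional probabilities bounded below by a constant that is uniform over \emph{all} equilibria and \emph{all} incentive structures $V_i$, is precisely what requires the probabilistic link between an agent's discovery probability and the downstream informational footprint of his report encoded in Proposition~\ref{pro-general-mixed}.
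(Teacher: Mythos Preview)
Your Steps~1 and~2 are sound and correspond to the paper's Lemma~\ref{lem-start} and Lemma~\ref{lem-supermart}. The difficulties are in Steps~3 and~4, and they are not merely matters of detail.

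\textbf{Step 4 conflates two filtrations.} The event $E$ you define involves the \emph{actual} number of discoveries, which is private: agent~$i$'s information set $\mathcal H$ is the public history $m_1^{i-1}$, and the number of signals uncovered so far is not $\mathcal H$-measurable (earlier agents may have worked and discovered without this being recoverable from their reports). Step~1 bounds the \emph{public} posterior $Pr(S_i\neq\ep\mid\mathcal H)$. On $E$, the identification $\{S_i\neq\ep\}=\{|S|\ge k\}$ holds only after conditioning on the private event ``$k-1$ signals discovered'', not after conditioning on $\mathcal H$ alone. So from $Pr(S_i\neq\ep\mid\mathcal H)\ge q$ you cannot infer that the public martingale $Pr(|S|\ge k\mid\mathcal H)$ reaches $q$; and if instead you run Step~2 with the full-information filtration, Step~1's bound no longer applies to the relevant posterior (indeed, conditional on many discoveries having already occurred, $Pr(S_i\neq\ep\mid\cdot)$ can be much smaller than~$q$). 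Either way the inequality $Pr(E)\le F^k/q$ does not follow.

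\textbf{Step 3 asks Proposition~\ref{pro-general-mixed} for the wrong direction.} You need a uniform \emph{lower} bound $\rho_k$ on the probability of a chain of $k-1$ actual discoveries followed by further work; Proposition~\ref{pro-general-mixed} gives an \emph{upper} bound on $\beta_i$. Concretely: (i) the probability $\gamma_{i_1}$ that your first ``working'' agent actually works can be arbitrarily small in a mixed equilibrium, so the chain may start with arbitrarily small mass; (ii) even conditional on working, the incentive to work can be carried by continuations triggered after \emph{no} discovery (this is exactly Case~2 in the proof of Proposition~\ref{pro-general-mixed}, where $g_i(M_i^+)<c/2R$ but $d_i(M_i^+)$ is large), so the chain can propagate without adding discoveries; (iii) nothing prevents the product of conditional probabilities along the chain from vanishing across equilibria and incentive structures. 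Your ``Main obstacle'' paragraph diagnoses exactly this, but invoking Proposition~\ref{pro-general-mixed} does not resolve it.

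\textbf{How the paper actually proceeds.} The paper never tracks actual discoveries. It inducts on $k$ at the level of public beliefs: given a threshold $\lbar F^k$ such that every informative continuation at any round~$j$ has $F^k_j\ge\lbar F^k$, it constructs $\lbar F^{k+1}$. Assuming for contradiction an informative continuation with $F^{k+1}_i$ arbitrarily small, Doob's inequality (your Step~2) confines $F^{k+1}_j$ to a small region for all $j\ge i$ with high probability; on that event Proposition~\ref{pro-general-mixed} plus the induction hypothesis yield $\beta_j\le (2C/\hat F^k)\,\mb E_j[(F^k_j-F^k_{j+1})\ind_{m_j\in M_j^+}]$, and summing over $j$ telescopes to an \emph{upper} bound on $\sum_j\beta_j$ and hence on the probability that anyone after~$i$ works. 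That probability being small then contradicts $i$'s own incentive constraint (your Step~1). So Proposition~\ref{pro-general-mixed} is used as an upper bound inside a proof by contradiction, not to force a positive-probability descent through actual discoveries.
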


For an informative equilibrium to exist, the support of $|S|$ must therefore be unbounded: if not, $F^K = 0$  for some $K$, which violates the threshold conditions regardless of the parameters $(P,R,c,\lambda)$. This only a necessary condition, however: even if $|S|$ has an unbounded support, the thresholds mentioned by Theorem~\ref{the-main-result} imply that the survival function $F^k = Pr(|S| \geq k)$ cannot decrease too fast for mediated learning to be feasible.

When the survival function $F^k$ decreases at most at a geometric rate, there are instances of the model and utility functions $\{V_i\}_{i\in \mb N}$ for which mediated learning is feasible, as indicated the next theorem.

\begin{theorem}\label{the-pos-statement} For any $\rho\in (0,1]$ and $\lambda \in (0,1]$, there exists an instance of the model and utility functions $\{V_i\}_{i\geq 1}$ for which the distribution of $S$ satisfies $F^{k} = \rho^{k-1} F^1$ for all $k\geq 1$ and an informative equilibrium exists.\end{theorem}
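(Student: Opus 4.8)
The plan is to exhibit one concrete instance together with a ``vindication''-type incentive scheme of the kind announced in the introduction, and to verify that a threshold work rule with truthful reporting is a (weak PBE, in fact sequential) equilibrium that is informative. For the instance, take $\Omega=\{0,1\}$ with a uniform prior; let the signals be conditionally i.i.d.\ noisy observations of $\omega$ with precision $\tfrac12+\varepsilon$ for a fixed $\varepsilon\in(0,\tfrac12)$; let $M=\{0,1,\emptyset\}$; set $F^1=\rho$, so that the prior on $|S|$ is geometric with $F^k=\rho^{k-1}F^1=\rho^k$ (when $\rho=1$ this is the trivial sub-case $|S|=\infty$ a.s.); and leave $c>0$ and the interval $[-P,R]$ as free parts of the instance, to be pinned down last. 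The candidate equilibrium is: in round $i$, agent $i$ works iff the public belief $q_i$ that $S_i\neq\emptyset$ is at least a threshold $\bar q$; if he works he reports the discovered signal (or $\emptyset$ if he discovered nothing); if $q_i<\bar q$ he shirks and reports $\emptyset$.

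The first ingredient is the memorylessness of the geometric law. Along any on-path history the public belief about $S_i$ stays inside the one-parameter family ``geometric$(\rho)$, scaled by the probability $q$ that $S_i\neq\emptyset$''; a non-discovery by a working agent multiplies the odds of $\{S_i\neq\emptyset\}$ by $1-\lambda$ (so $q$ strictly decreases toward $0$), while a discovery resets $q$ to exactly $\rho$, independently of its previous value. Consequently the path decomposes into ``cycles'' that each start at $q=\rho$ and decay, a cycle ending either with a fresh discovery — which happens with one and the same probability $\alpha>0$ by the reset property — or with $q$ falling below $\bar q$, after which everyone shirks forever. The process thus stops almost surely, yet every agent who discovers is ``retested'' by a later discoverer with the uniform probability $\alpha>0$.

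The incentives are a logarithmic market scoring rule: agent $i$'s report is read as a move of the public's predictive distribution for the next non-$\emptyset$ report, and $V_i(m)$ is a large constant $\kappa$ times the log-score of that move evaluated against the realized next reported value (and $V_i\equiv$ a centering constant $B\in[-P,R]$ if no later report is non-$\emptyset$), recentered into $[-P,R]$. Properness of the log rule, together with the conditional-i.i.d.\ structure — agent $i$'s posterior on $\omega$ is a single Bayesian step from the public belief toward his signal, which is exactly what $m_i=s_i$ achieves, while $m_i=\emptyset$ is the null move of a no-information agent — makes truthful reporting optimal at every history irrespective of the prevailing belief; this is precisely the ``magnitude scales with surprise'' feature. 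Because the scoring target is a noisy signal, the realized score is automatically bounded, so $V_i$ does lie in a compact interval, and reporting $\emptyset$ yields the constant $B$. The expected value of working for agent $i$ is then $B-c+q_i\lambda\,\kappa\,\alpha\,G$, where $G>0$ is the Kullback--Leibler information gain from one extra signal, while shirking yields $B$; choosing $[-P,R]$ wide enough to accommodate a large $\kappa$ and then $\bar q=c/(\lambda\kappa\alpha G)$ makes working optimal exactly on $\{q_i\ge\bar q\}$, and makes $\bar q<\rho$, so that agent $1$ (facing $q_1=F^1=\rho$) works with positive probability and the equilibrium is informative. It remains to discharge the usual PBE checks: off-path beliefs (an unexpected non-$\emptyset$ report is read as a discovery of that value, which keeps the public belief in the same geometric family), and absence of profitable joint deviations in the two stages; the only slightly delicate point is that a secret deviation to shirk leaves the true $S_i$ less depleted than the public thinks, but this is immaterial to the deviator since his report is $\emptyset$ and the $\emptyset$-payoff is the constant $B$.

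\textbf{Main obstacle.} The crux I expect is making the two belief coordinates — the supply-belief $q$ and the belief $x$ about $\omega$ — simultaneously compatible with a clean threshold rule. The trouble is that the per-signal information gain $G$, and hence the value of working, degrades as $x$ becomes extreme (a noisy signal is then easy to predict), while $x$ can in principle drift arbitrarily far along low-probability paths carrying many same-direction discoveries. The fix is to take the work region to be $\{(q,x):\text{value of working}\ge c\}$ and to show it is a genuine fixed point — the retesting probability that enters the value itself depends on the region — and then to observe that, precisely because the value crosses $c$ at that region's boundary, an agent whose discovery would push $x$ past the boundary is simply one who correctly stops, so no spurious unravelling occurs; a complementary route is to replace the i.i.d.\ signals by a Markov chain of signals discovered in order, which keeps $x$ from ever concentrating and makes $G$ a true constant, at the cost of more bookkeeping about which later report retests which earlier one (the retested index shifts whenever an agent discovers). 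Carrying out this fixed point, and keeping the retesting probability $\alpha>0$ uniform once the $x$-coordinate is brought under control, is the technical heart; geometric memorylessness is what makes the $q$-component of the argument — the cycles, the reset to $\rho$, and the uniform $\alpha$ — tractable in the first place.
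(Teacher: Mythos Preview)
Your approach differs substantially from the paper's. The paper keys the work decision to the \emph{state} belief $p_i$ about $\omega$, not to the supply belief $q_i$: agents work iff $p_i\in(\lbar p,\bar p)$, and are rewarded or punished according to which boundary the public belief process eventually exits through, with rewards calibrated (level by level on the belief grid) so that any fabricated report has expected value zero while a truthful one has strictly positive expected value, because holding a genuine signal shifts the exit probabilities. For $\rho<1$ or $\lambda<1$ the paper simply terminates learning at the first non-discovery and scales the rewards by the reciprocal of the survival probability; geometric memorylessness enters only to keep that survival probability a positive constant across rounds. Your scoring-rule route uses memorylessness more structurally---to reset $q$ to $\rho$ after each discovery and hence make the retesting probability uniform---and lets learning survive unlucky non-discoveries, which is arguably more natural.

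There is a genuine gap in the Markov-chain fix. When signals are discovered in order along a chain, a shirking agent who \emph{echoes} the last non-$\emptyset$ report $s^{k-1}$ does not consume $s^k$, so the realized next non-$\emptyset$ report against which he is scored is $s^k$ rather than $s^{k+1}$; and the one-step predictive $P(\cdot\mid s^{k-1})$ he has induced is exactly his own belief about $s^k$. The echo therefore earns a strictly positive expected score (the KL divergence of the one-step from the two-step transition) without paying $c$. The truthful ex-ante gain is the conditional mutual information $I(s^{k+1};s^k\mid s^{k-1})$, which does exceed the echo gain, but for working (probability $q_i\lambda$ of discovery, then retest probability $\alpha$) to dominate echoing (retest probability $q_i\alpha/\rho$ from the deviator's private viewpoint) you need the ratio of these two gains to exceed $1/(\lambda\rho)$, which forces the transition probability toward $1$ as $\lambda\rho\downarrow 0$. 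This calibration is feasible but is not the ``$G$ becomes a true constant, done'' that you advertise---and the index-shift bookkeeping you allude to is precisely where it bites. The i.i.d.\ route is immune to this echo problem (exchangeability of signals makes any shirker's non-$\emptyset$ report have nonpositive expected score regardless of which index is realized), so your fixed-point program is the safer path; but that fixed point is genuinely two-dimensional in $(q,x)$, and the paper's device of simply intersecting the work region with a band $p_i\in(\lbar p,\bar p)$---which you could graft onto your construction---is exactly what collapses it to a clean threshold argument with $G$ bounded below.
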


To illustrates this positive result, the next section focuses for expositional simplicity on the case of reproducible evidence, which corresponds to $\rho = 1$ (the supply of signals is unlimited), and shows that mediated learning can be made arbitrarily precise as long as the rewards and punishments are high enough.

\subsection{Reproducible Evidence}

Suppose that $S$ consists of infinitely many signals taking binary value, ``$H$'' or ``$L$''.\footnote{The construction used to prove Theorem~\ref{the-pos-statement} is almost identical for $\rho \in (0,1)$ and $\rho = 1$ and the results are qualitatively unchanged. The key is that there remain signals to discover with probability at each stage of the sequence with a probability that is bounded below away from 0. Moreover, if the number of initial signals is generated according to a geometric distribution and each signal decays at some fixed rate, the construction used to prove Theorem~\ref{the-pos-statement} still works because the probability of discovering new signals remains uniformly bounded away from zero.} The signals are conditionally i.i.d.: there is an unknown parameter $\omega \in \{H,L\}$---the underlying fact of interest---such that each signal $\tilde s$ satisfies $Pr(\mbox{$\tilde s$ = ``$H$''}| \omega =H) = Pr(\mbox{$\tilde s$ = ``$L$''}| \omega =L)= \pi \in (1/2,1)$, and the signals are independently distributed conditional on~$\omega$.  Agents' message space is chosen to be binary: $m_i \in \{`\mbox{`$H$''}, \mbox{``$L$''}\} $ for all $i$.

For expositional simplicity, we assume in this section that $\lambda=1$, which means that every agent who works surely discovers a signal.\footnote{The construction used to prove Theorem~\ref{the-pos-statement} is very similar if instead $\lambda <1$.}

In the equilibrium that we consider, it is always in an agent's interest to follow his signal if he acquired one. The relevant pure strategies are: i) work at cost $c>0$ and report one's signal ($m_i = s_i$), and ii) shirk and send a message $m_i \in \{`\mbox{`$H$''}, \mbox{``$L$''}\} $ at no cost.

Let $p_0 = P(\omega=H)$ denote the prior about $\omega$ before the investigation process. For any given equilibrium, let $\gamma_i$ denote the probability that $i$ works and $p_i$ denote the probability that $\omega = H$, both conditional on the past reports $m_1^{i-1}$.

\begin{proposition}\label{pro-positive} For any thresholds $p_-$ and $p_+$ such that  $0 < p_- <  p_0 <  p_+ < 1$, there exist $P,R>0$, utility functions $\{V_i\}_{i\geq 1}$ taking values in $[-P,R]$, and thresholds $\lbar p,\bar p$ such that $0<\lbar p < p_-$ and $1>\bar p> p_+$ for which the following strategy profile constitutes an equilibrium: $\gamma_i = 1$ if $p_i\in (\lbar p, \bar p)$ and $\gamma_i = 0$ otherwise.\end{proposition}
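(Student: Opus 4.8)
\emph{The candidate equilibrium and the belief dynamics.} The plan is to exhibit explicit bounded utilities $\{V_i\}$ under which the stated threshold profile is a Perfect Bayesian (indeed sequential) equilibrium, the reward to each agent being a bet on the eventual public belief whose stakes are calibrated to the belief prevailing when the agent moves. Fix the candidate profile: at a history with running belief $p_i\in(\lbar p,\bar p)$ agent $i$ works and reports $m_i=s_i$; at a history with $p_i\notin(\lbar p,\bar p)$ agent $i$ shirks and everyone treats his message as uninformative (this specifies off-path beliefs and makes $p_i$ a sufficient statistic for the continuation). First I would record the induced dynamics. Since the signals are conditionally i.i.d., under this profile every report either leaves the public belief unchanged or moves its log-odds by $\pm\log\frac{\pi}{1-\pi}$, so the belief always sits on a fixed lattice through $p_0$ and only finitely many lattice points lie in $(\lbar p,\bar p)$. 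The public belief is a bounded martingale in the observers' information, hence converges a.s.\ to a limit $p_\infty$; and, conditional on $\omega$, its log-odds is a random walk with nonzero drift, so it escapes $(\lbar p,\bar p)$ in finite time: only finitely many agents ever work and $p_\infty\in[0,\lbar p]\cup[\bar p,1]$ a.s.

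\emph{The incentive scheme.} I would let $V_i$ depend on $m$ only through $p_i$ (a function of $m_1^{i-1}$), the report $m_i$, and $p_\infty$ (a function of $m$, by the previous paragraph). When $p_i\in(\lbar p,\bar p)$, agent $i$ ``bets'' on the side on which $p_\infty$ exits: he receives $a^{m_i}(p_i)>0$ if $p_\infty$ exits on the side indicated by $m_i$ and $-b^{m_i}(p_i)<0$ otherwise. When $p_i\notin(\lbar p,\bar p)$, $V_i$ is a constant in $[-P,R]$; since an off-region message is then ignored and $p_\infty$ is already frozen, shirking strictly dominates working there, which is exactly $\gamma_i=0$. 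The four coefficients $a^H,a^L,b^H,b^L$ are chosen separately for each of the finitely many beliefs reachable in $(\lbar p,\bar p)$ (lattice points stay on the lattice even after deviations, so the reachable set really is finite); at all remaining histories $V_i$ is set to $0$, and $P,R$ are then read off as the finite bounds on the resulting $V_i$.

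\emph{Verifying the incentive constraints.} Fix a reachable $p_i\in(\lbar p,\bar p)$. From agent $i$'s interim viewpoint, the expected payoff of each report is linear in the coefficients, with data the probability that the stopped belief process, started at the post-report belief, exits high under the agent's own posterior measure. Writing $h,\tilde h,h'',\tilde h''$ for the four such probabilities obtained by combining the two post-report starting beliefs $p_i^H,p_i^L$ with the two possible posteriors (``saw $H$'', ``saw $L$''), the key monotonicity fact — proved by coupling the underlying random walks — is that, from a fixed start, tilting the posterior toward $H$ strictly raises the exit-high probability, so $h>\tilde h''$ and $\tilde h>h''$. Given these strict inequalities, the system made of (i) ``honest report beats the dishonest one'' for each observed signal and (ii) ``the honest report has strictly positive expected bet payoff'' reduces to a nonempty-interval condition on the ratio $a^H/a^L$ together with lower bounds on $a^H/b^H$ and $a^L/b^L$, the interval being nonempty precisely because $h>\tilde h''$ and $\tilde h>h''$; picking the coefficients in that range and then multiplying them all by a large enough common factor also secures (iii) ``working beats shirking'', i.e.\ (expected honest bet payoff)$\,-\,c>0$. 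Doing this at each reachable interior belief — and we are free to fix $0<\lbar p<p_-$ and $p_+<\bar p<1$ at the outset, since these bounds enter only through keeping beliefs interior — completes the construction.

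\emph{Main obstacle.} The delicate point, and the reason the stakes must depend on $p_i$, is that $p_\infty$ is \emph{not} a revelation of $\omega$: because only finitely many agents work, $p_\infty$ merely reaches the boundary of $(\lbar p,\bar p)$. A flat, $p_i$-independent scoring of the report against $p_\infty$ would therefore positively reward a ``surprising but vindicated'' lie made near an endpoint of the work region — the dishonest report pushes the public belief to the near threshold, the process stops, and the lie is mechanically confirmed. Neutralizing this is exactly what the belief-dependent calibration does, and the quantitative core of the proof is the coupling estimate showing that the honest-versus-dishonest exit-probability gaps $h-\tilde h''$ and $\tilde h-h''$ are bounded away from zero uniformly over the finitely many reachable interior beliefs, so that the feasibility interval for $a^H/a^L$ is not merely nonempty but wide enough to be made compatible with the participation constraint simultaneously.
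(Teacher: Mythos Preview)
Your proposal is correct and follows essentially the same route as the paper: a belief-dependent bet on the exit side of the stopped public-belief process, with the strict monotonicity of exit probabilities in the agent's private posterior as the key ingredient, and a common scaling to satisfy the participation constraint. The paper's only difference is that it picks a specific calibration making every shirking report worth exactly zero (setting $P^k_H=P^k_L=-Q$ and $R^k_H = Q\,\tfrac{1-\pi(q^k,q^{k+1})}{\pi(q^k,q^{k+1})}$, $R^k_L = Q\,\tfrac{\pi(q^k,q^{k-1})}{1-\pi(q^k,q^{k-1})}$), rather than arguing, as you do, about the feasibility of the linear system in the four coefficients.
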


The state $\omega$ can thus be learned with arbitrary precision if the rewards and punishments used to incentivize agents are high enough. In equilibrium, agents work with probability 1 until the posterior belief becomes extreme enough, at which point learning stops.

Since $p_i$ is a martingale and each signal has the same level of informativeness, $p_i$ must exit $[\lbar p, \bar p]$ with probability 1 in the candidate equilibrium. Incentives are provided as follows: if $i$ reported ``$H$'', he gets a reward if $\bar p$ is reached and a punishment if $\lbar p$ is reached, and vice versa. These rewards and punishment depends on the belief $p_i$ before $i$'s report. If $p_i$ was very close to one of the boundaries and $i$'s report takes the posterior away from this boundary, $i$ gets a high reward if the belief process ends up exiting through the other boundary (a low probability event) and a very mild punishment if the belief process ends up crossing the nearby boundary.

\subsection{Witnesses}\label{ssec-witnesses}

Consider now the addition of witnesses who differ from previous agents in two aspects:
\bit
\vspace{-.4cm}\item Witnesses discover a signal for free.
\vspace{-.2cm}\item They are subject to preference shocks that affect which report they prefer to send.
\eit
\vspace{-.2cm}
In each round $i\geq 1$, agent $i$ can be an investigator, identical to the agents of the baseline model, or a witness. Whether $i$ is an investigator or a witness is public information.

If $i$ is an investigator, the structure of $i$'s round, information, and utility is as in the baseline model. If $i$ is a witness, he receives at no cost a signal $s_i\in S_i$ at the beginning of the round, where $S_i$ is the set of signals remaining at the end of round $i-1$. In particular, $i$ can be a witness only if $S_i$ is nonempty. If $i$ is a witness, the sequence $S_{i+1}$ of available signals at the end of round $i$ satisfies $|S_{i+1}| = |S_i|-1$.

At the beginning of round $i$, the probability $\vphi_i$ that $i$ is a witness is zero if $S_i=\ep$ and can take any value in $[0,1]$ otherwise, and depend on past reports $m_1^{i-1}$.\footnote{For example, there could be a fixed subset $\mc N\subset \mb N$ of rounds such that $\vphi_i = \ind_{i\in N}$ (which puts a lower bound on the support of $|S|$). Alternatively, there could be a fixed time $T$ beyond which all witnesses have appeared, so that $\vphi_i \in (0,1)$ for $i\leq T$ and $\vphi_i=0$ for $i> T$. In general, the number of witnesses and the timing of their appearance may be stochastic and depend on past reports.  This captured by allowing $\vphi_i$ to depend arbitrarily on past reports and on calendar time.}

In principle, a witness' signal could be informative about the number of signals that remain to discover and, in particular, about whether information attrition is an issue for subsequent agents.\footnote{For example, $|S|$ could be a finite or infinite with equal probability, and a witness' signal could reveal whether the latter is true, in which case mediated learning may be feasible, as described by Theorem~\ref{the-pos-statement}.} This would significantly weaken the relevance of the initial belief about the supply of information, on which previous theorems are based.

We rule out this possibility and focus on the case in which a witness' signal never increases one's expectations about the total number of signals. This is achieved as follows: the sequence $S$ of signals is obtained by, first, generating an infinite sequence $S^\infty$ of signals, which may exhibit any arbitrary correlation between one another and, second, by truncating this sequence at some integer-valued random variable $\tilde K$ that is independently distributed from $S^\infty$. Agents observe signals in the order of the sequence. Thus, writing  $S= (s^1,\ldots, s^{\tilde K})$ (using superscripts to avoid confusion with the signals discovered in round $i$, which are denoted with subscripts), suppose that $q_i$ signals have been uncovered by the beginning of round $i$, so that $S_i = (s^{q_i+1}, s^{q_i+2},\ldots, s^{\tilde K})$. If $i$ discovers a signal $s_i$, then necessarily $s_i = s^{q_i+1}$ and $S_{i+1} = (s^{q_i+2},\ldots s^{\tilde K})$.

Moreover, we assume that $\tilde K$ has an increasing hazard rate, i.e., $\Pr(\tilde K=k)/\Pr(\tilde K\geq k)$ is increasing in $k$.
\begin{assumption}\label{A-IHR} The total number of signals $\tilde K$ has an increasing hazard rate.\end{assumption}
Intuitively, this assumption guarantees that the more signals have been discovered, the more likely it is that there are no signals left to discover.

After observing his signal $s_i$, witness~$i$ sends a report $m_i\in M$. His realized utility has two parts:
\[U_i(m) = V_i(m) + \epsilon_i(m_i)\]
where $V_i(m)$ plays the same role as investigators' utility function, and $\epsilon_i(m_i)$ is a shock affecting $i$'s preferences.

\begin{assumption} The random variables $\{\epsilon_i(m_i)\}_{m_i\in M_i}$ are independently distributed from one another and from all other variables in the model. They have density functions $\{f_{i,m_i}\}_{m_i\in M_i}$ that are bounded above by some constant $\bar f$.\end{assumption}

The bound $\bar f$ plays a role for witnesses that is similar to the ratio $1/c$ for investigators. Intuitively, the opportunity cost for $i$ to send message $m_i$ instead of $m'_i$ (or vice versa) is of order $1/\bar f$, as explained in Lemma~\ref{lem-order-stat}.

An informative equilibrium is defined as before: there is at least one agent (investigator or witness) who produces an informative message with positive probability. Continuation informative equilibria are defined analogously.

\begin{theorem}\label{the-witness} There exist strictly positive thresholds $\{\lbar F_k\}_{k\geq 1}$ such that if $F^k<\lbar F^k$ for some $k\geq 1$, there does not exist any informative equilibrium.\end{theorem}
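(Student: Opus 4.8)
The plan is to reduce the model with witnesses to the baseline impossibility result (Theorem~\ref{the-main-result}) by showing that the two new features — free signals for witnesses and preference shocks — do not expand the set of incentive-compatible informative equilibria. The backbone is the same unraveling logic driving Theorem~\ref{the-main-result}: if an informative equilibrium exists, one can trace a positive-probability path along which the conditional belief about the supply of evidence becomes arbitrarily pessimistic (via Assumption~\ref{A-IHR}, each discovery strictly raises the hazard that nothing remains), while along that same path some agent must still make an informative report with positive probability; Proposition~\ref{pro-general-mixed} links the discovery probability of an agent to the impact of his report on subsequent beliefs, and this link forces a contradiction once the remaining supply is concentrated on zero or one signal. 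So the real work is to verify that witnesses can be folded into this machinery.

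First I would handle the witness's reporting incentives via an order-statistic argument (the content of Lemma~\ref{lem-order-stat}). Given a witness in round~$i$ with signal $s_i$, his choice of $m_i$ maximizes $E[V_i(m)\mid s_i, m_1^{i-1}] + \epsilon_i(m_i)$; since the densities $f_{i,m_i}$ are bounded above by $\bar f$, the probability that the shock realization makes the witness strictly prefer a report whose induced continuation is informative — rather than an uninformative pooling report — is at most of order $\bar f$ times the gap in continuation values $E[V_i]$. This plays exactly the role that $1/c$ plays for investigators: an agent sends an informative message only when the swing in his expected $V_i$ across messages is large (at least of order $1/\bar f$), and such a swing can only be generated by later agents' reports being sensitive to the realized state, which in turn requires later discovery. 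Thus a witness contributes to "informativeness" only insofar as downstream agents (witnesses or investigators) keep discovering signals with positive probability — the same propagation requirement as in the baseline.

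Second I would address the belief-monotonicity concern. The worry flagged before Assumption~\ref{A-IHR} is that a witness's free signal might be read as the tip of an iceberg, raising expectations about $\tilde K$; the construction $S = $ truncation of $S^\infty$ at an independent $\tilde K$ with increasing hazard rate is precisely what kills this. Because $\tilde K \perp S^\infty$, the realized content of $s^{q_i+1}$ carries no information about $\tilde K$, so conditional on the public history the belief about "how many signals remain" depends only on $q_i$ (the number already uncovered) and the reports; and by the IHR property, $\Pr(\tilde K \geq q_i + \ell \mid \tilde K \geq q_i)$ is decreasing in $q_i$ for every $\ell \geq 1$. Hence each uncovering — whether by an investigator who works or by a witness who shows up — stochastically shrinks the residual supply in the hazard-rate order, and the prior survival function $F^k$ continues to control the worst-case residual optimism, exactly as needed to define the thresholds $\lbar F^k$. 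With these two lemmas in place, the inductive unraveling of Theorem~\ref{the-main-result} goes through verbatim with investigators' "work probability $\gamma_i$" replaced by "probability that round~$i$ uncovers a signal", namely $\gamma_i \lambda \ind_{S_i \neq \ep}$ for an investigator round and $\ind_{S_i \neq \ep}$ for a witness round.

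The main obstacle I expect is the second step done carefully: although $\tilde K \perp S^\infty$ rules out the naive iceberg effect, the endogenous component of belief formation — agents inferring from reports whether earlier investigators actually worked or fabricated, and hence whether purported discoveries were real — still makes the belief about $S_i$ a genuinely infinite-dimensional, non-monotone object, and one must show that witnesses do not create a new channel through which a report can make subsequent agents \emph{more} optimistic about the remaining supply in a way that escapes the $F^k$ bound. I would control this by conditioning on the coarser statistic $(q_i, \text{history})$ and showing that, regardless of off-path inferences about fabrication, the conditional law of $|S_i|$ is first-order stochastically dominated by the law obtained from the prior $F^\cdot$ shifted by the number of \emph{reported} discoveries (fabrication can only make agents think the supply is smaller, never legitimately larger beyond what $F^\cdot$ allows), so that whenever $F^k < \lbar F^k$ the required simultaneous occurrence of "residual supply $\leq 1$ with high probability" and "some later agent informative with positive probability" is impossible along any positive-probability continuation — completing the contradiction.
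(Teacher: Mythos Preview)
Your proposal correctly isolates the two new ingredients --- the order-statistic bound on witnesses (Lemma~\ref{lem-order-stat}) and the IHR-based monotonicity --- but the claim that the inductive unraveling of Theorem~\ref{the-main-result} then ``goes through verbatim'' with $\gamma_i$ replaced by a discovery probability hides a genuine gap. The baseline induction is anchored by the cost $c$: Lemma~\ref{lem-start} says no investigator works unless $F^1_i\geq c/R$, and Proposition~\ref{pro-general-mixed} uses $c$ in both cases of its proof to bound $\beta_i$. Witnesses pay no cost, so neither statement applies to them, and your substitution $\gamma_i\mapsto \ind_{S_i\neq\ep}$ for witness rounds produces a quantity equal to~$1$ (a witness by definition has a signal), which cannot be bounded by the right-hand side of~\eqref{eq-saver}. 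Your order-statistic observation is correct --- a witness is informative only if his continuation value swings by at least order $1/\bar f$ --- but you still need to bound that swing, and ``requires later discovery'' is circular when later discovery may again be by witnesses, who also pay nothing.

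The paper closes this loop differently at each level of the induction, and neither mechanism appears in your sketch. For the base case $k=1$: once $\hat F^1_i<c/2R$ no investigator works, so any continuation swing in $V_i$ comes only from later witnesses; but the \emph{number} of later witnesses is stochastically dominated by a geometric variable with parameter $\hat F^1_i$ (each witness consumes a remaining signal, and IHR caps the survival probability), and each is informative with probability at most $\nu(F)$ by definition. Combining with Lemma~\ref{lem-order-stat} yields a self-referential inequality of the form $\nu(F)\leq C\bar f|M|^2 F\,\nu(F)/(1-F)^2$, which is a contraction for small $F$ and forces $\nu(F)=0$. For the induction step, the paper does \emph{not} re-run the Theorem~\ref{the-main-result} machinery on a mixed sequence; instead it observes that if $\hat F^{k+1}_i$ is small enough then, the moment the \emph{first} witness arrives at some $j\geq i$, one has $\hat F^k_{j+1}\leq \hat F^{k+1}_i/\hat F^1_i<\lbar F^k$, so by the induction hypothesis the continuation from $j+1$ is uninformative; hence that witness is himself uninformative ($z(m_j)=0$ for every $m_j$), and the only possibly informative agents are the investigators between round $i$ and the first witness --- a pure-investigator stretch to which Theorem~\ref{the-main-result} applies directly. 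Your FOSD control on $|S_i|$ is in the spirit of the paper's $\hat F^k_i$ construct (Lemma~\ref{lem-IHR}), but without the contraction at $k=1$ and the first-witness reduction for $k>1$, the argument does not close.
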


\newpage
\section{Discussion}\label{sec-discussion}

Learning through intermediaries fails if the following conditions hold jointly: i) intermediaries do not care about the truth, ii) information is subject to attrition, iii) there is no exogenous, public revelation of the truth at any future time, and iv) intermediaries proceed sequentially.

This result holds for a general class of utility functions, for all equilibria, in the strong sense that nothing at all is learned about the state of the world, and despite the fact that agent incentives can be administered without any further agency problem: whatever rewards and punishments are promised to the agents as function of reporting histories can be perfectly enforced.

This is not to say that mediated learning fails in practice. This paper's impossibility results may be viewed as a reference point: to succeed, mediated learning must break at least one of the four conditions above. In particular, mediated learning can succeed if some intermediaries are motivated by the desire to seek the truth, or by a belief that other intermediaries have such a motivation.

Several ways out are discussed below.

\subsection{Escaping Attrition with Hard evidence}\label{ssec-hard}

In some cases, signals are not disposable. Consider for instance the video footage of a crime, in which the criminal is clearly identifiable. Such a video can be revisited numerous times without being altered, and conceptually resembles the case analyzed earlier of reproducible evidence.

Even this kind of evidence has some limitations. First, the evidence can deteriorate over time. Second, it need not be perfectly reliable. For example, the video footage could be fabricated, as exemplified by the emergence of baffling deepfakes. In this case, the signals that investigators receive, no matter how numerous, are only as reliable as the original source that generates them.

The use of DNA testing also illustrates this issue. The amount of usable DNA samples on a crime scene is finite and thus subject to virtual attrition (Section~\ref{ssec-virtual}). The procedure of DNA testing involves a replication phase, such as PCR amplification or DNA cloning, which creates more ``evidence'' that can be stored and verified by subsequent investigators. However, DNA samples can also be synthesized, i.e., literally ``fabricated,'' to match any desired DNA profile.\footnote{Frumkin, Wasserstrom, Davidson, and Grafit (2010) show the possibility of creating saliva or blood samples with the desired DNA. The authors, as well as subsequent work by other researchers, show that identifying methylation patterns in DNA samples can help distinguish synthetic and natural DNA, although such identification is challenging.} A laboratory can store artificial DNA, or even more simply, it can store DNA samples that were erroneously or malevolently taken outside of the crime scene and presented as coming from the scene.\footnote{A famous example is the ``Phantom of Heilbronn,'' a presumed serial killer whose DNA was found on 40 crime scenes over a fifteen-year span in Germany, Austria, and France. The DNA turned out to belong to a woman working in the factory that made the cotton swabs used to collect DNA samples.} Ultimately, agents who come later in the investigation can either test the DNA material that previous laboratories have left them, which may be the outcome of manipulation, or look for new DNA samples, which brings us back to the problem of information attrition.

The role of technology on mediated learning is complex and deserves a separate exploration. For example, DNA testing, video footage, and other technological advances have increased the set of reliable evidence. However, technology can also be used to manipulate evidence and do so more anonymously than before, increasing the reliance on experts and, hence, on mediated learning.

\subsection{Alternative Designs}\label{ssec-alternative}

Several remedies may be considered to address the unraveling results of Theorems~\ref{the-main-result} and~\ref{the-witness}. First, agents could be asked to investigate and report their findings simultaneously, a structure that we may call {\em parallel monitoring}. Second, agents could investigate past investigators. Third, in some cases, such as criminal cases, agents can in principle be incentivized by the perspective that their findings have an influence of subsequent crimes or events. Finally, the same agents could be called to make statements repeatedly over time. These possibilities are examined in turn.

{\bf 1.	Parallel monitoring and weak implementation with a centralized authority}

In applications such as historical revisionism, it is realistic to assume that agents make their statements sequentially. It is nonetheless natural to consider, from a mechanism-design perspective, the case in which several agents simultaneously and independently investigate the fact of interest and report their findings to some central authority. For example, blood-testing laboratories could be asked to test their respective samples independently from each other and report their findings simultaneously to some overseeing agency. The laboratories would be rewarded if their findings are match and punished if they don't.\footnote{The academic refereeing process has reporting features that resemble the parallel-monitoring design, although the incentives for referees are different and arguably more complex than a mere coordination motive.}

An immediate concern with this solution stems from the incentives of the central authority. If the authority has a material interest in a specific outcome (which cannot be ruled out, especially in politically charged investigations), it can secretly help agents coordinate on some report or influence the agents' reports in various ways. In order to avoid this, the central agency must be itself monitored, which brings us back to the sequential monitoring problem. In the blood-testing example, if laboratories must report their findings simultaneously, there is no recourse for an athlete accused of doping, particularly if laboratories can coordinate their response.

In some cases, parallel monitoring may be difficult to implement: it may be difficult, for instance, to send multiple investigators on a crime scene to independently interrogate witness and collect evidence, without the investigators being able to communicate, either directly or through witnesses and possibly coordinate. Moreover, when the evidence is limited supply (such as the weapon of a crime), such a limitation creates negative correlation in investigators' reports, since at most one of them can discover the evidence (weapon).\footnote{The deleterious impact of negative correlation on the informativeness of agents with a coordination motive is a central finding in Pei and Strulovici's (2020) analysis of strategic abuse.}

Finally, successful parallel monitoring can coexist only with other equilibria, many of which are uninformative. Indeed, there always exist equilibria in which monitors coordinate on a predetermined sequence of reports. Among these equilibria, successful parallel monitoring, when it is feasible, may be non-robust. Even small amounts of strategic uncertainty about other monitors can suffice to destroy the informative equilibrium.\footnote{This idea is explored in ongoing joint work with Harry Pei.}

{\bf 2.	Monitoring the Monitor}

In the applications discussed so far, each agent makes a statement about a particular question (e.g., whether an athlete used a banned substance, or who committed a crime). The agents themselves could be investigated. In this case, the subject of study of each agent evolves over time: agent 1 investigates the initial question, agent 2 investigates agent 1's treatment of the initial question, agent 3 investigates agent 2's investigation and so on. Such a sequence is called a ``monitoring hierarchy.'' In his analysis of monitoring, Hurwicz (2007) considers a similar hierarchy, except that  the number of agents is finite and the monitoring chain cycles repeatedly across the a fixed number of agents. A first conceptual difficulty with this approach concerns the simultaneity and complexity of these monitoring tasks: agents are supposed to conduct an infinite amount of monitoring tasks and are indirectly the subject of the tasks that they are investigating.

Even if we allow an infinite sequence of distinct agents, each of which is tasked with investigating the previous agent in the sequence, another issue emerges: what if an agent who discovers evidence about the agent he was monitoring can hide or destroy the evidence in exchange for a payment from the guilty agent? Such a transfer amounts to a local form of corruption among nearby agents. In a separate paper, I show that even this local form of corruption may suffice to unravel agents' incentives (Strulovici (2019)). Intuitively, the tension comes from two considerations. First, in order for a monitor to accurately report his finding about the previous agent, the monitor’s punishment if he lies and makes a wrongful accusation must be higher than his reward for claiming to have discovered wrongdoing by the previous monitor. Second, a monitor must get a higher reward if he finds evidence against the agent that he was investigating, relative to the punishment that this agent gets when punished. Otherwise, there is a transfer form the guilty agent to the monitor that improves both agents' payoff. Combined, these observations imply that the chain of rewards is unbounded and, hence, unsustainable.

{\bf 3.  Repeated Setting}

When mediated learning concerns the identification of a criminal and opportunities to commit crime are repeated over time, it is a priori possible that investigators care about the truth indirectly, through the impact that their findings have on citizens' future behavior.

Suppose that a citizen's decision to commit crime depends on whether his past actions were accurately called by past investigators: for example, a citizen who was wrongfully accused of committing crime in the past or who was wrongfully acquitted for crimes that he did commit may be more likely to commit crime in the future. In this setting, investigators could in principle have an endogenous incentive to report accurate findings. Provided that players are sufficiently patient, this kind of strategy profile could a priori be used to incentivize accurate reporting.

However, for this argument to work, a citizen’s strategy must depend on his private history, where public history consists of official findings about citizens' past actions and a citizen's private history records his actual past actions. In order for a citizen's private history to affect his decision of whether to commit crime, the citizen must be made indifferent between committing crime and abstaining from it, a knife-edge condition that is violated if, for instance, citizens are subject to small private shocks affecting their benefits from committing crime.\footnote{The argument is somewhat similar to the section on witnesses in this paper.}

{\bf 4.  Alternating Statements}

Finally, one could ask a fixed set of agents to take turns investigating and reporting on the question of interest. The key difference with the baseline model is that agents now have private information about what they did in the past, which affects how they interpret the declarations of other agents. If an agent has discovered disposable signals in the past, he knows that other agents have fewer signals to discover. The analysis becomes more complex because agents' decisions now depend on their beliefs about the amount of evidence left, about other agents' beliefs about the amount evidence, their belief about agents' beliefs about their beliefs about the amount of evidence left and so on. While information attrition is likely to have a similar effect as in this paper’s model, confirming this intuition and exploring this question is left for future research.

\subsection{Designing Truth-Dependent Preferences: Oaths, Capitalism, and Popular Juries}\label{ssec-oaths}

A more direct approach to improving mediated learning is to increase the salience of truth-dependent preferences.

This may be achieved by fostering agents' ethical sense, from inculcating an ethical education and culture to strengthening trust in institutions and developing effective vetting and selection processes for key learning responsibilities.

Professional oaths, from the Hippocratic oath in medicine journalistic oaths such as Walter Williams' Journalist's Creed (Farrar (1998)) also aim at eliciting ethical behavior. This paper suggests a positive correlation between the need of oaths in various professions and the severity of information attrition problems in these professions.

Even if a limited fraction of agents is swayed by such oaths, this may in principle suffice for incentivizing truthful behavior by other agents. Studying the mechanisms and behavioral features through which ethical agents can incentivize mediated learning is beyond the scope of this paper, but it is easy to conceive of simple examples: suppose that an agent, who is known to truthfully seek and report the truth is commonly know to appear in round $N > 1$. This agent's report provides reliable information, akin to an exogenous public signal about the state of the world, which can be used to incentivize all agents coming in rounds $i<N$.\footnote{Even if agent $N$ has only a small probability $p <1$ of behaving ethically, his report may still be used to incentivize agents in earlier around as long as these agents' rewards and punishments are of order $1/p$.}

Another approach is to design society in a way that increases information mediators' material dependence on the truth, i.e., gives them ``skin in the game.'' Eliciting information from agents about a scientific fact or the social value of a new product or process is easier when the agents stand to gain financially from this information, which may broadly interpreted as capitalistic incentives. Thus interpreted, the theory offers a new perspective on the ``virtue'' of capitalism relative to systems in which agents have low-powered incentives.\footnote{A large literature emphasizes capitalism's ability to reduce moral hazard problems relative to socialistic systems. See Myerson (2007) and Tirole (2006) for a review of relevant papers and corporate-finance models capturing this idea. The question of incentive compatibility and its relation to various economic systems is at the heart of Leonid Hurwicz's development of mechanism design (Hurwicz (1973)). Compared to the literature, this paper is concerned with both moral hazard and adverse selection since the object of mediated learning is to elicit effort from
investigators as well as truthful revelation of investigators' findings.} The theory also emphasizes that violations of the rules of capitalism (or of any system, for that matter) may be difficult to detect and reveal truthfully, and thus suggests a potential tradeoff between the incentives provided within a given a economic or political system and the incentives required to guarantee that the system itself is obeyed.

A final angle to attack mediated learning failures is to democratize the learning process by enlarging the pool of potential information intermediaries. Large pools can increase the alignment---real or perceived---between the intermediaries and society as a whole, in contrast to the baseline model of the paper, in which society's objective is dissociated from the intermediaries'. Large pools of intermediaries are conceivable when the expertise required to learn the fact of interest is limited. The institution of popular juries may be viewed as one application, which trades off intermediaries' expertise with their representativeness of a more global and diffuse body of stakeholders.

\newpage

\newpage

\appendix

\setlength{\parindent}{0pt}

\section{Proof of Theorem~\ref{the-main-result}}\label{sec-proof-th1}

For any $i,k\geq 1$, let $F^k_i = Pr(|S_i| \geq k\enskip |\enskip m_1^{i-1})$ denote the probability that there remain at least $k$ signals to discover at the beginning of round $i$ given past reports $m_1^{i-1}$. The prior probability $F^k$ that $S$ contains at least $k$ signals at the beginning of the investigation process satisfies $F^k= F^k_1$.

\subsection{Preliminary Results}\label{ssec-prelims}

Agents' decisions are invariant with respect to a uniform translation in their gross utility (i.e., the utility that they get before taking into account their effort cost). Therefore, we can assume without loss of generality that their gross utility takes values in some interval $[0,R]$. Moreover, since $R$ is an upper bound on payoffs, it can always be increased to guarantee that $R > c$, which we will assume without loss of generality.

%Given some equilibrium and integers $k$ and $i$, let $f^k_i$ denote the equilibrium probability that there remain $k$ signals to discover conditional on the reports $m_1^{i-1}$---the dependence on $m_1^{i-1}$ is omitted from the notation for simplicity. Recalling that $F^k_i$ denotes the probability that there remains at {\em least} $k$ signals given $m_1^{i-1}$, we have$F^k_i = \sum_{k'\geq k} f^k_i$.

%Given any equilibrium and history $m_1^{i-1}$, we can view the continuation equilibrium from round $i$ onwards as an equilibrium of a modified environment where the initial beliefs are described by the conditional distribution given $m_1^{i-1}$ and the compensation functions are defined for all $j\geq i$ by $\tilde V_j(m_i, m_{i+1},\ldots) = V_j(m_1^{i-1}, m_i,\ldots)$. In the statement to follow, an ``equilibrium'' means a continuation equilibrium continuation equilibrium with an (informative) equilibrium, by resetting the index.

Let $\gamma_i$ denote the probability that $i$ works given $m_1^{i-1}$.

\begin{lemma}\label{lem-start} $\gamma_i > 0$ only if $F^1_i\geq \frac{c}{R} > 0$.\end{lemma}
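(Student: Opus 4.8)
The plan is to compare, at any history $m_1^{i-1}$ at which agent~$i$ works with positive probability, $i$'s expected continuation payoff from working with that from shirking, and to show that working can raise this payoff by at most $\lambda F^1_i R \le F^1_i R$; sequential rationality of the work/shirk choice then forces $c \le F^1_i R$, i.e.\ $F^1_i \ge c/R > 0$. Throughout I use the normalization of Section~\ref{ssec-prelims} that $V_i$ is valued in $[0,R]$ (with, if convenient, $R>c$). Write $u^{\mathrm{sh}}$ for $i$'s expected value of $V_i$ conditional on shirking and then reporting according to the equilibrium, and $u^{\mathrm{wk}}$ for the analogous quantity conditional on working; since $\gamma_i>0$, weak PBE requires $u^{\mathrm{wk}}-c\ge u^{\mathrm{sh}}$, so the whole content of the lemma is the inequality $u^{\mathrm{wk}}-u^{\mathrm{sh}}\le \lambda F^1_i R$.

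To prove that inequality I would condition on the outcome of $i$'s information‑seeking stage. With probability $\lambda F^1_i$ --- the event $\{S_i\neq\ep\}$, of probability $F^1_i$, together with a successful draw, of probability $\lambda$ --- agent~$i$ discovers a signal, and I would bound $i$'s payoff on this event crudely by $R$, contributing at most $\lambda F^1_i R$ to $u^{\mathrm{wk}}$. On the complementary ``no discovery'' event $\mathcal N$, no signal is removed, so $S_{i+1}=S_i$; and since later players never observe whether $i$ worked, the continuation from round $i+1$ onward --- hence the map from $i$'s report and the residual randomness to the realized value of $V_i$, call it $\Phi(m,\cdot)$ for a report $m$ --- is exactly the one faced after shirking. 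Because the discovery coin is independent of $\omega$, of $S_i$, and of future randomization, for each fixed $m$ one gets
\[
\mathbb{E}\big[\Phi(m,\cdot)\,\ind_{\mathcal N}\big] = \mathbb{E}\big[\Phi(m,\cdot)\,\ind_{S_i=\ep}\big] + (1-\lambda)\,\mathbb{E}\big[\Phi(m,\cdot)\,\ind_{S_i\neq\ep}\big] \le \mathbb{E}\big[\Phi(m,\cdot)\big] \le u^{\mathrm{sh}},
\]
the first inequality using $V_i\ge 0$ and the second using that $u^{\mathrm{sh}}$ is the value of $i$'s best post‑shirking report. Averaging over $i$'s (possibly mixed) report on $\mathcal N$ shows that the $\mathcal N$‑part of $u^{\mathrm{wk}}$ is at most $u^{\mathrm{sh}}$, and adding the two parts yields $u^{\mathrm{wk}}\le u^{\mathrm{sh}}+\lambda F^1_i R$.

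The one delicate point --- and the step I expect to demand the most care --- is that ``working and finding nothing'' leaves $i$ with a posterior over $S_i$ that is \emph{strictly more pessimistic} than the shirk posterior (it over‑weights $\{S_i=\ep\}$), so $i$'s continuation value \emph{conditional on} $\mathcal N$ need not be dominated by $u^{\mathrm{sh}}$ in isolation. What rescues the argument is that this conditional value enters $u^{\mathrm{wk}}$ only after multiplication by $\Pr(\mathcal N)\le 1$, and that $V_i\ge 0$ lets one discard the factor $1-\lambda$ in the display above; everything else is bookkeeping. (The $\lambda$ is slack: the argument in fact gives $F^1_i\ge c/(\lambda R)$, but only the weaker bound is used downstream.)
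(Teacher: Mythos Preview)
Your argument is correct. The decomposition differs from the paper's: the paper conditions on the \emph{state} $\{S_i=\ep\}$ versus $\{S_i\neq\ep\}$, observing that conditional on $S_i=\ep$ working and shirking yield identical continuation distributions, then bounding the $\{S_i\neq\ep\}$ contribution by $F^1_i R$ for work and by $0$ for shirk. You instead condition on the \emph{observation} (discovery versus $\mathcal N$), which forces you to confront the posterior subtlety you flag---and you resolve it neatly by keeping the $\ind_{\mathcal N}$ unnormalized and using $V_i\ge 0$ to absorb the $(1-\lambda)$ factor. The paper's route is a touch cleaner because conditioning on the state sidesteps the posterior issue entirely; your route, in exchange, delivers the sharper bound $F^1_i\ge c/(\lambda R)$ that you note at the end. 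One small wording point: you define $u^{\mathrm{sh}}$ as the \emph{equilibrium} shirking value but then use it as the \emph{optimal} shirking value in the step $\mathbb E[\Phi(m,\cdot)]\le u^{\mathrm{sh}}$; just take $u^{\mathrm{sh}}=\max_{m}\mathbb E[\Phi(m,\cdot)]$ from the start (as the paper does with $V_i^*$) and both uses go through.
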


\begin{proof} Let $V^w_i$ denote $i$'s expected gross utility if he works. $V^w_i$ may be decomposed in terms of $i$'s expected gross utility $\bar V_i$ if he works {\em and} there exists some signal left to be found, and his expected gross utility $V^{w,\ep}_i$ if he works but there is no signal left to be found ($S_i = \ep$):
\[V^w_i = F^1_i \bar V_i + (1-F^1_i) V^{w,\ep}_i.\]
Conditional on $S_i=\ep$, $i$'s expected gross utility if he works is the same as his expected gross utility $V^{s,\ep}_i$ if he shirks and uses the same reporting strategy as he does after working and finding nothing: conditional on $i$'s report (whatever it is), the distribution of reports by subsequent agents is identical since there is no signal left to be found. Therefore, $V^{w,\ep}_i = V^{s,\ep}_i$. Furthermore, we also have $\bar V_i\leq R$ since $R$ is the maximum possible gross utility.

Therefore, $i$'s net utility $U^w_i$ from working, including the cost of working, satisfies
\[U^w_i\leq F^1_i R + (1-F^1_i) V^{s,\ep}_i- c.\]
Similarly, $i$'s utility $U^s_i$ from shirking satisfies
\[U^s_i\geq F^1_i \times 0 + (1-F^1_i) V^{s,\ep}_i = (1-F^1_i) V^{s,\ep}_i\]
where the inequality comes from the fact that 0 is a lower bound on $i$'s realized gross utility. Comparing the previous two equations shows that shirking strictly dominates working if $F^1_i R - c < 0.$\hfill$\blacksquare$
\end{proof}

Given any round $i$ and report history $m_1^{i-1}$ such that $\gamma_i >0$, let
\bit
\vspace{-.4cm}
\item $V^*_i$ denote $i$'s maximal expected gross utility if he shirks, where the maximum is taken over all possible messages $m_i\in M$;
\item $f^0_i = 1-F^1_i$ denote the probability that $S_i = \ep$ at the beginning of round $i$.
\eit

\begin{lemma}\label{lem-compensation-empty} Suppose that $\lambda < 1$. If $i$ works and finds nothing and sends message $m_i$, his expected gross utility  $V^{w}_i(\ep,m_i)$ satisfies
\[V^{w}_i(\ep,m_i) \leq V^*_i + \frac{f^0_i R}{1-\lambda}\]
for all $m_i$.
\end{lemma}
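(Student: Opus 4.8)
The plan is to condition on the event that $i$ works and discovers nothing, and to split it according to whether there was in fact no signal left ($S_i=\ep$) or whether there was one but the discovery failed (which, given $S_i\ne\ep$, happens with probability $1-\lambda$). Write $V^{s,\ep}_i(m_i)$ and $V^{s,\neq\ep}_i(m_i)$ for $i$'s expected gross utility when he shirks and reports $m_i$, conditional on $S_i=\ep$ and on $S_i\ne\ep$ respectively, and set $V^s_i(m_i)=f^0_i V^{s,\ep}_i(m_i)+F^1_i V^{s,\neq\ep}_i(m_i)$, so that $V^s_i(m_i)\le V^*_i$ by definition of $V^*_i$. First I would apply Bayes' rule, using that a working agent discovers a signal with probability exactly $\lambda$ whenever $S_i\ne\ep$ --- independently of the length and content of $S_i$ and of $\omega$ --- to get that the posterior probability of $S_i=\ep$ given that $i$ worked and found nothing is $q_i = f^0_i/\bigl(f^0_i+(1-\lambda)F^1_i\bigr)$.

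The heart of the argument is to identify the two conditional continuation payoffs with their shirking counterparts. Conditional on a fixed report $m_i$, the joint distribution of $(\omega,S_{i+1})$, and hence of all subsequent play, after ``$i$ works, finds nothing, and $S_i=\ep$'' coincides with that after ``$i$ shirks and $S_i=\ep$'': in both cases $S_{i+1}=S_i=\ep$ and the extra conditioning is vacuous. This is exactly the observation $V^{w,\ep}_i=V^{s,\ep}_i$ already used in the proof of Lemma~\ref{lem-start}. Likewise, conditional on $m_i$ and on $S_i\ne\ep$, the event ``$i$ worked but failed to find a signal'' is uninformative about $(\omega,S_i)$, and $S_{i+1}=S_i$ exactly as under shirking, so the continuation payoff there equals $V^{s,\neq\ep}_i(m_i)$. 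Consequently $V^{w}_i(\ep,m_i)=q_i V^{s,\ep}_i(m_i)+(1-q_i)V^{s,\neq\ep}_i(m_i)$.

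It then remains to compare this convex combination with $V^s_i(m_i)$. Since the denominator of $q_i$ is at most $1$ we have $q_i\ge f^0_i$, hence $q_i-f^0_i\ge 0$, and since both $V^{s,\ep}_i(m_i)$ and $V^{s,\neq\ep}_i(m_i)$ lie in $[0,R]$,
\[ V^{w}_i(\ep,m_i)-V^s_i(m_i) \;=\; (q_i-f^0_i)\bigl(V^{s,\ep}_i(m_i)-V^{s,\neq\ep}_i(m_i)\bigr) \;\le\; (q_i-f^0_i)\,R . \]
A one-line computation gives $q_i-f^0_i=f^0_i\,\lambda F^1_i/\bigl(f^0_i+(1-\lambda)F^1_i\bigr)\le f^0_i/(1-\lambda)$ (with the case $F^1_i=0$ immediate from $V^{w}_i(\ep,m_i)\le R$), and combining this with $V^s_i(m_i)\le V^*_i$ yields the bound. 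I expect the only genuinely delicate step to be the continuation-payoff identification in the second paragraph: one has to verify that neither ``there was no signal'' nor ``the search failed'' conveys any information about $\omega$ or about the remaining supply of evidence once the report $m_i$ is held fixed, which rests squarely on the assumption that the per-round discovery probability is the constant $\lambda$, independent of $|S_i|$, of which element of $S_i$ would be found, and of the state.
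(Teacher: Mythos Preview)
Your proof is correct and follows essentially the same route as the paper: both compute the Bayesian posterior over $S_i$ after ``work and find nothing'' and compare the resulting continuation payoff to the shirking payoff $V^s_i(m_i)\le V^*_i$, arriving at the identical intermediate bound $\frac{f^0_i\lambda F^1_i}{f^0_i+(1-\lambda)F^1_i}\,R$ before relaxing the denominator to $1-\lambda$. The only cosmetic difference is that the paper tracks the full posterior $\Delta_i^\ep(S'')$ over every sequence $S''$ and sums $V_i(m_i,S'')(\Delta_i^\ep(S'')-\Delta_i(S''))$, whereas you collapse this to the two events $\{S_i=\ep\}$ and $\{S_i\ne\ep\}$---which is legitimate precisely because, as you note, failure is uninformative about $S_i$ once one conditions on $S_i\ne\ep$; this makes your write-up slightly more compact but does not change the argument.
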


% Rm: G is independent of k

\begin{proof} For any sequence $S''$ of signals, let $\Delta_i(S'')$ denote the probability that $S_i =S''$ conditional on report history $m_1^{i-1}$ and $\Delta_i^\ep(S'')$ denote the probability that $S_i = S''$ conditional on $i$ working and finding nothing. Bayesian updating implies that, for any $S''\neq \ep$,
\[\Delta_i^\ep(S'') = \Delta_i(S'') \frac{(1-\lambda)}{(1-f^0_i)(1-\lambda) + f^0_i}\]
and, for $S'' = \ep$,
\[\Delta_i^\ep(\ep) = \Delta_i(\ep)\frac{1}{(1-f^0_i)(1-\lambda) +f^0_i}.\]
This implies that
\be\label{eq-new-delta} \Delta_i^\ep(S'') - \Delta_i(S'') = - \frac{\lambda f^0_i \Delta_i(S'')}{(1-f^0_i)(1-\lambda) + f^0_i}\ee
for $S''\neq \ep$ and
\be\label{eq-new-delta-ep} \Delta_i^\ep(\ep) - \Delta_i(\ep) = \frac{\lambda (1-f^0_i) \Delta_i(\ep)}{(1-f^0_i)(1-\lambda) + f^0_i}.\ee
Let $V_i(m_i,S'')$ denote $i$'s expected gross utility conditional on $i$ producing evidence $m_i$  and on $S_{i+1} = S''$. Notice that $m_1^i = (m_1^{i-1}, m_i)$ and $S_{i+1}$ completely determine the distribution of reports $\{m_j\}_{j>i}$. Therefore, $V_i(m_i,S'')$ is the same regardless of whether $i$ has worked or shirked. Agent $i$'s expected gross utility conditional on i) working, ii) finding no signal, and iii) producing message $m_i$, is
\[V_i^w(\ep, m_i) = \sum_{S''\in \mc S} V_i(m_i,S'') \Delta_i^\ep(S''),\]
whereas his expected gross utility if $i$ shirks and sends message $m_i$ is
\[V_i^s(m_i) = \sum_{S''\in \mc S} V_i(m_i,S'') \Delta_i(S'')\]
because $i$ has learned nothing from shirking and thus holds the same belief as his prior belief at the beginning of round $i$. Combining these expressions, we get \be\label{eq-diff-V-delta}V_i^w(\ep, m_i) - V_i^s(m_i) = \sum_{S''\in\mc S} V_i(m_i, S'') (\Delta_i^\ep(S'') - \Delta_i(S'')).\ee
Since $V_i(m_i, S'')\in [0,R]$ for all $m_i$ and $S''$, combining~\eqref{eq-diff-V-delta} with~\eqref{eq-new-delta} and~\eqref{eq-new-delta-ep} yields
\[V_i^w(\ep, m_i) - V_i^s(m_i) \leq \frac{R\Delta_i(\ep)\lambda(1-f^0_i)}{(1-f^0_i)(1-\lambda) +f^0_i}.\]
Since $\lambda < 1$, the denominator is bounded below by $1-\lambda$. Since $\Delta_i(\ep) = f^0_i$, the numerator is bounded above by $Rf^0_i$. This yields
\[V_i^w(\ep, m_i) \leq V_i^s(m_i) + f^0_i \frac{R}{1-\lambda}\leq V^*_i + f^0_i \frac{R}{1-\lambda},\]
which proves the lemma. Intuitively, this results means that if $f^0_i$ is negligible relative to $(1-\lambda)$, then $i$'s expected gross utility after working and finding nothing cannot be much higher than if $i$ had shirked, because finding nothing in this case merely reveals that $i$ was unlucky and otherwise conveys little else information.\hfill$\blacksquare$\end{proof}

For each round $i$ and subset $\tilde M_i\subset M$ of messages, we introduce several probabilities conditional on $m_1^{i-1}$.
\bit
\vspace{-.4cm}
\item $\gamma_i(\tilde M_i)$: probability that $i$ produces a report in $\tilde M_i$ conditional on working;
\item $g_i(\tilde M_i)$: probability that $i$ finds a signal and produces a report in $\tilde M_i$ conditional on working;
\item $d_i(\tilde M_i)$: probability that $i$ finds no signal and produces a report in $\tilde M_i$ conditional on working.
\eit
Finally, let $M_i^+$ denote the set of messages $m_i$ that are followed by an informative continuation equilibrium at round $i+1$.

\begin{lemma}\label{lem-compensation-bound-mixed} Agent $i$'s expected gross utility conditional on working has the following upper bound. If $\lambda < 1$, then
\[V^w_i\leq V^*_i + d_i(M^+_i) \frac{f^0_i R}{1-\lambda}+ g_i(M^+_i)R.\]
If $\lambda = 1$, then
\[V^w_i\leq V^*_i + d_i(M^+_i)  R + g_i(M^+_i) R.\]
\end{lemma}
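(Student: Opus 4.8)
The plan is to split agent~$i$'s expected gross utility $V^w_i$ from working into pieces indexed by (i)~whether $i$ discovers a signal and (ii)~whether the message $m_i$ that $i$ sends belongs to $M^+_i$, and to bound each piece. The substantive input is the following claim about messages outside $M^+_i$: if $m_i\notin M^+_i$, then $i$'s expected gross utility conditional on sending $m_i$ is the same whether $i$ worked (and found a signal, or found nothing) or shirked, and this common value is at most $V^*_i$. Indeed, by definition $m_i\notin M^+_i$ means the continuation equilibrium in round $i+1$ after the public history $(m_1^{i-1},m_i)$ is uninformative, so no agent in a round $j>i$ works with positive probability; hence the conditional distribution of the future reports $(m_j)_{j>i}$ given $(m_1^{i-1},m_i)$ does not depend on the realized remaining sequence $S_{i+1}$. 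Since $i$ is non-ethical, $V_i$ is a function of messages only, so the gross utility conditional on $(m_1^{i-1},m_i)$ and $S_{i+1}$ does not depend on $S_{i+1}$; therefore $i$'s conditional expected gross utility given that he sends $m_i$ is the same under all three scenarios (shirk; work and find a signal; work and find nothing), and in particular equals $V^s_i(m_i)\le V^*_i$.

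With this claim in hand, I would apply the law of total expectation conditional on $i$ working. On the event $\{m_i\notin M^+_i\}$, of probability $1-\gamma_i(M^+_i)=1-g_i(M^+_i)-d_i(M^+_i)$, the gross utility is at most $V^*_i$. On the event $\{m_i\in M^+_i$ and $i$ found a signal$\}$, of probability $g_i(M^+_i)$, I bound the gross utility crudely by $R$. On the event $\{m_i\in M^+_i$ and $i$ found nothing$\}$, of probability $d_i(M^+_i)$, I invoke Lemma~\ref{lem-compensation-empty} when $\lambda<1$ to bound the gross utility by $V^*_i+f^0_iR/(1-\lambda)$, and I use the crude bound $R$ when $\lambda=1$. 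Summing, the coefficient of $V^*_i$ is $(1-g_i(M^+_i)-d_i(M^+_i))+d_i(M^+_i)=1-g_i(M^+_i)\le 1$; since gross utilities are normalized to $[0,R]$ we have $V^*_i\ge 0$, so this coefficient can be replaced by~$1$. This gives $V^w_i\le V^*_i+d_i(M^+_i)f^0_iR/(1-\lambda)+g_i(M^+_i)R$ when $\lambda<1$ and $V^w_i\le V^*_i+d_i(M^+_i)R+g_i(M^+_i)R$ when $\lambda=1$, as claimed.

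The calculation in the second paragraph is routine bookkeeping; the step that needs care is the claim in the first paragraph, namely that sending a message outside $M^+_i$ makes working payoff-equivalent to shirking. This requires formalizing ``uninformative continuation equilibrium'' as ``no agent in rounds $>i$ works,'' and then using that $i$'s work/shirk decision and any signal $i$ observes are private, so the continuation play---and hence the conditional law of $(m_j)_{j>i}$ given the public history---is literally the same object in all three scenarios. The only channel through which that law could depend on $S_{i+1}$ is agents who condition their reports on discovered signals, and along an uninformative continuation there are none; it is also worth noting that $S$ is not altered in rounds $>i$ along such a continuation, which is what makes the dependence vanish exactly rather than approximately.
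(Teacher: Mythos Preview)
Your proof is correct and follows essentially the same approach as the paper: decompose $V^w_i$ according to whether $m_i\in M^+_i$ (and, if so, whether a signal was found), use the ``uninformative continuation'' argument to bound the $m_i\notin M^+_i$ piece by $V^*_i$, apply Lemma~\ref{lem-compensation-empty} to the ``found nothing'' piece when $\lambda<1$, and use the crude bound $R$ elsewhere. The only cosmetic difference is that the paper first splits into $M^+_i$ versus $M^-_i$ and then sub-splits the $M^+_i$ term, whereas you do the three-way split in one step; the final bookkeeping (your observation that the coefficient of $V^*_i$ is $1-g_i(M^+_i)\le 1$ versus the paper's $\gamma_i(M^-_i)+d_i(M^+_i)\le 1$) is the same inequality.
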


\begin{proof}
For each $m_i\in M$, let $V^w_i(m_i)$ denote $i$'s expected gross utility conditional on working and sending message $m_i$ and $M^-_i$ denote the set of messages $m_i$ after which no $j>i$ ever works, so that $M = M^+_i \cup M^-_i$ and $M^+_i\cap M^-_i = \ep$.

We have
\be\label{eq-newCw}V^w_i = \sum_{m_i\in M^-_i} \gamma_i(m_i) V^w_i(m_i) + \sum_{m_i\in M^+_i} \gamma_i(m_i) V^w_i(m_i).\ee

For the first term, note that $i$'s expected utility conditional on reporting $m_i$ and on no $j>i$ ever producing real evidence does not depend on whether $i$ worked or shirked: either way, the distribution of the reports $\{m_j\}_{j>i}$ is independent of the set of signals that remain in the case. Letting, as in the previous lemma, $V^s_i(m_i)$  denote $i$'s expected gross utility conditional on shirking and sending message $m_i$, we thus have $V^w_i(m_i) = V^s_i(m_i)$ for all $m_i\in M^-_i$. Since $V^*_i = \max_{m_i\in M} V^s_i(m_i)$, the first term in~\eqref{eq-newCw} is bounded above by $\gamma_i(M_i^-) V^*_i$.

For the second term, we have $\gamma_i(m_i) = d_i(m_i) + g_i(m_i)$ and
\[\gamma_i(m_i) V^w_i(m_i) \leq d_i(m_i) V^w_i(\ep, m_i) + g_i(m_i) R,\]
where we used the fact that $i$'s expected gross utility conditional on working, finding a signal, and reporting $m_i$ is bounded by $R$.

Combining these observations yields
\be\label{eq-newCw-2}V^w_i \leq \gamma_i(M_i^-) V^*_i + g_i(M_i^+) R + \sum_{m_i\in M^+_i} d_i(m_i) V^w_i(\ep, m_i).\ee

If $\lambda < 1$, Lemma~\ref{lem-compensation-empty} implies that $V^w_i(\ep,m_i)\leq V^*_i + \frac{f^0_i R}{1-\lambda}$. Summing over all $m_i\in M^+_i$, we get
\be\label{eq-delta-term}\sum_{m_i\in M^+_i} d_i(m_i) V^w_i(\ep, m_i)\leq d_i(M_i^+) V^*_i + d_i(M^+_i)\frac{f^0_i R}{1-\lambda}.\ee
Since $\gamma_i(M^-_i) + d_i(M^+_i) \leq \gamma_i(M^-_i) + \gamma_i(M^+_i)  = 1$, combining~\eqref{eq-newCw-2} and \eqref{eq-delta-term} proves the lemma when $\lambda < 1$.

If $\lambda =1$, using in~\eqref{eq-newCw-2} the fact that $V^w_i(\ep,m_i)$ is bounded above by $R$ directly proves the lemma.\hfill$\blacksquare$\end{proof}

For each round $i$, we define another set of probabilities conditional on $m_1^{i-1}$:
\bit
\vspace{-.4cm}
\item $\beta_i$: probability that $i$ discovers a signal (before observing whether $i$ works, i.e., viewed from the beginning of round $i$);
\item $\beta_i(\tilde M_i)$: probability that $i$ discovers a signal and sends a message in $\tilde M_i$;
\item $\alpha_i(\tilde M_i)$: probability that $i$ shirks and sends a message  in $\tilde M_i$;
\item $\delta_i(\tilde M_i)$: probability that $i$ produces a report in $\tilde M_i$ conditional on working {\em and} finding no signal;\footnote{Note that $\delta_i(\tilde M_i)\geq d_i(\tilde M_i)$, where $d_i(\tilde M_i)$ was defined before Lemma~\ref{lem-compensation-bound-mixed}.}
\item $F^k_{i+1}(m_i)$: probability that there remain at least $k$ signals at the beginning of round $i+1$ given reports $m_1^i =(m_1^{i-1}, m_i)$.
\eit

% REM: we already used C() for the compensation!!!

\begin{proposition}\label{pro-general-mixed} Let $C(\lambda) = 2R/(c\lambda(1-\lambda))$ for $\lambda <1$ and $C(1) = 2R/c$. The following inequality holds for all constants $C\geq C(\lambda)$, round $i$, and integer $k\geq 1$ such that $F^k_i > C F^{k+1}_i$:
\be\label{eq-saver}\beta_i \leq C \frac{\mb E_i\left[(F^k_i - F^k_{i+1}(m_i))1_{m_i\in M^+_i}\right]}{F^k_i - C F_i^{k+1}}.\ee
\end{proposition}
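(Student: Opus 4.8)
The plan is to pair the incentive constraint for \emph{working} with an exact accounting of how agent $i$'s action and report move the survival probability $F^k$, and then let the hypothesis $F^k_i>CF^{k+1}_i$ close the estimate. Fix a round $i$, a history $m_1^{i-1}$, a level $k$, and a constant $C\ge C(\lambda)$ with $F^k_i>CF^{k+1}_i$ (so $F^k_i>0$). If $\gamma_i=0$ the claim is immediate: no agent works in round $i$, so $S_{i+1}=S_i$ and a shirking agent's report is independent of $S_i$ given $m_1^{i-1}$, hence $F^k_{i+1}(m_i)=F^k_i$ on the support of $m_i$ and both sides of~\eqref{eq-saver} vanish. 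So assume $\gamma_i>0$. Write $W$ for ``$i$ works'' and $D$ for ``$i$ discovers a signal'' (so $D\subseteq W$ and $\beta_i=\gamma_i\lambda F^1_i$), and set $x_i:=Pr(W,D^c,m_i\in M^+_i\mid m_1^{i-1})=\gamma_i d_i(M^+_i)$ and $y_i:=Pr(D,m_i\in M^+_i\mid m_1^{i-1})=\gamma_i g_i(M^+_i)=\beta_i(M^+_i)$. Since $\gamma_i>0$, working must weakly beat the best shirking deviation, $V^w_i\ge V^*_i+c$; combined with Lemma~\ref{lem-compensation-bound-mixed} and multiplied by $\gamma_i$, this gives (with $f^0_i=1-F^1_i$)
\[
\gamma_i c\ \le\ x_i\,\frac{f^0_i R}{1-\lambda}+y_i R\ \ (\lambda<1),\qquad \gamma_i c\ \le\ (x_i+y_i)R\ \ (\lambda=1).
\]
This is the only place incentive compatibility is used.

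The heart of the argument is an exact formula for the numerator $N:=\mb E_i\!\left[(F^k_i-F^k_{i+1}(m_i))\ind_{m_i\in M^+_i}\right]$. By the tower property, $\mb E_i[F^k_{i+1}(m_i)\ind_{m_i\in M^+_i}]=Pr(|S_{i+1}|\ge k,\,m_i\in M^+_i\mid m_1^{i-1})$, so $N=F^k_i\,Pr(m_i\in M^+_i)-Pr(|S_{i+1}|\ge k,\,m_i\in M^+_i)$, all conditional on $m_1^{i-1}$. I would split the second probability by agent $i$'s action. On the events ``shirk'' and ``work and find nothing'', $S_{i+1}=S_i$ and, crucially, the report is independent of $S_i$ given $m_1^{i-1}$ (the agent has observed nothing, resp.\ only that he worked and found nothing); these pieces therefore equal $Pr(\text{shirk},m_i\in M^+_i)\,F^k_i$ and $x_i\,Pr(|S_i|\ge k\mid W,D^c)$, with $Pr(|S_i|\ge k\mid W,D^c)=(1-\lambda)F^k_i/(1-\lambda F^1_i)$ (the Bayes update already computed in the proof of Lemma~\ref{lem-compensation-empty}). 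On the event ``$D$'', $|S_{i+1}|=|S_i|-1$, contributing $Pr(D,|S_i|\ge k+1,m_i\in M^+_i)$. Since $Pr(m_i\in M^+_i)=Pr(\text{shirk},m_i\in M^+_i)+x_i+y_i$, the shirk terms cancel and
\[
N\ =\ F^k_i\,y_i+\frac{\lambda f^0_i}{1-\lambda F^1_i}\,F^k_i\,x_i-Pr(D,\,|S_i|\ge k+1,\,m_i\in M^+_i)
\]
for $\lambda<1$, while for $\lambda=1$ the ``work and find nothing'' event forces $S_i=\ep$, so $N=F^k_i(x_i+y_i)-Pr(D,|S_i|\ge k+1,m_i\in M^+_i)$. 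In either case the last term is at most $Pr(D,|S_i|\ge k+1)=\gamma_i\lambda F^{k+1}_i$, since the work decision is independent of $S_i$ and discovery has probability $\lambda$ when $S_i\neq\ep$.

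It remains to assemble. Inequality~\eqref{eq-saver} is equivalent to $\beta_i F^k_i\le CN+C\beta_i F^{k+1}_i$. Substituting the lower bound on $N$ and using $\beta_i=\gamma_i\lambda F^1_i$ together with $F^1_i+f^0_i=1$, the hypothesis $CF^{k+1}_i<F^k_i$ gives $\beta_i F^k_i+C\gamma_i\lambda f^0_i F^{k+1}_i=\gamma_i\lambda\bigl(F^1_iF^k_i+Cf^0_iF^{k+1}_i\bigr)<\gamma_i\lambda F^k_i$, so it suffices to prove $\gamma_i\lambda\le C\bigl(y_i+\tfrac{\lambda f^0_i}{1-\lambda F^1_i}x_i\bigr)$ when $\lambda<1$ (resp.\ $\gamma_i\le C(x_i+y_i)$ when $\lambda=1$). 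The incentive inequality yields $\gamma_i\lambda\le\tfrac{\lambda R}{c(1-\lambda)}f^0_i x_i+\tfrac{\lambda R}{c}y_i$ (resp.\ $\gamma_i\le\tfrac{R}{c}(x_i+y_i)$), so matching coefficients and using $1-\lambda F^1_i\le1$ reduces the claim to the scalar bounds $\tfrac{\lambda R}{c}\le C$ and $\tfrac{R}{c(1-\lambda)}\le C$ (resp.\ $\tfrac{R}{c}\le C$), all of which follow from $\lambda,1-\lambda\in[0,1]$, $C\ge C(\lambda)$, and the definitions of $C(\lambda)$ and $C(1)$.

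The step I expect to be the real obstacle is the exact evaluation of $N$; everything else is bookkeeping. It works because of two facts: (i) conditional on shirking, or on working and finding nothing, agent $i$'s report carries no information about $S_i$, which makes the shirk contributions cancel and produces the clean $x_i,y_i$ coefficients; and (ii) the one term without a convenient sign, the discovery-with-$|S_i|\ge k+1$ probability, is harmlessly dominated by $\gamma_i\lambda F^{k+1}_i$. The sole role of the threshold $C(\lambda)$ is to make the two final scalar inequalities hold.
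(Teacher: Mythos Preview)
Your proof is correct and uses the same underlying identity as the paper---the Bayesian accounting
\[
N=F^k_i\,y_i+\frac{\lambda f^0_i}{1-\lambda F^1_i}\,F^k_i\,x_i-Pr(D,\,|S_i|\ge k+1,\,m_i\in M^+_i)
\]
is exactly the sum over $m_i\in M^+_i$ of the paper's equation~\eqref{eq-intermediate-new-use}, rewritten in your $x_i,y_i$ variables---together with Lemma~\ref{lem-compensation-bound-mixed} and the bound on the last term by $\gamma_i\lambda F^{k+1}_i$ (the paper uses the slightly looser $\gamma_i F^{k+1}_i$).

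The route, however, is genuinely more economical. The paper proceeds by a case split on whether $g_i(M^+_i)\gtrless c/(2R)$: in Case~1 it discards the $x_i$-term from the identity and uses $g_i(M^+_i)\ge c/(2R)$ to convert $\beta_i(M^+_i)$ into a bound on $\gamma_i$; in Case~2 it discards the $y_i$-term and instead extracts from Lemma~\ref{lem-compensation-bound-mixed} that $d_i(M^+_i)f^0_i\ge c(1-\lambda)/(2R)$. You avoid the split entirely by keeping \emph{both} terms in the identity and matching them directly against the full Lemma~\ref{lem-compensation-bound-mixed} inequality $\gamma_i c\le x_i f^0_i R/(1-\lambda)+y_i R$. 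After the clean reduction via $F^1_iF^k_i+Cf^0_iF^{k+1}_i\le F^k_i$, this collapses to two scalar checks, $\lambda R/c\le C$ and $R/(c(1-\lambda))\le C$, both immediate from $C\ge C(\lambda)$. The paper's case split is thus an artifact of throwing away one term at a time; your argument shows it is unnecessary.
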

\begin{proof} The right-hand side of~\eqref{eq-saver} is increasing in $C$ over the range of $C$ that satisfy the condition $F^k_i - C F_i^{k+1}>0$. Therefore, if~\eqref{eq-saver} is satisfied for any $C$ such that $F^k_i - C F_i^{k+1}>0$, it is also satisfied for any $C'\geq C$ such that $F^k_i - C' F_i^{k+1}>0$. The proposition thus follows if we show the inequality for $C(\lambda)$.

First, we show that the claim holds if $\beta_i =0$. In this case, $i$ must shirk with probability 1 (if not, Lemma~\ref{lem-start} implies that $S_i$ is nonempty with positive probability and, hence, that $\beta_i>0$). Since~$i$ shirks with probability 1, there is no belief update between rounds $i$ and $i+1$. Therefore, $F^k_i = F^k_{i+1}(m_i)$ for any message $m_i$ that $i$ sends in equilibrium. The right-hand side of~\eqref{eq-saver} is thus equal to zero and~\eqref{eq-saver} is satisfied.

Now suppose that $\beta_i >0$ or, equivalently, that $\gamma_i > 0$: $i$ works with positive probability. We consider two cases, distinguished by whether working is likely to lead to the discovery of a signal and, under $i$'s equilibrium strategy, to a message in $M^+_i$.

{\bf Case 1: $g_i(M^+_i) \geq \frac{c}{2R}$.} By definition, we have
\[\beta_i = \gamma_i \lambda(1-f^0_i)\]
(in particular, $\beta_i \leq \gamma_i$) and
\[\beta_i(M^+_i) = \gamma_i g_i(M^+_i).\]
Since $g_i(M^+_i)\geq c/2R$, this implies that
\be\label{eq-lambda-scale}\beta_i \leq \gamma_i\leq \frac{2R}{c} \beta_i(M^+_i).\ee
Therefore,~\eqref{eq-saver} will follow if we prove that $\beta_i(M^+_i)$ is bounded above by$\frac{\mb E_i\left[(F^k_i - F^k_{i+1}(m_i))1_{m_i\in M^+_i}\right]}{F^k_i - C F_i^{k+1}}$ for $C = 2R/c$.

For each $m_i$ that $i$ may send in equilibrium and $k\geq 1$, Bayesian updating implies that
\be\label{eq-bayes-nonempty-mix} F^k_{i+1}(m_i) = \frac{F^k_i\left(\alpha_i(m_i) + \gamma_i(1-\lambda)\delta_i(m_i)\right)+ \Phi(m_i)}{\alpha_i(m_i) + \gamma_i((1-F^1_i + F^1_i(1-\lambda))\delta_i(m_i) + \beta_i(m_i)},\ee
where $\Phi(m_i)$ is the probability that i) $i$ works, ii) $i$ discovers a signal, iii) $i$ sends report $m_i$, and iv) there remain at least $k$ signals at the beginning of round $i+1$.

Let $p_i(m_i)$ denote the probability that $i$ produces report $m_i$ conditional on $m_1^{i-1}$: $p_i(m_i)$ is the denominator of~\eqref{eq-bayes-nonempty-mix}. Rearranging~\eqref{eq-bayes-nonempty-mix} and simplifying, we have
\be\label{eq-intermediate-new-use}F^k_i(\beta_i(m_i) + \gamma_i (1-F^1_i) \lambda \delta_i(m_i)) = (F^k_i - F^{k+1}_i(m_i))p_i(m_i) + \Phi(m_i)\ee
Since $\gamma_i (1-F^1_i) \lambda \delta_i(m_i)\geq 0$, summing the previous equation over $m_i\in M_i^+$ yields
\be\label{eq-intermediate-beta-bound} F^k_i \beta_i(M_i^+) \leq \mb E_i\left[\left(F^k_i - F^k_{i+1}(m_i)\right) \ind_{m_i\in M_i^+}\right] + \sum_{m_i\in M^+_i} \Phi(m_i).\ee

%In words, the posterior probability that there are at least $k$ pieces left given $m_i$ comes from two components: either $m_i$ was fabricated, in which case we stick to the prior $F^k_i$, or it is real, and in this case there are at least $k$ pieces left only if there were at least $k+1$ pieces left at the beginning of round $i$, the latter event having probability $F^{k+1}_i$ as viewed from the beginning of round~$i$.

Since $\Phi(m_i) = \mb E_i\left[\ind_{|S_i |\geq k+1} \enskip \ind_{\mbox{i works, discovers a signal, and reports $m_i$}}\right]$, we have
\begin{align}\label{eq-bounding-the-above-terms-mixed}
\sum_{m_i\in M^+_i} \Phi(m_i) & \leq \sum_{m_i\in M^+_i}  \mb E_i\left[\ind_{|S_i |\geq k+1} \enskip \ind_{\mbox{i works and reports $m_i$}}\right] \nonumber
\\ & =  \mb E_i\left[\ind_{|S_i |\geq k+1} \enskip \ind_{\mbox{i works and reports $m_i\in M^+_i$}}\right]\nonumber
\\ &\leq \mb E_i\left[\ind_{|S_i |\geq k+1} \enskip \ind_{\mbox{i works}}\right]\nonumber \\ & = F^{k+1}_i\gamma_i,\end{align}
noting, for the last equality, that the event that $i$ works, which has probability $\gamma_i$, depends only on $m_1^{i-1}$ and is thus independent of the event $\{|S_i|\geq k+1\}$ conditional on $m_1^{i-1}$.

Combining this with~\eqref{eq-intermediate-beta-bound} yields
\be\label{eq-important-for-case2}F^k_i \beta(M^+_i)\leq \mb E_i\left[(F^k_i - F^k_{i+1}(m_i))1_{m_i\in M^+_i}\right] + F^{k+1}_i \gamma_i.\ee
Since $g_i(M^+_i)\geq c/2R$, we have $\beta(M^+_i) = \gamma_i g_i(M^+_i)\geq \gamma_i c/2R$. Inequality~\eqref{eq-important-for-case2} then yields
\be\label{eq-intermediate-beta-big}\beta(M^+_i) \leq \frac{1}{F^k_i - 2R/c F^{k+1}_i} \mb E_i[(F^k_i - F^k_{i+1}(m_i))1_{m_i\in M^+_i}].\ee
Combining this with~\eqref{eq-lambda-scale} yields~\eqref{eq-saver} for $C(1) = 2R/c$.\footnote{Note that the proposition's assumption that $F^k_i - C(\lambda) F^{k+1}_i>0$ implies that $F^k_i - \frac{2R}{c} F^{k+1}_i>0$ since $C(\lambda)\geq C(1)$ regardless of $\lambda$.} Since $C(\lambda)\geq C(1)$ for all $\lambda \in (0,1]$, the monotonicity noted at the beginning of the proof yields the desired conclusion for $C(\lambda)$.

{\bf Case 2: $g_i(M^+_i) < \frac{c}{2R}$.} We prove that $\gamma_i$ is bounded above by the right-hand side of~\eqref{eq-saver} for $C = C(\lambda)$. Since $\gamma_i\geq \beta_i$, this will yield the desired conclusion.

Intuitively, in Case 2 the probability of discovering a signal that, together with $i$'s equilibrium message strategy, triggers subsequent work is too low to incentivize $i$ to work. The only way of incentivizing $i$ to work is therefore for him to signal by his message that he found nothing through work. For this to happen, the probability $f_i^0$ that there remains no evidence must be high enough. We will us this fact to obtain a bound on $\gamma_i$.

From Lemma~\ref{lem-compensation-bound-mixed}, if $g_i(M^+_i) < c/2R$, $i$'s utility from working is bounded above by
\[U^w_i = V^w_i-c \leq V^*_i + d_i(M_i^+)\frac{f^0_i R}{1-\lambda} - \frac{c}{2}\]
if $\lambda <1$, and by
\[U^w_i\leq V^*_i + d_i(M_i^+) R - \frac{c}{2}\]
if $\lambda = 1$. Therefore, working is optimal only if $d_i(M_i^+) f^0_i\geq c(1-\lambda)/2R$ when $\lambda <1$ and only if $d_i(M_i^+) \geq c/2R$ when $\lambda = 1$.

%For any fixed value of $\lambda$, this yields a lower bound $\lbar f >0$ for $d_i(M_i^+) f^0_i$.
%Thus suppose that there is a constant $\lbar f > 0$ such that $d_i(M_i^+) f^0_i >\lbar f$.

Summing~\eqref{eq-intermediate-new-use} over $M_i^+$ and using~\eqref{eq-bounding-the-above-terms-mixed} and $f^0_i = 1-F^1_i$ yields
\be\label{eq-int-di-delta}F^k_i \beta_i(M^+_i) +\gamma_i F^k_i \delta_i(M^+_i) f^0_i \lambda \leq \mb E_i\left[(F^k_i - F^k_{i+1}(m_i))1_{m_i\in M^+_i}\right] + F^{k+1}_i \gamma_i.\ee

For $\lambda < 1$, we have $d_i(M_i^+) f^0_i\geq c(1-\lambda)/2R$. Since $\delta_i(m_i)\geq d_i(m_i)$ for all $m_i$ (by definition) and  $F^k_i \beta_i(M^+_i)\geq0$,~\eqref{eq-int-di-delta} implies that
\[\gamma_i \leq \frac{\mb E_i\left[(F^k_i - F^k_{i+1}(m_i))1_{m_i\in M^+_i}\right]}{F^k_i c \lambda (1-\lambda)/2R - F^{k+1}_i}.\]
Multiplying the numerator and denominator of the right-hand side by $C(\lambda)$ yields the result.

For $\lambda =1$, we have $d_i(m_i) = \delta_i(m_i) f^0_i $ for all $m_i$ and hence $\delta_i(M^+_i) f^0_i \lambda = d_i(M^+_i)$, which is greater than $c/2R$  as noted earlier. Therefore,~\eqref{eq-int-di-delta} implies that
\[\gamma_i \leq \frac{\mb E_i\left[(F^k_i - F^k_{i+1}(m_i))1_{m_i\in M^+_i}\right]}{F^k_i (c/2R) - F^{k+1}_i}.\]
Multiplying the numerator and the denominator of the right-hand side by $C(1)$ yields the result.\hfill$\blacksquare$
\end{proof}

\subsection{Proof of Theorem~\ref{the-main-result}}\label{ssec-proof}

The proof proceeds by induction on $k$. Lemma~\ref{lem-start} already proves the claim for $k=1$ with $\lbar F^1 = c/R$. Now suppose that the claim holds for some $k\geq 1$: there exists a threshold $\lbar F^k > 0$ such that any informative continuation equilibrium in round $i$ satisfies $F^k_i\geq \lbar F^k$.\footnote{Note that while the theorem is stated at round 0, it applies equally well to all continuation equilibria since the thresholds $\{\lbar F^k\}_{k\geq 1}$ depend only on the parameters $(R, P, c,\lambda)$, which are constant throughout the game.} We will show that there is a constant $\lbar F^{k+1}>0$ such that the continuation equilibrium can be informative only if $F^{k+1}_i\geq \lbar F^{k+1}$.

For any $\omega\in[0,1]$, let $F^k(\omega) = \inf\{F^k_i\in [0,1]: \gamma_i>0, F^{k+1}_i\leq \omega\}$ where the infimum is taken over all on-path histories and all equilibria of the game, and is by convention equal to 1 if no informative equilibrium exists for which $F^{k+1}_i\leq \omega$. Note that $F^k(\omega)$ is by construction nonincreasing in $\omega$.

Let $\lbar F = \inf\{\omega: F^k(\omega) < 1\}$. $\lbar F$ is the smallest value of $F^{k+1}_i$ for which an informative continuation equilibrium exists (or the infimum of such values, if the infimum is not reached).

Our objective is to show that $\lbar F > 0$. Let $\bar F^k = \lim_{\omega \downarrow \lbar F} F^k(\omega)$, i.e., the right limit of $F^k(\cdot)$ at $\lbar F$. This limit is guaranteed to exist because $F^k(\cdot)$ is nonincreasing. Roughly put, $\bar F^k$ is the smallest probability of having at least $k$ signals among informative equilibria that have the smallest possible probability having at least $k+1$ signals left.

Also let $\ve > 0$ denote any small constant such that $\hat F^k = F^k(\lbar F + G \ve)\geq \frac{\bar F^k}{1+\eta}$, where $G>0$ is a large constant and $\eta>0$ is a small constant determined at the end of the proof independently of $k$.\footnote{For example, one can choose $G$ and $\eta$ so that $\sqrt{G} = \max\{64 R^3/(\lambda c^3), 12 R/c\}$ and $\eta = 1/\sqrt{G}$, as explained at the end of this proof.} Since $\bar F^k$ is the right-limit of $F^k(\cdot)$ at $\lbar F$, such an $\ve$ exists. Moreover, since all $\ve'\in(0,\ve)$ also satisfy the condition, we can choose $\ve$ such that
\be\label{eq-bound-ve-mixed} \ve\leq \frac{1}{2}\left(\frac{\hat F^k}{G}\right)^2.\ee

By definition of $\lbar F$, there exists at least one informative continuation equilibrium for which $F^{k+1}_i\in [\lbar F, \lbar F +\ve]$. Moreover, by definition and monotonicity of $F^k(\cdot)$, there exists an informative continuation equilibrium among those for which $F^k_i \leq \bar F^k + \eta \hat F^k$.

Consider such a continuation equilibrium, which starts at round $i$, say, and suppose by contradiction that
\be\label{eq-induction-hypo} \lbar F  = 0.\ee

Let $\mc A$ denote the event that $F^{k+1}_j\leq G \ve$ for all $j\geq i$.

\begin{lemma}\label{lem-supermart} $i$ assigns probability at least $1-1/G$ to $\mc A$.\end{lemma}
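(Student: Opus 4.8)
The plan is to show that the process $F^{k+1}_j$, restricted to on-path rounds $j \ge i$ with $\gamma_j > 0$, behaves (up to the right constant) like a non-negative supermartingale started at a value $\le \lbar F + \ve$, so that a maximal inequality of Doob type bounds the probability it ever exceeds $G\ve$ by roughly $(\lbar F + \ve)/(G\ve) \le 2/G$ — and then sharpen this to $1/G$ using the specific bound \eqref{eq-bound-ve-mixed} on $\ve$ relative to $\hat F^k$. The engine driving the supermartingale property is Proposition~\ref{pro-general-mixed}: at any round $j$ with $\gamma_j > 0$ and $F^k_j > C F^{k+1}_j$, the probability $\beta_j$ that agent $j$ discovers a signal is controlled by $C\,\mb E_j[(F^k_j - F^k_{j+1}(m_j))\ind_{m_j \in M^+_j}]/(F^k_j - C F^{k+1}_j)$. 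Since discovering a signal is the only way $F^{k+1}$ can increase in expectation (absent discovery there is no attrition and the belief is a martingale), this says expected increases in the ``$\ge k+1$'' survival probability are tied to expected decreases in the ``$\ge k$'' survival probability weighted by $\ind_{m_j \in M^+_j}$.

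**First I would** fix the constant $C$ from Proposition~\ref{pro-general-mixed} and verify that along the relevant trajectory the hypothesis $F^k_j > C F^{k+1}_j$ holds: on the event $\mc A$ we have $F^{k+1}_j \le G\ve$, which by \eqref{eq-bound-ve-mixed} is at most $\tfrac12 (\hat F^k)^2/G \le \tfrac12 \hat F^k \cdot (\hat F^k/G)$, and by the induction hypothesis plus the choice of the continuation equilibrium, $F^k_j$ stays bounded below by a fixed positive multiple of $\hat F^k$ (it cannot drop below $\lbar F^k$, and the equilibrium was chosen with $F^k_i \le \bar F^k + \eta \hat F^k$); for $G$ large this forces $F^k_j \ge C F^{k+1}_j$ comfortably, with a quantitative gap $F^k_j - C F^{k+1}_j \ge \tfrac12 F^k_j$ or so. **Then I would** combine a Bayesian-updating identity for $F^{k+1}$ across round $j$ — expressing $\mb E_j[F^{k+1}_{j+1}(m_j)] - F^{k+1}_j$ in terms of $\beta_j$ and the discovery-contingent updating — with the bound on $\beta_j$ from Proposition~\ref{pro-general-mixed} to obtain
\[
\mb E_j\big[F^{k+1}_{j+1}(m_j)\big] \le F^{k+1}_j + (\text{const})\cdot \mb E_j\big[(F^k_j - F^k_{j+1}(m_j))\ind_{m_j\in M^+_j}\big].
\]
Adding a suitable positive multiple of $F^k_j$ to $F^{k+1}_j$ — i.e. working with $Z_j := F^{k+1}_j + \kappa F^k_j$ for a carefully chosen $\kappa$ and restricted to $M^+_j$-continuation histories — should yield an honest supermartingale: the drift from the $F^{k+1}$ term is paid for by the expected decrease in $\kappa F^k_j$. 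The stopped process $Z_{j\wedge\tau}$, where $\tau$ is the first time $F^{k+1}$ exits $[0, G\ve]$ or work ceases, is then a non-negative supermartingale, and the optional-stopping / maximal inequality gives $\Pr(\tau < \infty \text{ via } F^{k+1} > G\ve) \le Z_i / (G\ve + \kappa\,(\text{lower bound on }F^k))$. Plugging in $Z_i \le (\lbar F + \ve) + \kappa(\bar F^k + \eta\hat F^k)$ and the constraints on $\ve, \eta$ should bring this below $1/G$.

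**The main obstacle** I anticipate is bookkeeping the restriction to $M^+_j$ (the messages followed by an informative continuation equilibrium) correctly: Proposition~\ref{pro-general-mixed} only controls the part of the belief movement that lands in $M^+_j$, whereas on $M^-_j$-messages the continuation is uninformative and one needs to argue separately that $F^{k+1}$ does not jump up there — this should follow because after an $M^-_j$ message no subsequent agent works, so $S$ is never further depleted and beliefs about $|S_j|$ freeze (as in the proof of Lemma~\ref{lem-compensation-bound-mixed}), but making the supermartingale argument survive the event that the process enters an uninformative branch requires care in defining $\tau$ to also stop there. A secondary difficulty is ensuring the constants $G$ and $\eta$ chosen here are consistent with the ``determined at the end of the proof'' promise — in particular that the same $\eta$ works uniformly in $k$ — but the footnote's explicit choice $\sqrt G = \max\{64R^3/(\lambda c^3), 12R/c\}$, $\eta = 1/\sqrt G$ is presumably calibrated precisely so that the displayed inequalities close, and I would simply verify at the end that these values make every estimate above go through.
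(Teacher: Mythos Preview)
Your approach is a substantial detour compared to the paper's argument, and it carries real gaps.

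The paper's proof is a three-line application of Doob's inequality to a \emph{true} martingale, with no recourse to Proposition~\ref{pro-general-mixed}. Define $\bar F^{k+1}_j$ to be the probability, given $m_1^{j-1}$, that there were at least $k+1$ signals remaining \emph{at the beginning of round $i$} (not round $j$). Because this is the conditional expectation of a fixed indicator with respect to an increasing filtration, $\{\bar F^{k+1}_j\}_{j\ge i}$ is a nonnegative martingale with $\bar F^{k+1}_i = F^{k+1}_i \le \ve$ (recall the contradiction hypothesis $\lbar F = 0$). Doob's maximal inequality then gives $\Pr(\sup_{j\ge i}\bar F^{k+1}_j \ge G\ve)\le \ve/(G\ve)=1/G$. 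Since the true number of remaining signals is nonincreasing in $j$, one has $\bar F^{k+1}_j \ge F^{k+1}_j$ for all $j\ge i$, so $\{\sup_j \bar F^{k+1}_j \le G\ve\}\subset \mc A$ and $\Pr(\mc A)\ge 1-1/G$.

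Your plan invokes Proposition~\ref{pro-general-mixed} to manufacture a supermartingale $Z_j = F^{k+1}_j + \kappa F^k_j$, but (i) the key displayed inequality $\mb E_j[F^{k+1}_{j+1}(m_j)] \le F^{k+1}_j + (\text{const})\,\mb E_j[(F^k_j - F^k_{j+1})\ind_{M^+_j}]$ is asserted, not derived --- Proposition~\ref{pro-general-mixed} bounds $\beta_j$, not the drift of $F^{k+1}$, and you never write down the Bayesian identity linking them; (ii) your verification that $F^k_j > C F^{k+1}_j$ holds ``on $\mc A$'' is circular, since $\mc A$ is the event whose probability you are trying to bound (you partially acknowledge this with the stopping time $\tau$, but then must argue the stopped process is a supermartingale up to and including the exit round, which needs care); and (iii) the entire $M^+_j/M^-_j$ bookkeeping you flag as the ``main obstacle'' is simply absent from the paper's argument because it is unnecessary here. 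Proposition~\ref{pro-general-mixed} is indeed the engine of the main proof of Theorem~\ref{the-main-result}, but it enters \emph{after} Lemma~\ref{lem-supermart}, not in its proof. The lemma itself needs only the observation that beliefs about a fixed past event form a martingale dominating $F^{k+1}_j$.
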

\begin{proof} For $j\geq i$, let $\bar F^{k+1}_j$ denote the probability assigned at the beginning of round $j$ (i.e., conditional on $m_1^{j-1}$) that there remain at least $k+1$ signals {\em at the beginning of round $i$} (fixed). The process $\{\bar F^{k+1}_j\}_{j\geq i}$  is nonnegative and bounded above by 1, and it is a martingale by the law of iterated expectations and the fact that $j$'s filtration grows finer as $j$ increases. Moreover, $\bar F^{k+1}_i = F^{k+1}_i\leq \ve$.

Doob's martingale inequality therefore implies that for any $J\geq i$, $Pr(\max_{i\leq j\leq J} \bar F^{k+1}_j\geq G\ve)\leq \frac{\mb E_i[\bar F^{k+1}_i]}{G\ve} \leq 1/G$. The event $\bar{\mc A}_\infty$ defined by $\{\max_{j\geq i} \bar F^{k+1}_j\leq G\ve\}$ is the intersection of the events $\bar{\mc A}_J = \{\max_{i\leq j\leq J} \bar F^{k+1}_j\leq G\ve\}$. Therefore, $Pr(\bar{\mc A}_\infty) = \lim_{J\rightarrow\infty} Pr(\bar{\mc A}_J) \geq 1-1/G$.

Finally, we note that $\bar F^{k+1}_j \geq F^{k+1}_j$ for all $j\geq i$, because the true number of remaining signals only decreases over time and thus whatever signals remained at the beginning of round $i$ must have contained the signals that remain at the beginning of round $j$, so $Pr(\mc A)\geq Pr(\bar{\mc A}_\infty)\geq 1-1/G$.\hfill$\blacksquare$
\end{proof}

Conditional on $\mc A$, $F^{k+1}_j\leq G\ve$ for all $j\geq i$. Moreover, if round $j$ belongs to an informative continuation equilibrium, we have $F^k_j\geq F^k(F^{k+1}_j)\geq F^k(G\ve) = \hat F^k$. Given any positive constant $C$, this implies that for informative continuation equilibrium starting in round $j$,
\be\label{eq-denominator-bound}F^k_j - C F^{k+1}_j \geq \hat F^k - C G\ve \geq \hat F^k/2  >0\ee
provided that $G\geq C$, where the last weak inequality comes from~\eqref{eq-bound-ve-mixed}. The condition $G\geq C$ will be satisfied by fixing $C$ at the level $C(\lambda)$ stated in Proposition~\ref{pro-general-mixed} and then choosing $G$ large enough. (A specific value of $G$ is given at the end of the proof.)

Let $\mc A_j$ denote the event that $F^{k+1}_l\leq G\ve$ for all integers $l$ such that $i\leq l\leq j$. The events $\{\mc A_j\}_{j\geq i}$ form a decreasing sequence that converges to $\mc A$ as $j\rightarrow\infty$. Moreover, $\mc A_j$ is measurable with respect to the information available at the beginning of round $j$.

From Proposition~\ref{pro-general-mixed}, we have
\be\label{eq-big-brackets}\beta_j\leq \mb E_j\left[\frac{C(F^k_j - F^k_{j+1}(m_j))\ind_{m_j\in M^+_j}}{F^k_j - C F_j^{k+1}}\right]\ee
over $\mc A_j$.

Let $\mc B_j$ denote the event that $m_j\in M^+_j$. $\mc B_j$ is adapted to $m_1^j$, i.e., known at the beginning of round $j+1$. Notice that if $\mc B_j$ does {\em not} occur, it means by definition that no $l>j$ ever works (i.e., $m_1^{j}$ is followed by an {\em un}informative continuation equilibrium). This implies that the sequence of events $\{\mc B_j\}_{j\geq i}$ is decreasing path by path and, hence, that the sequence $\{\ind_{\mc B_j}\}_{j\geq i}$ is nonincreasing.

Since $\mc A_j$ is measurable with respect to $m_1^{j-1}$, equations~\eqref{eq-denominator-bound} and~\eqref{eq-big-brackets} imply that for $j\geq i+1$
\[\mb E_{i+1}[\ind_{\mc A_j} \beta_j] \leq \mb E_{i+1}\left[\mb E_j\left[\ind_{\mc A_j} \frac{2C(F^k_j - F^k_{j+1})\ind_{\mc B_j}}{\hat F^k} \right]\right].\]
The law of iterated expectations then implies that
\be\label{eq-way-out} \mb E_{i+1}[\ind_{\mc A_j} \beta_j] \leq \mb E_{i+1}\left[\ind_{\mc A_j} \frac{2C(F^k_j - F^k_{j+1})\ind_{\mc B_j}}{\hat F^k}\right] = \frac{2C}{\hat F^k} \mb E_{i+1}[\ind_{\mc A_j}(F^k_j - F^k_{j+1})\ind_{\mc B_j}].\ee
Summing~\eqref{eq-way-out} over all $j\geq i+1$, we obtain
\be\label{eq-stopping}
\mb E_{i+1}\left[\sum_{j\geq i+1}\ind_{\mc A_j} \beta_j\right] \leq \frac{2C}{\hat F^k} \mb E_{i+1}\left[\sum_{j\geq i+1} \ind_{\mc A_j} \ind_{\mc B_j} (F^k_j - F^k_{j+1}(m_j))\right].\ee
Since the indicator functions on the right-hand side are nonincreasing in $j$ for $j\geq i+1$, path by path, there must be a first (possibly infinite) round $J$ for which the product of indicators is zero:
\[J = \inf\{j\geq i+1 : \ind_{\mc A_j} \ind_{\mc B_j} = 0 \}\]
with the convention that $J = +\infty$ if the set is empty.
In words, $J$ is either the first round in which $F^{k+1}_j$ exceeds $G\ve$, or the {\em last} round at which the continuation equilibrium is informative (which implies that $\gamma_j\geq 0$ for all $j> J$), whichever occurs first.\footnote{$J$ is not a stopping time with respect to the filtration $\{\mc F_j\}_{j\geq i+1}$ generated by public histories $\{m_1^{j-1}\}_{j\geq i+1}$: at the beginning of any round $j$ at which $j$ works with positive probability, it is unknown whether $j$ will be the last round in which the agent works. The proofs below do not use the optional sampling theorem or the strong Markov property.}

Consider first the paths for which $J$ is finite. The argument of $\mb E_{i+1}[\cdot]$ on the right-hand side of~\eqref{eq-stopping} then reduces to \[\sum_{j=i+1}^{J-1} (F^k_j - F^k_{j+1}) = F^k_{i+1} - F^k_{J}.\]

Consider now paths for which $J$ is infinite. In this case, the argument of $E_{i+1}[\cdot]$ on the right-hand side of~\eqref{eq-stopping} is equal to
\[\sum_{j=i+1}^\infty (F^k_j - F^k_{j+1}) = \lim_{J\rightarrow\infty}\sum_{j=i+1}^J (F^k_j - F^k_{j+1}) = \lim_{J\rightarrow\infty}\{F^k_{i+1} - F^k_{J+1}\} = F^k_{i+1} - \lim_{J\rightarrow\infty} F^k_J.\]
Notice that the limit is well defined because $F^k_J$ is a nonnegative supermartingale.\footnote{The argument is similar to the one used to prove Lemma~\ref{lem-supermart}. For any fixed $j$, let $\bar F^{k+1}_l$ denote the probability assigned by $l\geq j$ to there remaining at least $k+1$ signals {\em at the beginning of round $j$}. The process $\{\bar F^{k+1}_l\}_{l\geq j}$  is a martingale in $j$, by the law of iterated expectations and the fact that that $l$'s filtration grows finer as $l$ increases. Moreover, $\bar F^{k+1}_l \geq F^{k+1}_l$ for all $l\geq j$, because the actual number of remaining signals only decreases over time. Therefore, we have $F^{k+1}_j = \bar F^{k+1}_j = E_j[\bar F^{k+1}_{j+1}]\geq E_j[F^{k+1}_{j+1}]$. This, together with the fact that $F^{k+1}_l$ is uniformly bounded and measurable with respect to the information at the beginning round $l$, shows that it is a supermartingale.} For notational consistency, we will call this limit $F^k_{J}$, regardless of whether $J$ is finite.

Combining these observations with~\eqref{eq-stopping}, we conclude that
\[\mb E_{i+1}\left[\sum_{j>i}\ind_{\mc A_j} \beta_j\right] \leq \frac{2C}{\hat F^k} \left(F^k_{i+1} - \mb E_{i+1}[F^k_{J}]\right).\]
We have $\beta_j = \gamma_j F^1_j \lambda$. Moreover, Lemma~\ref{lem-start} shows that $\gamma_j>0$ only if $F^1_j \geq c/R$. This implies\footnote{The inequality clearly holds if $\gamma_j =0$.} that $\gamma_j \leq g \beta_j$ where $g = \frac{R}{\lambda c}$ and hence that
\be\label{eq-mid}\mb E_{i+1}\left[\sum_{j\geq i+1}\ind_{\mc A_j} \gamma_j\right] \leq \frac{2Cg}{\hat F^k}  \left(F^k_{i+1} - \mb E_{i+1}[F^k_{J}]\right).\ee

Let $\mc Z$ denote the event that at least one agent $j>i$ works and $\pi_{i+1}(m_i) = Pr_{i+1}(\mc A\cap \mc Z)$, i.e., the probability that $\mc A$ and $\mc Z$ both occur conditional on $m_1^i$. We have
\begin{align*}
\pi_{i+1}(m_i) & = Pr_{i+1}\left(\ind_{\mc A} \sum_{j>i} \ind_{\mbox{$j$ works}} \geq1\right)
\\ &  \leq \mb E_{i+1}\left[\ind_{\mc A} \sum_{j>i} \ind_{\mbox{$j$ works}} \right]
\\ & = \sum_{j>i} \mb E_{i+1}\left[\ind_{\mc A}\ind_{\mbox{$j$ works}} \right]
\\ & \leq \sum_{j>i} \mb E_{i+1}\left[\ind_{\mc A_j}\ind_{\mbox{$j$ works}} \right]
\\ & =\sum_{j>i} \mb E_{i+1} \left[\mb E_j\left[\ind_{\mc A_j}\ind_{\mbox{$j$ works}} \right]\right]
\\ & =\sum_{j>i} \mb E_{i+1} \left[\ind_{\mc A_j} \mb E_j\left[\ind_{\mbox{$j$ works}} \right]\right]
\\ & = \sum_{j>i} \mb E_{i+1}\left[\ind_{\mc A_j} \gamma_j\right].
\end{align*}
The first equality comes from the fact that $\mc Z$ is identical to the event $\{\sum_{j>i} \ind_{\mbox{$j$ works}} \geq 1\}$. The first inequality comes from the fact that $\ind_{\mc A} \sum_{j>i} \ind_{\mbox{$j$ works}}$ is nonnegative and integer valued, so that its expectation exceeds the probability that it is strictly positive. The second equality is an application of Tonelli's theorem. The second inequality comes from the fact that $\mc A\subset \mc A_j$ and, hence, $\ind_{\mc A}\leq \ind_{\mc A_j}$. The next equality comes from the law of iterated expectations and the next one comes from the fact that $\mc A_j$ is measurable with respect to the information at the beginning of round $j$. The last equality comes from the definition of $\gamma_j$.

From~\eqref{eq-mid} and Tonelli's theorem, this implies that
\be\label{eq-really-final} \pi_{i+1}(m_i)\leq \frac{2Cg}{\hat F^k} \left(F^k_{i+1} - \mb E_{i+1} [F^k_{J}]\right).\ee

Let $F^{k,r}_i(m_i)$ denote the probability that there are at least $k$ signals left conditional on $i$ working and reporting $m_i$, and for any signal $s_i$ let $F^{k,w}_i(s_i)$ denote the probability that there are at least $k$ signals left conditional on $i$ working and discovering  $s_i$. $F^{k,w}_i(s_i)$ represents $i$'s belief after discovering $s_i$, while $F^{k,r}_i(m_i)$ represents what $i+1$ would believe conditional on knowing that $i$ worked and observing report $m_i$.

Let $N_i = \{m_i: F^{k,r}_i(m_i)> (1+\eta) F^k_i\}$.

\begin{lemma} i) $\gamma_i(N_i)\leq \frac{1}{2\eta G^2}$, ii) for $m_i\notin N_i$, $\pi_{i+1}(m_i)\leq \frac{2Cg}{\hat F^k} \left((1+\eta)F^k_i - \mb E_{i+1}[F^k_{J}]\right)$.\label{lem-N-mixed}\end{lemma}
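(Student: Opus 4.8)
I would treat the two claims separately; claim (ii) follows quickly from \eqref{eq-really-final}, while claim (i) is the substantive step. For (ii), the plan is to show that the \emph{public} belief $F^k_{i+1}(m_i)$ (and not just $F^{k,r}_i(m_i)$) fails to jump up by more than the factor $1+\eta$ when $m_i\notin N_i$. Conditioning on whether $i$ works or shirks in round $i$, and using that an agent who shirks observes nothing beyond $m_1^{i-1}$ and hence sends a message independent of $|S_i|$, one gets the convex decomposition $F^k_{i+1}(m_i) = w(m_i)\,F^{k,r}_i(m_i) + (1-w(m_i))\,F^k_i$, where $w(m_i) = \Pr(i\text{ works}\mid m_1^i)$. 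For $m_i\notin N_i$ we have $F^{k,r}_i(m_i)\le (1+\eta)F^k_i$ by the definition of $N_i$, and $F^k_i\le(1+\eta)F^k_i$ trivially, so $F^k_{i+1}(m_i)\le(1+\eta)F^k_i$. Substituting this into \eqref{eq-really-final} gives exactly the bound asserted in (ii).

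For (i), the idea is that the level-$k$ belief can move \emph{up} after $i$ works only through the event $\{|S_i|\ge k+1\}$, whose probability $F^{k+1}_i$ is already known to be tiny. I would split $\{|S_{i+1}|\ge k\}$ according to whether $i$ found a signal: when $i$ found a signal, $\{|S_{i+1}|\ge k\}=\{|S_i|\ge k+1\}$; when $i$ found nothing, $\{|S_{i+1}|\ge k\}=\{|S_i|\ge k\}$, but --- and this is the crux --- an agent who worked and found nothing has observed nothing correlated with $|S_i|$ beyond $m_1^{i-1}$, so his report is independent of $|S_i|$ and the associated posterior is simply $\Pr(|S_i|\ge k\mid i\text{ works and finds nothing})$, which Bayes' rule shows equals $(1-\lambda)F^k_i/\big((1-\lambda)+\lambda(1-F^1_i)\big)\le F^k_i$. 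Combining the two cases gives $F^{k,r}_i(m_i)\le A(m_i)+F^k_i$, where $A(m_i):=\Pr(|S_i|\ge k+1\mid i\text{ works},m_i)$; hence $m_i\in N_i$ forces $A(m_i)>\eta F^k_i$. Since the event $\{i\text{ works}\}$ is independent of $|S_i|$ given $m_1^{i-1}$, the tower property gives $\mb E[A(m_i)\mid i\text{ works}]=F^{k+1}_i$, so Markov's inequality yields $\gamma_i(N_i)\le \Pr(A(m_i)>\eta F^k_i\mid i\text{ works})\le F^{k+1}_i/(\eta F^k_i)$. Finally, using $F^{k+1}_i\le\ve\le\tfrac12(\hat F^k/G)^2$ together with $F^k_i\ge\hat F^k$ (valid when $\gamma_i>0$, since then round $i$ is an informative continuation with $F^{k+1}_i\le G\ve$, so $F^k_i\ge F^k(G\ve)=\hat F^k$ by monotonicity of $F^k(\cdot)$; the claim is vacuous when $\gamma_i=0$) collapses this to $\gamma_i(N_i)\le\hat F^k/(2\eta G^2)\le 1/(2\eta G^2)$.

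The main obstacle is the inequality $F^{k,r}_i(m_i)\le A(m_i)+F^k_i$ in claim (i). Its content is that, at level $k$, finding nothing can revise the belief only \emph{downward}, so that all upward revision is funnelled into the vanishingly rare event $\{|S_i|\ge k+1\}$; this in turn rests on the informational observation that a working agent who finds nothing cannot tailor his message to $|S_i|$. Once that inequality is in place, the remainder is routine manipulation with Bayes' rule, the tower property, Markov's inequality, and the previously fixed bound on $\ve$.
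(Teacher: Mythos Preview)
Your proof is correct. Part~(ii) is handled exactly as in the paper: you make explicit the convex-combination identity $F^k_{i+1}(m_i)=w(m_i)F^{k,r}_i(m_i)+(1-w(m_i))F^k_i$ that the paper leaves to a footnote, and then substitute into~\eqref{eq-really-final}.

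For part~(i) you take a slightly different route. The paper conditions on the \emph{specific} signal discovered, introducing the set $S'_i=\{s_i:F^{k,w}_i(s_i)>F^k_i\}$ of ``belief-raising'' signals and the quantity $q(m_i)=\Pr(s_i\in S'_i\mid\text{work},m_i)$; it shows $m_i\in N_i\Rightarrow q(m_i)\ge\eta F^k_i$ and $\sum_{m_i}\gamma_i(m_i)q(m_i)=\gamma'_i(S'_i)\le F^{k+1}_i/F^k_i$, then applies Markov. You instead condition only on \emph{whether} a signal was discovered and use $A(m_i)=\Pr(|S_i|\ge k+1\mid\text{work},m_i)$ directly, obtaining $m_i\in N_i\Rightarrow A(m_i)>\eta F^k_i$ together with $\mb E[A(m_i)\mid\text{work}]=F^{k+1}_i$ by the tower property. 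Both routes feed the same numbers into the final Markov step and land on $\gamma_i(N_i)\le F^{k+1}_i/(\eta F^k_i)\le 1/(2\eta G^2)$. Your version is a bit more economical: it avoids the signal-by-signal bookkeeping and the auxiliary set $S'_i$, at the cost of the slightly looser (but still sufficient) bound $F^{k,r}_i(m_i)\le A(m_i)+F^k_i$ in place of the paper's $F^{k,r}_i(m_i)\le q(m_i)+(1-q(m_i))F^k_i$. The paper's finer conditioning would matter if one cared about the identity of the discovered signal, but for the present lemma it does not.
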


\begin{proof} i)~Let $S'_i = \{s_i : F^{k,w}_i(s_i) > F^k_i\}$, and, for each $s_i$, let $\gamma'_i(s_i)$ (resp. $\beta'_i(s_i)$) denote the probability that $i$ discovers $s_i$ given that he works (resp., the probability that $i$ works and discovers $s_i$). Also let $\gamma'_i(s_i|k+1)$ (resp. $\beta'_i(s_i|k+1)$) denote the same probabilities conditional on $|S_i|\geq k+1$.

We have for all $s_i$
\[F^{k,w}_i(s_i)  = \frac{F^{k+1}_i \gamma'_i(s_i|k+1)}{\gamma'_i(s_i)} = \frac{F^{k+1}_i \beta'_i(s_i|k+1)}{\beta'_i(s_i)}\]
where the second equality comes from $\beta'_i(s_i) = \gamma_i \gamma'_i(s_i)$ and $\beta'_i(s_i|k+1) = \gamma_i \gamma'_i(s_i|k+1)$. Therefore, $F^{k,w}_i(s_i)> F^k_i$ only if $\beta'_i(s_i) < \beta'_i(s_i|k+1)F^{k+1}_i/F^k_i$.

We have $\sum_{s_i\in S'_i} F^{k+1}_i \beta'_i(s_i|k+1)\leq \gamma_i F^{k+1}_i$. Therefore, the probability $\beta'_i(S'_i)$ that $i$ works and finds a signal in $S'_i$ satisfies
\[\beta'_i(S'_i) = \sum_{s_i\in S'_i} \beta'_i(s_i) \leq \gamma_i \frac{F^{k+1}_i}{F^k_i}.\]
Since $\beta'_i(S'_i) = \gamma_i \gamma'_i(S'_i)$, we have
\[\gamma'_i(S'_i)\leq \frac{F^{k+1}_i}{F^k_i} \leq \frac{\ve}{\hat F^k},\]
since $F^{k+1}_i\leq \ve$ and $F^k_i\geq \hat F^k$. From~\eqref{eq-bound-ve-mixed}, the right-hand side is bounded above by~$\frac{\hat F^k}{2G^2}$.

For any $m_i$, let $q(m_i)$ denote the probability, conditional on $i$ working and sending report $m_i$, that $i$ has discovered a signal $s_i\in S'_i$, and let $\sigma(s_i|m_i)$ denote the probability that $i$ discovered $s_i$ given that he worked and reported $m_i$. We also let $s_i = \ep$ denote the event that $i$ did not find anything, $\sigma(\ep|m_i)$ denote the probability that $i$ found nothing given that he worked and reported $m_i$, $F_i^{k,w}(\ep)$ denote the probability that there at least $k$ signals conditional on $i$ working and finding nothing. We have
\begin{align*}
F_i^{k,r}(m_i) & = \sum_{s_i\in S_i} \sigma(s_i|m_i) F_i^{k,w}(s_i) \\
& = \sum_{s_i \in S'_i} \sigma(s_i|m_i) F_i^{k,w}(s_i) + \sigma(\ep|m_i) F_i^{k,w}(\ep) + \sum_{s_i \neq \ep, s_i \in S_i\setminus S'_i} \sigma(s_i|m_i) F_i^{k,w}(s_i)
\end{align*}
By construction, $F_i^{k,w}(s_i)\leq F^k_i$ for all $s_i$ in the last term. Moreover, we have $F_i^{k,w}(\ep)\leq F^k_i$, as is easily checked:\footnote{Formally, for $k\geq 1$, we have $F_i^{k,w}(\ep) = \frac{F^k_i (1-\lambda)}{(1-F^1_i) + F^1_i (1-\lambda)} = F^k_i \frac{1-\lambda}{(1-F^1_i) + F^1_i(1-\lambda)} \leq F^k_i$.} intuitively, finding nothing always increases the probability that there are no signals remaining to be found. Finally, the first term is bounded above by $\sigma(S'_i|m_i) = q(m_i)$. Therefore,
\begin{align*}
F_i^{k,r}(m_i)\geq (1+\eta) F^k_i & \Rightarrow q(m_i) + (1-q(m_i))F^{k}_i \geq (1+\eta) F^k_i\\
& \Rightarrow q(m_i)\geq \eta F^k_i.
\end{align*}
To conclude, note that
\[\sum_{m_i} \gamma_i(m_i) q(m_i) = Pr(s_i\in S'_i|m_1^{i-1}, \mbox{$i$ works}) \leq \frac{\hat F^k}{2G^2}\]
The left-hand side is bounded below by $\gamma(N_i) \eta F^k_i$.
Since $F^k_i\geq \hat F^k$, this implies that
\[\gamma(N_i)\leq  \frac{1}{2\eta G^2}.\]

ii)~$F^k_{i+1}(m_i) $ is a convex combination\footnote{$F^k_{i+1}(m_i)$ is the probability that $i+1$ assigns to there being at least $k$ signals left upon observing $m_i$. If $i+1$ knew that $i$ didn't work and simply sent message $m_i$, this belief should be $F^k_i$ since $m_i$ conveys no additional information. And if $i+1$ knew that $i$ produced $m_i$ through working and then reporting $m_i$, his updated belief should be $F^{k,r}_i(m_i)$. Since $i+1$ doesn't observe $i$'s action, in general $F^k_{i+1}(m_i)$ is a convex combination of these two posteriors, where the weights corresponds to the probability assigned by $i+1$ to $i$ fabricating or working conditional on observing $m_i$. This fact is straightforward to check using Bayesian updating.} of $F^k_i$ and $F^{k,r}_i(m_i)$, we have $F^k_{i+1}(m_i)\leq (1 + \eta) F^k_i$ for all $m_i\notin N_i$. From~\eqref{eq-really-final}, this implies that
\[\pi_{i+1}(m_i) \leq \frac{2Cg}{\hat F^k} \left((1+\eta) F^k_i - \mb E_{i+1}[F^k_{J}]\right)\]
for all $m_i\notin N_i$.\hfill$\blacksquare$
\end{proof}

Let $T_i = \{m_i: F^{k+1}_{i+1}(m_i) \geq \sqrt{G} \ve\}$.

\begin{lemma} i) $Pr_{i+1}(\mc A)\geq 1-1/\sqrt{G}$ for all $m_i\notin T_i$, ii) $\gamma_i(T_i)\leq 1/\sqrt{G}$.\label{lem-T-mixed}\end{lemma}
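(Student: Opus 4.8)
\emph{Plan.} Both parts rest on the supermartingale structure of the remaining-evidence beliefs $\{F^{k+1}_j\}_j$ together with the Bayesian convex-combination identity for $F^{k+1}_{i+1}(m_i)$ already recorded in the proof of Lemma~\ref{lem-N-mixed}. Throughout I would use that $F^{k+1}_i\le \ve$ (the continuation equilibrium in question has $F^{k+1}_i\in[\lbar F,\lbar F+\ve]$ and $\lbar F=0$ under the standing contradiction hypothesis) and that $\{F^{k+1}_j\}_j$ is a nonnegative supermartingale in $j$ (the same fact invoked, via the footnote, to control $F^k_J$ earlier in this proof), and I would recall that $G>1$.

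For part~(i), I would fix $m_i\notin T_i$, pass to the measure conditional on $m_1^i=(m_1^{i-1},m_i)$, and apply the maximal inequality for nonnegative supermartingales to $\{F^{k+1}_j\}_{j\ge i+1}$, which starts from the \emph{deterministic} value $F^{k+1}_{i+1}(m_i)<\sqrt G\,\ve$. This gives $\Pr_{i+1}\big(\sup_{j\ge i+1}F^{k+1}_j\ge G\ve\big)\le F^{k+1}_{i+1}(m_i)/(G\ve)<1/\sqrt G$. I would then note that the ``early'' constraints defining $\mc A$ hold automatically on the conditioned path, namely $F^{k+1}_i\le\ve<G\ve$ and $F^{k+1}_{i+1}(m_i)<\sqrt G\,\ve<G\ve$, so that $\mc A^c$ is contained in $\{\sup_{j\ge i+1}F^{k+1}_j\ge G\ve\}$, whence $\Pr_{i+1}(\mc A)\ge 1-1/\sqrt G$. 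The reason the improved bound $1/\sqrt G$ is available here (as opposed to the $1/G$ of Lemma~\ref{lem-supermart}) is precisely that off $T_i$ the supermartingale is restarted at round $i+1$ from a value below $\sqrt G\,\ve$, not merely below $G\ve$.

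For part~(ii), I would first show $T_i\subseteq\{m_i:F^{k+1,r}_i(m_i)\ge\sqrt G\,\ve\}$: since $F^{k+1}_{i+1}(m_i)$ is a convex combination of $F^{k+1}_i\le\ve$ and $F^{k+1,r}_i(m_i)$, it is at most $\max\{F^{k+1}_i,F^{k+1,r}_i(m_i)\}$, so $F^{k+1}_{i+1}(m_i)\ge\sqrt G\,\ve>\ve$ forces $F^{k+1,r}_i(m_i)\ge\sqrt G\,\ve$. Then I would bound the average of $F^{k+1,r}_i(m_i)$ under $i$'s report distribution conditional on working: by the law of total probability $\sum_{m_i}\gamma_i(m_i)F^{k+1,r}_i(m_i)=\Pr\big(|S_{i+1}|\ge k+1\mid m_1^{i-1},\ i\text{ works}\big)$, and since $|S_{i+1}|\le|S_i|$ while the event ``$i$ works'' is (given $m_1^{i-1}$) independent of $\{|S_i|\ge k+1\}$, this is at most $\Pr(|S_i|\ge k+1\mid m_1^{i-1})=F^{k+1}_i\le\ve$. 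Markov's inequality then yields $\gamma_i(T_i)\le\ve/(\sqrt G\,\ve)=1/\sqrt G$ (the claim being vacuous when $\gamma_i=0$).

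I do not expect a genuine obstacle; the argument is a combination of a supermartingale maximal inequality applied at the right time with a one-step Markov bound. The points needing care are (a) restarting the supermartingale at round $i+1$ with the deterministic initial value $F^{k+1}_{i+1}(m_i)$, which is exactly why part~(i) is stated only off $T_i$; (b) using the maximal inequality in its infinite-horizon form for nonnegative supermartingales; and (c) keeping track that $F^{k+1,r}_i(m_i)$ refers to the evidence remaining \emph{after} round~$i$, so that the monotonicity $|S_{i+1}|\le|S_i|$ is invoked correctly in part~(ii).
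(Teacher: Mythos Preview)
Your proposal is correct and follows essentially the same approach as the paper. The paper argues part~(i) by invoking the argument of Lemma~\ref{lem-supermart} restarted at round $i+1$ (via the auxiliary martingale $\bar F^{k+1}_j$ and Doob's inequality, then $F^{k+1}_j\le\bar F^{k+1}_j$), whereas you apply the nonnegative-supermartingale maximal inequality directly to $\{F^{k+1}_j\}_{j\ge i+1}$; for part~(ii) the paper introduces $\bar F^{k+1,r}_i(m_i)$ (the belief about $|S_i|$) to get an exact identity and then uses $\bar F^{k+1,r}_i\ge F^{k+1,r}_i$, while you bound $\sum_{m_i}\gamma_i(m_i)F^{k+1,r}_i(m_i)\le F^{k+1}_i$ directly via $|S_{i+1}|\le|S_i|$. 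These are the same ideas executed with slightly different bookkeeping, and your handling of the rounds $j=i,i+1$ in the definition of $\mc A$ is in fact more explicit than the paper's.
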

\begin{proof}
 i) If $m_i\notin T_i$, we have $F^{k+1}_{i+1} \leq \sqrt{G}\ve$. Using this inequality in Lemma~\ref{lem-supermart} instead of $F^{k+1}_i\leq \ve$ and repeating its argument applied to round $i+1$, we conclude that $i+1$ assigns probability at least $1- 1/\sqrt{G}$ to $\mc A$ whenever $m_i\notin T_i$.

ii)~Let $F^{k+1,r}_i(m_i)$ denote the probability that there are at least $k+1$ signals left at the beginning of round $i+1$ conditional on $i$ working and reporting $m_i$. $F^{k+1}_{i+1}$ is a convex combination of $F^{k+1}_i$ and $F^{k+1,r}_i(m_i)$. This, together with the fact that $F^{k+1}_i \leq \ve$ and the definition of $T_i$, shows that $m_i\in T_i$ only if $F^{k+1,r}_i(m_i)\geq \sqrt{G} \ve$. Let $T'_i$ denote the set of messages $m_i$ for which the last inequality holds. As noted, $T_i\subset T'_i$.

Since $i$'s prior probability that there are at least $k+1$ signals left is $F^{k+1}_i\leq \ve$, the law of iterated expectations implies that the probability $\bar F^{k+1,r}_i(m_i)$ that there were at least $k+1$ signals left at the beginning of round $i$
conditional on $i$ working and finding $m_i$ must satisfy
\[\mb E_i[\bar F^{k+1,r}|\mbox{working}] = \sum_{i\in M_i} \gamma_i(m_i) \bar F^{k+1,r}_i(m_i) = F^{k+1}_i \leq \ve.\]
Using Markov's inequality, this implies that $Pr(m_i : \bar F^{k+1,r}_i(m_i) \geq \sqrt{G}\ve | \mbox{ $i$ works})\leq \frac{\ve}{\sqrt{G}\ve} = 1/\sqrt{G}.$ Since also $\bar F^{k+1,r}_i(m_i)\geq F^{k+1,r}_i(m_i)$, we get $\gamma_i(T'_i)\leq 1/\sqrt{G}$. Since $T_i\subset T'_i$, this shows that $\gamma_i(T_i)\leq 1/\sqrt{G}$.\hfill$\blacksquare$\end{proof}

Let $V^w$ denote $i$'s expected gross utility if he works and $V^w(m_i)$ denote his expected gross utility conditional on working and reporting $m_i$. We have
\begin{align}
V^w  & = \sum_{i\in M_i} \gamma_i(m_i) V^w(m_i) \nonumber \\
& \leq  (\gamma_i(N_i) + \gamma_i(T_i))R + \sum_{m_i\notin N_i\cup T_i} \gamma_i(m_i) V^w(m_i) \nonumber \\
& \leq \left(\frac{1}{2\eta G^2} + \frac{1}{\sqrt{G}} \right)R + \sum_{m_i\notin N_i\cup T_i} \gamma_i(m_i) V^w(m_i) \label{eq-total-comp-mixed}
\end{align}

Moreover,
\be\label{eq-comp}V^w(m_i) = Pr(\mc Z|m_1^i) V^w(m_i|\mc Z) + Pr(\mc Z^c|m_1^i) V^w(m_i|\mc Z^c).\ee
Conditional on $m_1^i$, the event $\mc Z$ is independent of how $m_i$ was produced (i.e., whether $m_i$ was obtained by work or fabrication). Indeed, as long as no one works, the distribution of reports $m_j$ made by agents following $i$ depends only on $m_1^i$, not on the signals that remain to be discovered in the case. And as soon as someone works, then by definition $\mc Z$ has occurred. Thus, what triggers the event $\mc Z$ (whenever it occurs) is a sequence of uninformative (until $\mc Z$ occurs) reports $m_j$ for agents following $i$, whose probability distribution is completely pinned down by $m_1^i$.

From the previous lemmas, we have $Pr(\mc A\cap \mc Z|m_1^i)\leq \frac{2Cg}{\hat F^k} \left((1+\eta) F^k_i - \mb E_{i+1}[F^k_{J}]\right)$ for all $m_i\notin N_i$ and $Pr(\mc A^c|m_1^i)\leq 1/\sqrt{G}$ for all $m_i\notin T_i$. Letting $\hat M_i = M_i\setminus (N_i\cup T_i)$, this implies that
\be\label{eq-pro-Z-mixed} Pr(\mc Z|m_1^i) = Pr(\mc Z \cap \mc A |m_1^i) +  Pr(\mc Z \cap \mc A^c|m_1^i)\leq \frac{2Cg}{\hat F^k} \left(F^k_i(1+\eta) - \mb E_{i+1}[F^k_{J}]\right) + 1/\sqrt{G}\ee
for all $m_i\in \hat M_i$.

Conditional on $\mc A$, $F^{k+1}_{j}\leq G\ve$ for all $j\geq i$. By definition of $J$, all continuation equilibria until round $J$ included are informative, which implies that $F^k_j\geq F^k(F^{k+1}_j)$ for all $j\leq J$. Since $F^k(\cdot)$ is nonincreasing, this implies that $F^k_j\geq F^k(G\ve) = \hat F^k$ for all $j\leq J$.

We thus have for $m_i\in \hat M_i$
\begin{align*}
\mb E_{i+1} F^k_{J} & = Pr_{i+1}(\mc A) \mb E_{i+1}[F^k_{J}|\mc A] + Pr_{i+1}(\mc A^c) \mb E_{i+1}[F^k_{J}|\mc A^c]\\
& \geq Pr_{i+1}(\mc A) \mb E_{i+1}[F^k_{J}|\mc A]\\
& \geq (1-1/\sqrt G) \hat F^k.
\end{align*}

By construction, $F^k_i\leq \bar F^k +\hat F^k \eta$ and $\hat F^k\geq \bar F^k - \hat F^k\eta$ and hence $F^k_i -\hat F^k \leq (\bar F^k + \hat F^k \eta) - (\bar F^k - \hat F^k \eta) = 2\eta \hat  F^k$. Letting $B = 8Cg$,~\eqref{eq-pro-Z-mixed} then implies (for $\eta\leq 1/2$, which we assume) that for all $m_i$ in $\hat M_i$
\be\label{eq-MOVE} Pr(\mc Z|m_1^i)\leq B \eta  + \frac{B}{2\sqrt G} + 1/\sqrt{G}.\ee

For each $m_i$, let $V^{f}_i(m_i|\mc Z^c)$ denote $i$'s expected gross utility if he sends message $m_i$ conditional on no $j>i$ working and $V^{f,*}$ denote the maximizer of $V^f_i(m_i|\mc Z^c)$ over all messages $m_i\in \hat M_i$. Notice that $i$'s expected gross utility conditional on $m_i$ and no $j>i$ working does not depend on whether $i$ worked or shirked: either way, the subsequent reports $\{m_j\}_{j>i}$ are independent of the signals that remain to be discovered. Therefore, $i$'s conditional expected gross utilities satisfy $V^w(m_i|\mc Z^c) = V^f_i(m_i|\mc Z^c)$.

Combining these observations with~\eqref{eq-comp} and~\eqref{eq-MOVE}, we obtain
\[V^w(m_i) \leq \left(B\eta + \frac{B}{2\sqrt G} +\frac{1}{\sqrt{G}}\right) R + V^{f,*},\]
for $m_i\in \hat M_i$. Combining this with~\eqref{eq-total-comp-mixed} yields
\[V^w\leq \left(\frac{1}{2\eta G^2} + \frac{1}{\sqrt{G}}\right) R+ \left(B\eta+\frac{B}{2\sqrt G} + \frac{1}{\sqrt{G}}\right)R + V^{f,*}.\]
$i$'s utility from working thus satisfies
\be\label{eq-i-utility-end-mixed}U^w\leq \left(\frac{1}{2 \eta G^2} + \frac{1}{\sqrt{G}}+ 2B\eta+\frac{B}{2\sqrt G}+ \frac{1}{\sqrt{G}}\right)R + V^{f,*} - c.\ee

If $i$ sends a message $m_i^*\in \hat M_i$ that achieves $V^{f,*}$, his utility $U^f$ satisfies
\begin{align*}
U^f & \geq  Pr(\mc Z|m_1^{i-1}, m_i^*) \times 0 + Pr(\mc Z^c|m_1^{i-1}, m_i^*) V^{f,*}\\
& \geq \left(1-B\eta - \frac{B}{2\sqrt G }- \frac{1}{\sqrt{G}}\right) V^{f,*}\\
& \geq V^{f,*} - \left(B\eta- \frac{B}{2\sqrt G}- \frac{1}{\sqrt{G}}\right) R
\end{align*}
where 0 is used as a lower bound on $i$'s realized gross utility to derive the second inequality.

Therefore, working is strictly suboptimal if
\[\left(\frac{1}{2\eta G^2} + \frac{1}{\sqrt{G}} + B\eta+\frac{B}{2\sqrt G}+ \frac{1}{\sqrt{G}}\right) R + V^{f,*} - c < V^{f*} - (B\eta +\frac{B}{2\sqrt G}+ \frac{1}{\sqrt{G}})R\]
or
\be\label{eq-final-bound}\frac{c}{R}> \frac{1}{2 \eta G^2} + \frac{3}{\sqrt{G}} + 2 B\eta + \frac{B}{\sqrt G}.\ee
This inequality is always satisfied as long as $\eta$ is small enough and $\eta G$ is large enough. In particular, since $B = 8Cg$, $g = R/\lambda c$,  and $C$ can be taken to equal $2R/c$ if $\lambda =1$ and $2R/(c\lambda(1-\lambda))$ as noted in Proposition~\ref{pro-general-mixed}, the inequality is satisfied if $\eta = 1/\sqrt{G}$ and $\sqrt{G} = 128 R^3/ c^3$ for  $\lambda =1$ and $\sqrt{G} = 128 R^3/(c^3 \lambda^2(1-\lambda))$ if $\lambda <1$ (recalling that $R>c)$, in which case each term on the right-hand side of~\eqref{eq-final-bound} is less than $c/4R$ with some strict inequalities.\hfill$\blacksquare$

\section{Proof of Theorem~\ref{the-pos-statement}}

We construct compensation functions for which the strategy profile proposed constitutes an equilibrium. Under this strategy profile, as long as $p_i$ lies in $(\lbar p,\bar p)$ all agents work and report an informative signal about $\omega$. Moreover, given the symmetric signal structure, $p_i$ depends only on the number of ``$H$'' and ``$L$''  signals as long as all agents $j<i$ work with probability 1. Therefore, the set of equilibrium posteriors forms a grid $\{q^k\}$ containing $\hat p$ and containing a single point on each side of $(\lbar p, \bar p)$. Let $q^0\leq \lbar p < q^1, \ldots , \hat p, \ldots, q^{N} < \bar p < q^{N+1}$ denote this grid. Along the candidate equilibrium, the belief $p_i$ evolves on this grid until it hits either $q^0$ or $q^{N+1}$, after which the investigation stops.

Let $J$ denote the last investigator who works:  we have $p_{J}\in \{q^1, q^N\}$ and $p_{J+1}\in \{q^0, q^{N+1}\}$.
Also let $\tilde p = p_{J+1}$ denote the value of the belief when learning stops under the candidate equilibrium.

We construct utility functions in which an investigator's compensation depends only on his report and on the posterior $\tilde p$.

For any $i$ such that $p_i = q^k \in (\lbar p, \bar p)$, if $i$ reports ``$H$'', he gets a reward $R^k_H \geq 0$ if $\tilde p = q^{N+1}$ and a punishment $P_L^k \leq 0$ if $\tilde p = q^0$. If $i$ reports ``$L$'', he gets $R_L^k\geq 0$ if $\tilde p = q^0$ and $P_L^k \leq 0$ if $\tilde p = q^{N+1}$.

For any $p,q$ on the grid, let $\pi(p,q)$ denote the probability that the belief sequence ends with $\tilde p = q^{N+1}$, i.e., exits $(\lbar p, \bar p)$ through $\bar p$, from the perspective of an agent who assigns probability $p$ to $\omega$, but the prior used by investigators is $p_0 = q$. That is, $\pi(p,q)$ is the probability that an individual with prior $p$ assigns to the sequence $p_i$ converging to $q^{N+1}$ in equilibrium given that the public belief, which serves as the state variable for the equilibrium, starts at $q$.

If $i$ sends report ``$H$'' starting from prior $p_i=q^k$, he assigns a probability  $\pi(q^k, q^{k+1})$ to the public belief converging to $q^{N+1}$. If $i$ works and receives report $``H''$, his belief about the continuation equilibrium is $\pi(q^{k+1}, q^{k+1})$. Similarly, if $i$ sends $``L''$, his belief is $\pi(q^k, q^{k-1})$ whereas if he works and reports $``L''$ his belief is $\pi(q^{k-1}, q^{k-1})$. It is straightforward to verify the inequalities
\be\label{eq-cheat-up}\pi(q^{k+1}, q^{k+1}) > \pi(q^k, q^{k+1})\ee
and
\be\label{eq-cheat-down}\pi(q^{k-1}, q^{k-1}) < \pi(q^k, q^{k-1}),\ee
for all $k\in [2, N-1]$. The strictness of the inequalities comes from the fact that conditional on the true state $\omega$, the dynamic of $\{p_j\}_{j\geq i+1}$ starting any given value of $p_{i+1}$ is strictly increasing in $\omega$ in FOSD, as is easily checked. Therefore, the probability of hitting $q^{N+1}$ before $q^0$ is strictly increasing in the belief $p_i$ that the state is high.

For $k =1$, the investigation stops if $i$ reports ``$L$'' so~\eqref{eq-cheat-down} holds as an equality, but~\eqref{eq-cheat-up} is still strict, because this report triggers further investigation. The reverse is true for $k=N$:~\eqref{eq-cheat-up} only holds as an equality while~\eqref{eq-cheat-down} is strict.

If $i$ shirks, his maximal utility is
\be\label{eq-payoff-shirk}\max\{\pi(q^k, q^{k+1}) R^k_H + (1-\pi(q^k, q^{k+1}))P^k_H; \pi(q^k, q^{k-1}) P^k_L + (1-\pi(q^k, q^{k-1}))R^k_L \}.\ee
The left argument is $i$'s expected payoff if he sends ``$H$'', and the right one is his payoff if he sends ``$L$''. Since $i$ can send either message at no cost, his best payoff from fabrication is the maximum of these two terms.
If $i$ works, he gets
\be\label{eq-payoff-work} z^k [\pi(q^{k+1}, q^{k+1}) R^k_H + (1-\pi(q^k, q^{k+1}))P^k_H] + (1-z^k) [\pi(q^{k-1}, q^{k-1}) P^k_L + (1-\pi(q^{k-1}, ^{k-1}))R^k_L]\ee
where $z^k$ is the probability of receiving signal ``$H$'' given belief $q^k$, and is equal to $z^k = Pr(``H''|q^k) = q^k \pi + (1-q^k) \times (1-\pi) $.

It is optimal for $i$ to work if the expression in~\eqref{eq-payoff-work} exceeds the expression in~\eqref{eq-payoff-shirk} by at least $c$.

This condition is obtained as follows: set $P^k_H= P^k_L= - Q$ where $Q$ is a strictly positive constant, and let $R^k_H = Q \frac{1-\pi(q^k, q^{k+1})}{\pi(q^k, q^{k+1})}$ and $R^k_L = Q\frac{\pi(q^k, q^{k-1})}{1-\pi(q^k, q^{k-1})}$. This guarantees that $i$'s expected payoff from fabrication is zero, regardless of the outcome. From~\eqref{eq-cheat-up} and~\eqref{eq-cheat-down}, his payoff from working is of order  $Q$ and thus exceeds $c$, for $Q$ high enough.\footnote{To see this, let $\bar \pi$ denote a strictly positive lower bound on all inequalities~\eqref{eq-cheat-up} and~\eqref{eq-cheat-down} over all $k$'s whenever they hold strictly. Then, the gain from working is of order $Q \bar \pi$.}

If $k=1$ or $N$, there is one signal that $i$ can send after working which yields a payoff of order $Q$, while the other signal yields 0. The signal associated with a positive payoff arises with a probability that is bounded away from 0, since $p_i$ lies in $(\lbar p, \bar p)$.

Moreover this scheme is feasible as long as the maximal reward $R$ and and punishment $P$ respectively exceed $\sup\{R^k_\theta: \theta\in \{L,H\}, k\in \{1,\ldots, N\}\}$ and $Q$.

\subsection*{General case}

As noted, the argument extends easily to the case in which $\rho$ and/or $\lambda$ are less than 1.

With $\rho < 1$ and $\lambda=1$, the informative equilibrium is identical to the one described in Proposition~\ref{pro-positive} except that learning stops as soon as an agent fails to report evidence, in which case he gets a zero compensation. By construction of the equilibrium in the proof above, shirking and reporting that no evidence was found has the same value as fabricating any other message and can thus be deterred. Since a working agent may find nothing, or the learning process may be interrupted before the belief process exits $(\lbar p,\bar p)$, in which case the working agent receives 0, the rewards and punishments must be scaled up by $1/\pi_\rho(q^k)$, where $\pi_\rho(q^k)$ is the probability that the belief process exits $(\lbar p, \bar p)$ in equilibrium, given the current belief $q^k$, so that the expected compensation of a working agent still exceeds the cost $c$ of working.

If $\lambda <1$ and $\rho=1$, a working agent may fail to find evidence even when there surely exists some. In this case, we again assume that the compensation is zero, which deters shirking and reporting the empty message, and scale up all rewards and punishments by $1/\lambda$ to incentive the agent to work, as in the previous paragraph. The belief process will surely exit $(\lbar p, \bar p)$ since the amount of evidence is unlimited (only individual agents may be unlucky and find nothing with probability $1-\lambda$).

The case in which both $\lambda$ and $\rho$ are less than 1 is a convex combination of the previous cases and addressed accordingly.

\section{Proof of Theorem~\ref{the-witness}}

Without loss of generality, we again assume that agents' gross utility functions $\{V_i\}_{i\in \mb N}$ all take values in $[0,R]$.

For any round $i$, consider the event $\mc Q_i$ that all past investigators whose may have failed to discover signals, given the history $m_1^{i-1}$, have indeed failed to discover signals. Conditional on $\mc Q_i$, the number $q_i$ of signals that have been uncovered until round $i$ is equal to the number of past witnesses plus the number of past investigators whose messages reveal that they have surely discovered signals given $m_1^{i-1}$. Put differently, $q_i$ is the number of signals that have been surely discovered by round $i$. Let $\hat F^k_i$ denote the probability that $|S_i|\geq k$ conditional on $\mc Q_i$ and $q_i$.

We observe that i) $q_i$ is nondecreasing along any equilibrium path and is strictly increasing whenever a witness arrives, ii) given the construction of $S$, the distribution of $S_i$ conditional on $m_1^{i-1}$ and $\mc Q_i$ is only a function of $q_i$, iii) $\hat F^k_1 = F^k_1$ for all $k$.

We will prove that there exist strictly positive thresholds $\{\lbar F^k\}_{k\geq 1}$ such that an informative continuation equilibrium exists in round $i$ only if $\hat F^k_i\geq \lbar F^k$ for all $k\geq 1$. Applied to $i=1$, this result implies Theorem~\ref{the-witness}. The proof uses the following lemma.

\begin{lemma}\label{lem-IHR}  Under Assumption~\ref{A-IHR},  the following inequalities hold: i) $F^k_i\leq \hat F^k_i$ for all $i,k\geq 1$, and ii) $\hat F^k_j\leq \hat F^k_i$ for all $j\geq i$ and $k\geq 1$.\end{lemma}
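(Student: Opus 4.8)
The plan is to use observation~(ii) to collapse $\hat F^k_i$ to a function of $q_i$ alone, and then to extract everything from a single monotonicity property implied by Assumption~\ref{A-IHR}. Write $\bar G(\ell)=\Pr(\tilde K\ge\ell)$ for the survival function of $\tilde K$. Because $\tilde K$ is independent of $S^\infty$ and signals are consumed in the fixed order $s^1,s^2,\dots$, conditioning on $\mc Q_i$ and on the value of $q_i$ means precisely that the first $q_i$ signals have been used up, so $|S_i|=\tilde K-q_i$ with $\tilde K$ distributed as $\tilde K$ conditioned on $\{\tilde K\ge q_i\}$; together with observation~(ii) and the base case observation~(iii), this identifies
\[\hat F^k_i=\psi_k(q_i),\qquad \psi_k(q):=\frac{\bar G(q+k)}{\bar G(q)}=\Pr(\tilde K\ge q+k\mid\tilde K\ge q).\]
The only fact about $\psi_k$ I need is that, for each fixed $k$, it is nonincreasing in $q$. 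With $h(\ell)=\Pr(\tilde K=\ell)/\Pr(\tilde K\ge\ell)$ the hazard rate one has $\bar G(\ell+1)/\bar G(\ell)=1-h(\ell)$, hence $\psi_k(q)=\prod_{j=0}^{k-1}(1-h(q+j))$; Assumption~\ref{A-IHR} makes each factor nonincreasing in $q$, so the product is too. Part~(ii) then follows immediately: by observation~(i), $q_j\ge q_i$ whenever $j\ge i$ along an equilibrium path, so $\hat F^k_j=\psi_k(q_j)\le\psi_k(q_i)=\hat F^k_i$.

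For part~(i) I would condition on the true number $D_i$ of signals already discovered (witnesses plus genuine investigator discoveries) by the start of round~$i$. Since $q_i$ counts only the discoveries that are certain given $m_1^{i-1}$, we have $D_i\ge q_i$, while $\tilde K\ge D_i$; on $\{D_i=d\}$ we have $|S_i|=\tilde K-d$, so
\[F^k_i=\sum_{d\ge q_i}\Pr(D_i=d\mid m_1^{i-1})\,\Pr(\tilde K\ge d+k\mid m_1^{i-1},D_i=d).\]
The key estimate is $\Pr(\tilde K\ge d+k\mid m_1^{i-1},D_i=d)\le\psi_k(d)$. This is the same phenomenon that underlies observation~(ii): since $\tilde K$ is independent of all the other primitives ($S^\infty$, the agents' work and message randomizations, and the witness-selection randomizations) and signals are consumed in order, learning that \emph{exactly} $d$ signals have been discovered reveals $\tilde K\ge d$ and, if anything, tilts the posterior on $\tilde K$ \emph{downward} relative to conditioning on $\{\tilde K\ge d\}$ alone — because the outcome ``exactly $d$ discovered, with the supply not the binding constraint'' has the same probability for every $\tilde K>d$ and at least that probability for $\tilde K=d$, so the likelihood $\ell\mapsto\Pr(D_i=d\mid m_1^{i-1},\tilde K=\ell)$ is nonincreasing on $\{d,d+1,\dots\}$. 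Hence $\Pr(\tilde K\ge d+k\mid m_1^{i-1},D_i=d)\le\bar G(d+k)/\bar G(d)=\psi_k(d)$, and since $\psi_k$ is nonincreasing and $d\ge q_i$,
\[F^k_i\le\sum_{d\ge q_i}\Pr(D_i=d\mid m_1^{i-1})\,\psi_k(d)\le\psi_k(q_i)=\hat F^k_i.\]

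The step I expect to require the most care is this key estimate in part~(i): one must track the joint law of $(\tilde K,D_i,m_1^{i-1})$ carefully enough — using the independence of $\tilde K$ from every other primitive and the fact that discoveries and witness appearances occur in the fixed order of $S^\infty$ — to be certain that observing ``exactly $d$ signals discovered'' can only make the posterior on $\tilde K$ stochastically smaller than the posterior obtained from $\{\tilde K\ge d\}$ alone (and in particular to get the inequality, rather than equality, cleanly). By contrast, the identification $\hat F^k_i=\psi_k(q_i)$ via observation~(ii) and the equivalence between Assumption~\ref{A-IHR} and the monotonicity of $\psi_k$ are routine, and part~(ii) uses nothing beyond that monotonicity together with the already-noted monotonicity of $q_i$ along equilibrium paths.
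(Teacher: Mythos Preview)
Your proof is correct and follows essentially the same route as the paper: identify $\hat F^k_i=\psi_k(q_i)$, use the increasing hazard rate to get $\psi_k$ nonincreasing, deduce part~(ii) from $q_j\ge q_i$, and for part~(i) decompose $F^k_i$ over the true number $r_i$ (your $D_i$) of discoveries and bound each term by $\psi_k(q_i)$. The paper is in fact less careful than you at the step you flag --- it asserts $\Pr(\tilde K\ge r+k\mid m_1^{i-1},\,r_i=r)=\Pr(\tilde K\ge r+k\mid\tilde K\ge r)$ as an equality, whereas only ``$\le$'' holds in general and is all that is needed; one small cleanup: the object that is nonincreasing in $\ell$ is the \emph{joint} likelihood $\ell\mapsto\Pr(m_1^{i-1},\,D_i=d\mid\tilde K=\ell)$ rather than the conditional $\Pr(D_i=d\mid m_1^{i-1},\tilde K=\ell)$ you wrote, though your verbal argument (``same probability for every $\tilde K>d$, at least that probability for $\tilde K=d$'') describes exactly that joint likelihood and delivers the bound.
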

\begin{proof} Part~i) Let $r_i$ denote the number of signals discovered by round $i$. We have $r_i \geq q_i$ and
\begin{align*} F^k_i & = \sum_{r \in \{q_i,\ldots, i-1\}} Pr(r_i = r|m_1^{i-1}) Pr(\tilde K\geq r + k | \quad \tilde K\geq r)\\
& \leq \sum_{r \in \{q_i,\ldots, i-1\}} Pr(r_i = r|m_1^{i-1})  Pr(\tilde K\geq q_i+k | \quad \tilde K\geq q_i)\\
& = \sum_{r \in \{q_i,\ldots, i-1\}} Pr(r_i = r|m_1^{i-1}) \hat F^k_i \\
& = \hat F^k_i.
\end{align*}
The first equality comes from the independence of $\tilde K$ from $S^\infty$: $r_i$ is a sufficient statistic for $\tilde K$ given all the information produced before round $i$, and only to the extent that it reveals that $\tilde K\geq r_i$. The inequality comes from the increasing hazard rate condition, which implies that for any $k\geq 0$, $Pr(\tilde K\geq k+q \enskip | \enskip \tilde K\geq q)$ is non-increasing in $q$.\footnote{\label{ftn-IHR}See, e.g., Barlow  et al.  (1963, p. 379). In brief, a random variable $X$ with distribution $F$ has the increasing hazard rate property if and only if the survival distribution $\bar F = 1-F$ is log-concave. This property implies, as is easily checked, that for any $p,q\geq 0$, $Pr(X\geq p+q| X\geq p) = \bar F(p+q)/\bar F(p)$ is decreasing in $p$.} The second equality is due to the equality $\hat F^k_i = Pr(\tilde K\geq k+q_i \enskip | \enskip \tilde K\geq q_i)$, which again comes from the independence of $\tilde K$ and $S^\infty$: the only relevant information about $\tilde K$ conditional on $\mc Q_i$ is the number of signals $q_i$ discovered by round $i$.

Part ii) For any $j\geq i$, we have $q_j\geq q_i$. As explained in the proof of Part i), we have for all $k\geq 1$
\begin{align*} \hat F^k_j &= Pr(\tilde K\geq k+q_j| \tilde K \geq q_j)\\ &\leq Pr(\tilde K\geq k+q_i|\tilde K\geq q_i)\\ & = \hat F^k_i,\end{align*}
where the inequality comes from the increasing hazard rate property (see Footnote~\ref{ftn-IHR}).\hfill$\blacksquare$
\end{proof}

The existence of thresholds $\{\lbar F^k\}_{k\geq 1}$ in the result mentioned above is proved by induction on $k$. We start with the base case $k=1$ and then prove the induction step.

\subsection{Proof for $k=1$}

Suppose that $\hat F^1_i < c/2R$. Combining the two parts of Lemma~\ref{lem-IHR}, this implies that $F^1_j< c/2R$ for all $j\geq i$. Lemma~\ref{lem-start} still applies: no investigator $j\geq i$ works because the probability that he finds something is too small to justify the cost of effort, given the maximal reward $R$.

We now show that if $\hat F^1_i$ lies below another threshold, smaller than $c/2R$, witnesses provide no informative message, either.

If $i$ is a witness, we will use the following notation:
\bit
\vspace{-.2cm}
\item $\beta_i$: probability that $i$ produces an informative message given $m_1^{i-1}$;
\item $M_i^+$: set of messages $m_i$ which are followed by an informative continuation equilibrium (as in the case of investigators);
\item $Pr_i(M_i+)$: probability that $i$ produces a message in $M_i^+$ given $m_1^{i-1}$;
\item $\gamma_i(m_i)$: probability that $i$ sends $m_i$ given $m_1^{i-1}$;
\item $\gamma_i(m_i|s_i)$: probability that $i$ sends $m_i$ after observing $s_i$
\eit

\begin{lemma}\label{lem-order-stat} Consider $L\geq 2$ pairwise independent random variables $\{Y_\ell\}$ with non-atomic distributions over~$\mb R$ and densities $f_\ell$ that are bounded above by $\bar f$.  For any $\ve\geq 0$, let $E^L_\ve$ denote the event that $\exists \ell,\ell'\leq L$ such that $|Y_\ell-Y_{\ell'}|\leq \ve$. Then
\[Pr(E^L_\ve) \leq L(L-1) \bar f \ve.\]
\end{lemma}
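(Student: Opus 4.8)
The plan is a textbook union bound over pairs, combined with a one–dimensional estimate for each pair; only pairwise independence is used, which is exactly the hypothesis.

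\textbf{Step 1 (union bound).} First I would write the event as a union over the $\binom{L}{2}=L(L-1)/2$ unordered pairs:
\[E^L_\ve \;=\; \bigcup_{1\le \ell<\ell'\le L}\bigl\{|Y_\ell-Y_{\ell'}|\le \ve\bigr\},\]
so that by countable subadditivity of probability,
\[Pr(E^L_\ve)\;\le\;\sum_{1\le \ell<\ell'\le L} Pr\bigl(|Y_\ell-Y_{\ell'}|\le \ve\bigr).\]

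\textbf{Step 2 (per–pair estimate).} Fix $\ell\neq \ell'$. Since $Y_\ell$ and $Y_{\ell'}$ are independent, their joint law has density $f_\ell(s)f_{\ell'}(t)$, hence
\[Pr\bigl(|Y_\ell-Y_{\ell'}|\le \ve\bigr)
=\int_{\mb R}\int_{\mb R}\ind_{|s-t|\le \ve}\,f_\ell(s)f_{\ell'}(t)\,ds\,dt
=\int_{\mb R} f_{\ell'}(t)\left(\int_{t-\ve}^{t+\ve} f_\ell(s)\,ds\right)dt.\]
Because $f_\ell\le \bar f$ pointwise, the inner integral is at most $2\bar f\ve$; since $\int_{\mb R} f_{\ell'}=1$, we get $Pr(|Y_\ell-Y_{\ell'}|\le \ve)\le 2\bar f\ve$. (For $\ve=0$ the inner integral vanishes, consistent with the non-atomic hypothesis.)

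\textbf{Step 3 (combine).} Summing the bound of Step 2 over the $L(L-1)/2$ pairs in Step 1 gives $Pr(E^L_\ve)\le \tfrac{L(L-1)}{2}\cdot 2\bar f\ve = L(L-1)\bar f\ve$, which is the claim. There is no real obstacle here; the only point worth flagging is that every term in the union bound involves just two of the $Y_\ell$'s, so no joint statement about three or more of them is ever invoked, and pairwise independence suffices exactly as stated.
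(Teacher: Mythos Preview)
Your proof is correct and follows essentially the same approach as the paper: a union bound over pairs together with the per-pair estimate $Pr(|Y_\ell-Y_{\ell'}|\le\ve)\le 2\bar f\ve$, which the paper derives by the same integral computation. The only cosmetic difference is that the paper packages the union bound as an induction on $L$ (peeling off the $L$th variable and applying the bound $(L-1)(L-2)\bar f\ve$ to the first $L-1$ variables), whereas you sum directly over all $\binom{L}{2}$ pairs; the content is identical.
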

\begin{proof} The result is proved by induction on $L\geq 2$. For $L=2$, we have
\[Pr(|Y-Y'|\leq \ve) =\int_{\mb R} f_Y(x) F_{Y'}[x-\ve, x+\ve] dx \leq 2\ve \bar f \int_{\mb R}  f_Y(x) dx = 2\ve \bar f.\]
Now suppose that the claim holds for $L-1$. Notice that the event $E^L_\ve$ is the union of $L$ events: the event $E^{L-1}_\ve$ concerning the first $L-1$ random variables, and, for each $\ell\leq L-1$, the event $E^{\ell,L}$ that the $L^{th}$ random variable lies within $\ve$ of the $\ell^{th}$ random variable. Therefore,
\begin{align*} Pr(E^L_\ve) & \leq Pr(E^{L-1}_\ve) + \sum_{\ell\leq L-1} Pr(|Y_\ell - Y_L|\leq \ve) \\
& \leq (L-1)(L-2)\bar f \ve  + (L-1) \times  2\ve \bar f\\
& = L(L-1) \bar f \ve,\end{align*}
where the second inequality comes from the induction hypothesis and the fact that $Pr(|Y_\ell -Y_L|\leq \ve) \leq 2 \bar f \ve$, as shown in the first step of the induction for $L=2$.\hfill$\blacksquare$
\end{proof}

\begin{lemma}\label{lem-start-wit} There exists a threshold $\lbar F^1\in (0,  c/2R)$ such that $\beta_i>0$ in some witness round $i$ only if $\hat F^1_i\geq \lbar F^1$.\end{lemma}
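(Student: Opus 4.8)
The plan is to mimic, for witnesses, the structure of the proof of Theorem~\ref{the-main-result}: the $k=1$ statement for investigators is Lemma~\ref{lem-start}, and here one must additionally rule out that a witness reports informatively when $\hat F^1_i$ is tiny. I would take $\lbar F^1\le c/2R$ (its precise value pinned down at the end) and argue by contradiction: suppose some witness round $i$ has $\beta_i>0$ and $\hat F^1_i<\lbar F^1$. By Lemma~\ref{lem-IHR}(ii), $\hat F^1_j\le\hat F^1_i$ for all $j\ge i$, so by Lemma~\ref{lem-IHR}(i) also $F^1_j\le\hat F^1_j<\lbar F^1\le c/2R$, and hence Lemma~\ref{lem-start} gives $\gamma_j=0$ for every investigator $j\ge i$: after round $i$, every signal that is ever discovered is discovered by a witness. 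Moreover, Assumption~\ref{A-IHR} implies the geometric bound $\hat F^n_i\le(\hat F^1_i)^n$, so, conditionally on $\mc Q_i$ and $q_i$, the expected number of signals remaining at round $i$ is at most $\hat F^1_i/(1-\hat F^1_i)$, and therefore the expected number of witness rounds in $\{i,i+1,\dots\}$ is at most $1/(1-\hat F^1_i)$; by Lemma~\ref{lem-IHR}(i) these bounds carry over to the agents' actual beliefs conditional on $m_1^{i-1}$.

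Next I would prove the witness analogue of Proposition~\ref{pro-general-mixed}. Because no investigator works after round $i$, the only channel through which a later agent's report can depend on the signal $s_j$ observed by a witness $j\ge i$ is via some later witness whose report distribution itself depends on its own signal; let $\mc G_j$ denote that event. Since $V_j\in[0,R]$, the continuation value $E[V_j\mid m_1^{j-1},m_j=m,s_j]$ varies in $s_j$ by at most $R\cdot Pr_j(\mc G_j)$. As witness $j$ chooses $m_j$ to maximize $E[V_j\mid\cdot]+\epsilon_j(m_j)$ and the shock densities are bounded by $\bar f$, applying Lemma~\ref{lem-order-stat} to the $|M|$ variables $E[V_j\mid m_1^{j-1},m_j=m,s_j]+\epsilon_j(m)$ shows that the argmax can change between two signal realizations only on a shock set of measure $O(\bar f R\, Pr_j(\mc G_j))$; hence $\beta_j\le c_0\bar f R\cdot Pr_j(\mc G_j)$ for a constant $c_0$ depending only on $|M|$. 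In particular $\beta_j>0$ forces a later active witness with positive conditional probability, and each active witness consumes a distinct remaining signal.

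The step I expect to be the main obstacle is combining these two facts into a contradiction: the estimate only makes each witness's informativeness small (of order $\bar f R\,\hat F^1_i$), not zero, so one cannot conclude $\beta_i=0$ directly. The point is rather that an active witness requires, with positive probability, a later active witness whose $\hat F^1$ is no larger, and this chain, each link of which uses up one of the (in expectation) scarce remaining signals, cannot be sustained. I would make this precise with a telescoping/martingale argument in the style of the end of the proof of Theorem~\ref{the-main-result}: let $\lbar F^1$ be the infimum of $\hat F^1_i$ over all equilibria, histories and witness rounds with $\beta_i>0$; assuming $\lbar F^1=0$, pick an equilibrium with an active witness at a round $i$ with $\hat F^1_i$ arbitrarily small, apply Doob's inequality to the conditional survival probabilities (which, as in Lemma~\ref{lem-supermart}, are nonnegative supermartingales) to confine the continuation to histories where the signal supply is nearly exhausted, and then use $\beta_j\le c_0\bar f R\,Pr_j(\mc G_j)$ together with the geometric bound $\hat F^n_i\le(\hat F^1_i)^n$ to sum the ``active mass'' of all witnesses in $\{i,i+1,\dots\}$; this sum is on the one hand bounded below by a positive quantity determined by $\beta_i$ and $c_0\bar f R$ and propagated down the chain of active witnesses, and on the other hand bounded above by the expected number of remaining signals, which tends to $0$ — a contradiction. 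This yields $\lbar F^1>0$, and shrinking it if necessary below $c/2R$ gives the stated threshold.
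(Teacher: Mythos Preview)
Your setup matches the paper's: you correctly note (via Lemmas~\ref{lem-IHR} and~\ref{lem-start}) that no investigator $j\ge i$ works once $\hat F^1_i<c/2R$, that $\hat F^n_i\le(\hat F^1_i)^n$ under Assumption~\ref{A-IHR}, and you obtain the right key estimate via Lemma~\ref{lem-order-stat}: a witness $j$ is informative with probability at most a constant times $\bar f R$ times the probability that some later witness is informative.

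The gap is in your closing step. The proposed contradiction --- ``sum of active mass bounded below by a positive quantity determined by $\beta_i$, and bounded above by the expected number of remaining signals, which tends to $0$'' --- does not close. The only lower bound you have is $\beta_i$ itself, and $\beta_i$ is \emph{not} uniformly positive as $\hat F^1_i\to 0$; indeed your own estimate already gives $\beta_i\le c_0\bar f R\cdot \hat F^1_i/(1-\hat F^1_i)$, so both sides go to $0$ together and no contradiction results. Iterating the chain of active witnesses does not help either: the conditional probabilities along the chain can shrink geometrically, so the chain need not be infinite with positive probability. The heavier Doob/telescoping machinery from the proof of Theorem~\ref{the-main-result} is unnecessary here and does not by itself force $\beta_i=0$.

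The paper closes with a much simpler contraction argument. It takes the \emph{supremum} rather than the infimum: define $\nu(F)=\sup\{\nu_i:\hat F^1_i\le F\}$ over all equilibria, histories, and witness rounds, where $\nu_i$ is the probability that $i$'s shock renders his message $\epsilon_i$-informative. Your estimate, combined with the bound on the expected number of later witnesses, then yields
\[
\nu(F)\ \le\ \frac{2|M|^2\bar f R\,F}{(1-F)^2}\,\nu(F),
\]
which forces $\nu(F)=0$ once $F$ is small enough that the coefficient is below $1$. This self-referential inequality is precisely what your phrase ``propagated down the chain'' is gesturing at, but making it rigorous requires working with a supremum so that the bound on each later witness feeds back into the \emph{same} quantity.
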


\begin{proof} Recall that if $i$ is a witness, his message $m_i$ is informative (in equilibrium) if it is statistically dependent of $S$ conditional on $m_1^{i-1}$.

If $i$ is a witness and has preference $\epsilon$, his message is said to be $\epsilon$-informative if there exist two signals $s_i\neq s'_i$ such that the equilibrium distributions of $m_i$ conditional on $i$ getting signals $s_i$ and $s'_i$  given $i$'s preference $\epsilon$, are different across the two signals.

The following observation is straightforward to prove.
\begin{observation} $i$'s message is informative if and only if the set of preference shocks $\epsilon$ for which $i$'s message is $\epsilon$-informative has positive probability.\end{observation}

For any equilibrium and round $i$ in which $i$ is a witness, let $\nu_i$ denote the probability that $i$'s preference shock $\epsilon_i$ is such that $i$'s message is $\epsilon_i$-informative.

For any $F< c/2R$, let $\nu(F)$ denote the supremum of $\nu_i$ over all witness rounds $i$ of all equilibria such that $\hat F^1_i\leq F$. We will show that $\nu(F)= 0$ for all $F$ below some strictly positive threshold.

Consider such an equilibrium. For any witness round $i$ and message $m_i$, let $z(m_i)$ denote the probability that at least some $j>i$ produces an informative message following message $m_i$. %Note that $p(m_i)>0$ only if $m_i\in M_i^+$.

$i$'s expected utility if he receives signal $s_i$ and sends message $m_i$ is given by
\be\label{eq-Ui-wit}U_i(m_i; s_i) = z(m_i) \mb E_i[V_i(m)|s_i, m_1^i] + (1-z(m_i))\lbar V_i(m_i) + \epsilon_i(m_i)\ee
where
\[\lbar V_i(m_i) = \mb E_i[V_i(m)|m_1^i,\mbox{no $j>i$ produces an informative message}].\]
Notice that $\lbar V_i(m_i)$ does not depend on the signal $s_i$ since this signal is payoff irrelevant whenever no $j>i$ produces an informative message.

Given $\epsilon_i$, $i$ sends an informative signal only if there exist $m_i\neq m'_i$ and signals $s_i\neq s'_i$ such that
\[U_i(m_i;s_i)\geq U_i(m'_i; s_i)\]
and
\[U_i(m'_i;s'_i)\geq U_i(m_i; s'_i)\]
From~\eqref{eq-Ui-wit} and the fact that $V_i(m)\in [0,R]$, this is possible only if
\[|\epsilon_i(m_i)+\lbar V_i(m_i) - \epsilon_i(m'_i)+\lbar V_i(m'_i)|\leq R(z(m_i) + z(m'_i))\]
The random variables $Y_\ell = \epsilon_i(m_\ell)+\lbar V_i(m_\ell)$ satisfy the assumptions of Lemma~\ref{lem-order-stat}. Letting $|M|$ denote the cardinality of the message space $M$, we thus have
\be\label{eq-info-message}Pr(\mbox{$i$ sends an informative message}) \leq |M|^2 \bar f R (z(m_i) + z(m'_i))\ee

Since no investigator $j\geq i$ works when $\hat F^1_i<c/2R$, we have for any message $m_i$
\be\label{eq-z-wit} z(m_i)\leq \sum_{j\geq 1} Pr(\mbox{there are $j$ witnesses in the sequence after round $i$, given message $m_i$}) j \nu(F).\ee
Indeed, by definition of $\nu(F)$ and the fact that $\hat F^1_j\leq F$ for all $j\geq i$, a witness provides an informative signal with probability at most $\nu(F)$. Thus $j\nu(F)$ is an upper bound on the probability that at least one witness provides an informative signal given there are $j$ such witnesses.

The probability that at least $j$ witnesses come after round $i$ is bounded above by $F^j$, where $j$ is an exponent (not a superscript). To show this for $j=1$, notice that by Lemma~\ref{lem-IHR}, $\hat F^1_{i+1}\leq F$, and the probability that there is at least one witness after round $i$ is bounded above by the probability $F^1_{i+1}$ that there is at least one more signal, which is less than $\hat F^1_{i+1}$ again by Lemma~\ref{lem-IHR}. For $j=2$, note that conditional on the first witness arriving, Lemma~\ref{lem-IHR} implies that the probability that a second witness arrives is again by bounded by $F$ since the probability that there remains another signal is bounded by $F$, and a witness can arise only if such a signal exists. By induction, this shows that the probability of having at least $j$ witnesses and, hence, the probability of having exactly $j$ witnesses, are bounded above by $F^j$. Combining this with~\eqref{eq-z-wit} and using the standard formula $\sum_{j\geq 1} j x^j = x/(1-x)^2$ for all $x\in (0,1)$, we get
\[z(m_i)\leq \sum_{j\geq 1} F^j j \nu(F) =\frac{F \nu(F) }{(1-F)^2},\]

Combining this with~\eqref{eq-info-message}, we obtain
\be\label{eq-bound-info-message}Pr(\mbox{$i$ sends an informative message} \enskip| \enskip \hat F^1_i\leq F) \leq 2 |M|^2 \bar f R \nu(F) F /(1-F)^2\ee
Taking the supremum of the left-hand side over all witness rounds $i$ and equilibria such that $\hat F^1_i\leq F$, we obtain
\[\nu(F)\leq \frac{2 \nu(F)|M|^2 \bar f R F}{(1-F)^2}.\]
For $2 F/(1-F)^2\leq 1/|M|^2\bar f R$, this relation is possible only if $\nu(F) = 0$, because the function on the right-hand side is a contraction of $\nu(F)$. The function $F/(1-F)^2$ is increasing on $[0,1)$ and starts at zero. Therefore, we conclude that there exists a threshold $\lbar F^1 > 0$ such that $\nu(F)=0$ for all $F\leq \lbar F^1$.\hfill$\blacksquare$
\end{proof}

\subsection{Induction Step}

Suppose that there exist strictly positive thresholds $\{\lbar F^{k'}\}_{k'\in \{1,\ldots, k\}}$ such that a continuation equilibrium starting at round $i$ is informative only if $\hat F^{k'}_i \geq  \lbar F^{k'}$ for all $k'\leq k$. We will show that a similar condition holds for $k+1$. The proof works by contradiction: we will suppose that for all $\ve\in (0,1)$, there exists an informative continuation equilibrium such that $\hat F^{k+1}_i\leq \ve$ and obtain an impossibility for $\ve$ small enough.

Consider any $\ve < \lbar F^k \times \lbar F^1$ and any informative continuation equilibrium starting at round $i$ such that $\hat F^{k+1}_i\leq \ve$. Then, all continuation equilibria are uninformative as soon as the witness $j\geq i$ arrives. To see this, suppose that some witness arrives at round $j\geq i$. We have for any message $m_j$ sent by this witness
\begin{align*}  \hat F^k_{j+1}(m_j) & = Pr(\tilde K\geq k+q_j +1\enskip | \enskip \tilde K \geq q_j+1)\\
 & = \frac{Pr(\tilde K \geq k+ q_j+1)}{Pr(\tilde K \geq q_j+1)}\\
 & \leq \frac{Pr(\tilde K \geq k+ q_i+1)}{Pr(\tilde K \geq q_i+1)}\\
 & =  \frac{Pr(\tilde K \geq k+ q_i+1)}{Pr(\tilde K \geq q_i)} \enskip \frac{Pr(\tilde K \geq q_i)}{Pr(\tilde K \geq q_i+1)}\\
 & = \frac{\hat F^{k+1}_i}{\hat F^1_i},
\end{align*}
where the inequality comes for the monotone hazard rate assumption (see Footnote~\ref{ftn-IHR}) and the fact that $q_j\geq q_i$. Since the continuation equilibrium from round $i$ is informative, we must have $\hat F^1_i\geq \lbar F^1$. Therefore, $\hat F^{k+1}_i\leq \ve< \lbar F^k \lbar F^1$ implies that
\[\hat F^k_{j+1}< \lbar F^k\]
which shows, by the induction hypothesis, that all continuation equilibria are uninformative from round $j+1$ onwards.

Moreover, the first witness, $j$, knowing that continuation equilibria are uninformative regardless of his message, has no incentive to send an informative message.\footnote{Formally, the argument is similar to the proof of Lemma~\ref{lem-start-wit} except that here $z(m_i) = 0$ regardless of the message.}

Therefore, if $\hat F^{k+1}_i\leq \ve$, the only agents who may send informative messages are the investigators arriving between round $i$ and the arrival of the first witness. The situation is therefore almost identical to the setting of Theorem~\ref{the-main-result}, in the absence of witnesses, except that any sequential learning activity is interrupted at the apparition of the first witness. We know from Theorem~\ref{the-main-result} that such equilibria can be informative only if $F^{k+1}_i$ exceeds the $k+1$-threshold given by Theorem~\ref{the-main-result}, which we denote here $\tilde F^{k+1}$. Since $F^{k+1}_i\leq \hat F^{k+1}_i$ by Part i) of Lemma~\ref{lem-IHR}, letting $\lbar F^{k+1} =\min\{\tilde F^{k+1}_i,  \lbar F^k \lbar F^1\}$ we conclude that no informative continuation equilibrium exists in round $i$ if $\hat F^{k+1}_i\leq \lbar F^{k+1}$. This concludes the induction step.

\end{document}